\documentclass[aps,pra,10pt,twocolumn,superscriptaddress,showkeys,letterpaper]{revtex4-2}

\pdfoutput=1

\usepackage[T1]{fontenc}
\usepackage[utf8]{inputenc}
\usepackage{color}
\usepackage{babel}
\usepackage{mathtools}
\usepackage{amsmath}
\usepackage{amsthm}
\usepackage{amssymb}
\usepackage[pdfusetitle,
 bookmarks=true,bookmarksnumbered=false,bookmarksopen=false,
 breaklinks=false,pdfborder={0 0 0},pdfborderstyle={},backref=false,colorlinks=true]
 {hyperref}
\hypersetup{
 linkcolor=magenta, urlcolor=blue, citecolor=blue}

\makeatletter
\theoremstyle{plain}
\newtheorem{thm}{\protect\theoremname}
\theoremstyle{plain}
\newtheorem{cor}[thm]{\protect\corollaryname}
\theoremstyle{plain}
\newtheorem{lem}[thm]{\protect\lemmaname}
\theoremstyle{plain}
\newtheorem{prop}[thm]{\protect\propositionname}
\theoremstyle{remark}
\newtheorem{rem}[thm]{\protect\remarkname}

\DeclareMathOperator{\Tr}{Tr}

\DeclareMathOperator{\Herm}{Herm}
\DeclareMathOperator{\id}{id}

\DeclareMathOperator{\softmin}{softmin}
\DeclareMathOperator{\signum}{sgn}

\allowdisplaybreaks

\makeatother

\providecommand{\corollaryname}{Corollary}
\providecommand{\lemmaname}{Lemma}
\providecommand{\propositionname}{Proposition}
\providecommand{\remarkname}{Remark}
\providecommand{\theoremname}{Theorem}

\begin{document}
\title{Unifying quantum measurement constructions via\\a relative-entropy
minimum change principle}

\author{Nana Liu}
\affiliation{Institute of Natural Sciences, School of Mathematical Sciences, Ministry of Education Key Laboratory in Scientific and Engineering Computing, and Global College, Shanghai Jiao Tong University, Shanghai 200240, China}

\author{Mark M. Wilde}
\affiliation{School of Electrical and Computer Engineering,
Cornell University, Ithaca, New York 14850, USA}

\begin{abstract}
The minimum change principle provides an information-theoretic characterization
of the Bayes reversal channel in classical probability theory and
has recently been proposed as a framework for extending Bayes' rule
to quantum information theory. Using quantum relative entropy, we
investigate a minimum change principle for the setting of quantum statistical
inference. Specifically, we consider a forward process based on a
classical-to-quantum preparation channel and a reverse process based
on a quantum-to-classical measurement channel. We establish a closed-form
characterization of measurements that are optimal for this principle, and this optimal measurement can be found via a dual formulation involving a \textit{single} unconstrained Hermitian
variable. This perspective allows us to recover some notable measurements within the same framework, including
pretty good measurements and Fermi--Dirac thermal measurements, and we use it to discover a novel family that we call softmin thermal measurements. We further
show that softmin thermal measurements arise as optimal solutions
to entropy-regularized semidefinite optimization problems, demonstrating
that they play a role for measurements analogous to that of thermal
states in statistical mechanics. Finally, we prove an additivity property
for the relative-entropy minimum change principle and investigate
the performance of Fermi--Dirac thermal measurements for quantum
hypothesis testing.
\end{abstract}

\date{\today}
\maketitle

\tableofcontents{}

\section{Introduction}

\subsection{Background}

Bayes' rule is one of the foundational principles of probability theory
and has become indispensable throughout statistics, physics, and machine
learning (see, e.g.,~\cite{Jaynes2003}). It relates a forward probabilistic
model to its reverse by expressing the same joint probability distribution
in two equivalent ways: one in terms of a prior distribution and a
forward channel, and the other in terms of posterior probabilities
and a reverse channel. 

Extending Bayes' rule to quantum information theory has been a longstanding
goal~\cite{SchackBrunCaves2001,LeiferSpekkens2013,ParzygnatRusso2022,ParzygnatFullwood2023,CenxinEtAl2023},
and several quantum generalizations of it have been proposed. Among
the most influential are the pretty good measurement~\cite{Belavkin75,Belavkin75a,hughston1993complete,hausladen1994pretty}
and its generalization, the Petz recovery channel~\cite{Petz1986,Petz1988},
both of which admit compelling Bayesian interpretations and have found
numerous applications in quantum state discrimination and quantum
error correction~\cite{BarnumKnill2002,HaydenJozsaPetzWinter2004,JungeRennerSutterWildeWinter2018},
among others.

Quite recently, the authors of~\cite{Bai2025} put forward a framework
for extending Bayes' rule to quantum information theory based on a
quantum generalization of the minimum change principle~\cite{Aitchison1975,MayHarper1976,Williams1980,Zellner1988}.
Their approach considers arbitrary quantum channels and reverse channels,
and it identifies an optimal reverse process by minimizing a divergence
between bipartite states describing the forward and reverse dynamics.
In particular, see \cite[Eqs.~(1)--(5)]{Bai2025} for a concise description
of the approach in the classical case and \cite[Eqs.~(6)--(15)]{Bai2025}
for their quantum generalization. Using the quantum fidelity~\cite{Uhlmann1976}
as the divergence, they proved that the Petz recovery channel and
generalizations of it are optimal reverse processes while providing
a unified conceptual framework.

\subsection{Summary of contributions}

Motivated by the framework of~\cite{Bai2025}, we investigate a
minimum change principle in the particular setting of quantum statistical
inference~\cite{Helstrom1976,Holevo2011,Hayashi2005Asymptotic}. Specifically,
we consider forward processes described by classical-to-quantum preparation
channels and reverse processes described by quantum-to-classical measurement
channels. We further adopt quantum relative entropy~\cite{Umegaki1962},
a key information-theoretic distinguishability measure, as the measure
of change, rather than fidelity. These two specializations lead to
a convex optimization problem and yield a novel family of optimal
measurements. 

As one of our key findings (Theorem~\ref{thm:main}), we show that
the optimization admits a simple dual formulation involving only a
single Hermitian variable. The dual objective is strictly convex and
possesses a unique minimizer $A^{\star}$, which satisfies the following
nonlinear operator equation:
\begin{equation}
\sum_{x\in\mathcal{X}}e^{\ln\sigma_{x}-A^{\star}}=\tau,\label{eq:nonlinear-op-eq}
\end{equation}
where $\mathcal{X}$ is an alphabet, $\sigma_{x}=p(x)\rho_{x}$, the
probability distribution $p(x)$ represents the prior, $\rho_{x}$
is the state resulting from inputting the symbol $x$ to the forward
classical-to-quantum channel, and following~\cite{Bai2025}, we call
$\tau$ the reference state for the reverse process. From the solution
of~\eqref{eq:nonlinear-op-eq}, we obtain a closed-form expression
for the optimal measurement $\left(M_{x}^{\star}\right)_{x\in\mathcal{X}}$,
given by
\begin{equation}
M_{x}^{\star}=\tau^{-\frac{1}{2}}\left(e^{\ln\sigma_{x}-A^{\star}}\right)\tau^{-\frac{1}{2}}.\label{eq:optimal-meas}
\end{equation}
By applying~\eqref{eq:nonlinear-op-eq}, we see that the optimal measurement
satisfies the completeness relation $\sum_{x\in\mathcal{X}}M_{x}^{\star}=I$,
so that it is indeed a legitimate measurement. To the best of our
knowledge, this family of measurements has not previously appeared
in the quantum information literature.

An appealing consequence of the characterization in~\eqref{eq:optimal-meas}
is that the reference state $\tau$ parametrizes a continuous family
of optimal measurements. Different choices of $\tau$ recover the
pretty good measurement~\cite{Belavkin75,Belavkin75a,hughston1993complete,hausladen1994pretty}
and the Fermi--Dirac thermal measurement~\cite{liu2026}, the latter
introduced recently in semidefinite optimization and used in quantum
machine learning~\cite{he2026fermidiracmachines,he2026canonical}
(see also~\cite{Lindsey2023}). We also introduce a novel measurement
that we call the softmin thermal measurement, which generalizes the
classical softmin decision rule~\cite{Bridle1990,Bishop2006,Goodfellow2016}.
Thus, the minimum change principle developed here provides a common
variational principle underlying these seemingly distinct measurement
constructions.

Given the critical role of thermal states in physics, quantum optimization,
and quantum machine learning, we also consider the case when the forward
process consists of preparing thermal states and the reference state
$\tau$ is maximally mixed. Doing so clarifies how the softmin thermal
measurement represents a multiclass quantum generalization of the
classical softmin rule and how the Fermi--Dirac thermal measurement
is a quantum generalization of the sigmoid activation function.

The appearance of the softmin thermal measurement naturally raises
the question of whether it is specific to the minimum change principle
or whether it also arises independently from other optimization problems.
We answer this question affirmatively by proving that these measurements
are optimal solutions to a broad class of entropy-regularized semidefinite
optimization problems.

Next, we explore connections of our findings in quantum information
theory. To this end, we prove an additivity property for the minimum
change principle in Theorem~\ref{thm:main}, noting that it is conceptually
similar to the additivity of accessible information~\cite{Holevo1973StatisticalDecision}.
As an implication, for a forward process consisting of a product of
two processes and a reference state that is a product of two reference
states, the optimal reversal channel is a product of the reversal
channels that are optimal for each individual case.

Finally, we characterize the performance of Fermi--Dirac thermal
measurements in quantum hypothesis testing, both in the non-asymptotic
and asymptotic scenarios. In particular, we establish an upper bound
on the error probability of hypothesis testing when using Fermi--Dirac
thermal measurements, and we extend this result to prove a lower bound
on the error exponent of this task in the asymptotic scenario. We
also prove an upper bound on the error exponent, which indicates that
this measurement strategy cannot generally achieve optimal performance.
As a corollary, we conclude that Fermi--Dirac thermal measurements
cannot generally be ``pretty good,'' in the sense of~\cite{BarnumKnill2002};
that is, there cannot exist a universal multiplicative constant relating
the optimal hypothesis testing error probability to the error probability
when using Fermi--Dirac thermal measurements.

\subsection{Paper organization}

The rest of our paper is organized as follows. In Section~\ref{sec:Minimum-change-principles},
we develop relative-entropy based minimum change principles and prove
our main theorem for one of these principles (Theorem~\ref{thm:main}).
Therein, we also outline an algorithm for computing the optimal Hermitian
operator for the dual formulation of this minimum change principle
(Section~\ref{subsec:Computing-the-optimal-dual}). In Section~\ref{sec:special-cases},
we consider various special cases of Theorem~\ref{thm:main}, which
include the classical case, the pretty good measurement, the softmin
thermal measurement, and the Fermi--Dirac thermal measurement of
\cite{liu2026}. In Section~\ref{sec:Bayesian-minimal-change-thermal-states},
we specialize Theorem~\ref{thm:main} to thermal states and argue
therein especially how the softmin thermal measurement and the Fermi--Dirac
thermal measurement represent quantum generalizations of the softmin
and sigmoid functions, respectively, the latter having played prominent
roles in classical machine learning. In Section~\ref{sec:Softmin-thermal-measurements-SDPs},
we show how softmin thermal measurements arise as optimal solutions
to entropically-regularized semidefinite optimization problems, representing
a broad generalization of the results of~\cite{liu2026}. In Section
\ref{sec:Connections-to-QIT}, we connect our findings to quantum
information theory, proving an additivity result for the minimum change
principle in Theorem~\ref{thm:main} and discussing the performance
of Fermi--Dirac thermal measurements in non-asymptotic and asymptotic
quantum hypothesis testing. Finally, in Section~\ref{sec:Conclusion},
we conclude with a brief summary of our findings and suggestions for
future directions. 

\section{Minimum change principles for quantum measurements}

\label{sec:Minimum-change-principles}

We begin by developing a general formulation of relative-entropy minimum
change principles and follow in Section~\ref{sec:special-cases} by
specializing one of them to several cases of interest.

\subsection{Forward process}

Consider a forward process, which receives a classical input symbol
$x\in\mathcal{X}$ and outputs a $d$-dimensional quantum state $\rho_{x}$,
where $d\in\mathbb{N}$. Set
\begin{equation}
r\equiv|\mathcal{X}|.
\end{equation}
The forward process is known as a classical-to-quantum channel $x\to\rho_{x}$
that outputs the state $\rho_{x}$ upon receiving the classical input
$x$. Equivalently, this process is described by the following quantum
channel:
\begin{equation}
\mathcal{E}(\omega)\coloneqq\sum_{x\in\mathcal{X}}\langle x|\omega|x\rangle\rho_{x},\label{eq:forward-channel}
\end{equation}
where $\omega$ is an arbitrary input state and $\left\{ |x\rangle\right\} _{x\in\mathcal{X}}$
is an orthonormal basis corresponding to the alphabet $\mathcal{X}$.
The Choi operator of this channel is given by
\begin{align}
\Gamma^{\mathcal{E}} & \coloneqq(\mathcal{E}\otimes\id)(\Gamma^{(r)})\\
 & =\sum_{x\in\mathcal{X}}\rho_{x}\otimes|x\rangle\!\langle x|,
\end{align}
where the maximally entangled operator $\Gamma^{(r)}$ is defined
as
\begin{equation}
\Gamma^{(r)}\coloneqq\sum_{x,x'\in\mathcal{X}}|x\rangle\!\langle x'|\otimes|x\rangle\!\langle x'|.\label{eq:max-ent-vec}
\end{equation}
A prior probability distribution $\left(p(x)\right)_{x\in\mathcal{X}}$
over the input symbols in $\mathcal{X}$ is described by the following
state:
\begin{equation}
\pi_{r}\coloneqq\sum_{x\in\mathcal{X}}p(x)|x\rangle\!\langle x|.
\end{equation}

The bipartite state $Q_{\text{fwd}}$ defined in \cite[Eq.~(8)]{Bai2025},
which represents the forward process, is defined as
\begin{align}
Q_{\text{fwd}} & \coloneqq\left(I\otimes\pi^{\frac{1}{2}}\right)\Gamma^{\mathcal{E}}\left(I\otimes\pi^{\frac{1}{2}}\right)\\
 & =\sum_{x\in\mathcal{X}}\sigma_{x}\otimes|x\rangle\!\langle x|,\label{eq:q-fwd-gen-1}
\end{align}
where the substate $\sigma_{x}$ is given by 
\begin{equation}
\sigma_{x}\coloneqq p(x)\rho_{x}.\label{eq:sigma-sub-state-def}
\end{equation}
We also define the average output state as follows:
\begin{equation}
\sigma\coloneqq\sum_{x\in\mathcal{X}}\sigma_{x}=\sum_{x\in\mathcal{X}}p(x)\rho_{x}.\label{eq:average-sigma-state}
\end{equation}

\subsection{Reverse process}

A reverse process is represented by a quantum--to--classical channel
(i.e., measurement channel) of the following form:
\begin{equation}
\mathcal{M}(\gamma)\coloneqq\sum_{x\in\mathcal{X}}\Tr\!\left[M_{x}\gamma\right]|x\rangle\!\langle x|,\label{eq:meas-channel-gen-1}
\end{equation}
where $\gamma$ is an input state and $\left(M_{x}\right)_{x\in\mathcal{X}}$
is a positive operator-valued measure (POVM). That is, $\left(M_{x}\right)_{x\in\mathcal{X}}$
satisfies $M_{x}\geq0$ for all $x\in\mathcal{X}$ and $\sum_{x\in\mathcal{X}}M_{x}=I$.
Define the following shorthand for the POVM $\left(M_{x}\right)_{x\in\mathcal{X}}$:
\begin{equation}
M_{\mathcal{X}}\equiv\left(M_{x}\right)_{x\in\mathcal{X}}.\label{eq:meas-shorthand}
\end{equation}

Recall from \cite[Eq.~(C13)]{liu2026}, e.g., that the Choi operator
$\Gamma^{\mathcal{M}}$ corresponding to the measurement channel $\mathcal{M}$
is given by
\begin{equation}
\Gamma^{\mathcal{M}}=\sum_{x\in\mathcal{X}}M_{x}^{T}\otimes|x\rangle\!\langle x|.
\end{equation}
As discussed in \cite[Eq.~(12)]{Bai2025}, we can choose a reference
state $\tau$ and define the bipartite state $Q_{\mathrm{rev}}(M_{\mathcal{X}})$,
representing the reverse process, as follows:
\begin{align}
& Q_{\mathrm{rev}}(M_{\mathcal{X}}) \notag\\
& \coloneqq\left[\left(\left(\tau^{\frac{1}{2}}\right)^{T}\otimes I\right)\Gamma^{\mathcal{M}}\left(\left(\tau^{\frac{1}{2}}\right)^{T}\otimes I\right)\right]^{T}\\
 & =\left(\sum_{x\in\mathcal{X}}\left(\tau^{\frac{1}{2}}\right)^{T}M_{x}^{T}\left(\tau^{\frac{1}{2}}\right)^{T}\otimes|x\rangle\!\langle x|\right)^{T}\\
 & =\sum_{x\in\mathcal{X}}\tau^{\frac{1}{2}}M_{x}\tau^{\frac{1}{2}}\otimes|x\rangle\!\langle x|.\label{eq:q-rev-simple}
\end{align}

\subsection{Minimum change principles based on quantum relative entropy}

Following \cite[Eq.~(15)]{Bai2025}, we can formulate at least two
minimum change principles based on quantum relative entropy:
\begin{align}
\min_{M_{\mathcal{X}}}D(Q_{\mathrm{rev}}(M_{\mathcal{X}})\|Q_{\text{fwd}}),\label{eq:minimal-change-I}\\
\min_{M_{\mathcal{X}}}D(Q_{\text{fwd}}\|Q_{\mathrm{rev}}(M_{\mathcal{X}})),\label{eq:minimal-change-II}
\end{align}
where $Q_{\mathrm{rev}}(M_{\mathcal{X}})$ is given by~\eqref{eq:q-rev-simple}
and $Q_{\text{fwd}}$ by~\eqref{eq:q-fwd-gen-1}. The minimizations
above are over every POVM $M_{\mathcal{X}}$, and $D(Q_{\mathrm{rev}}(M_{\mathcal{X}})\|Q_{\text{fwd}})$
and $D(Q_{\text{fwd}}\|Q_{\mathrm{rev}}(M_{\mathcal{X}}))$ are based
on the standard (Umegaki) relative entropy~\cite{Umegaki1962}, defined
for states $\xi$ and $\omega$ as
\begin{equation}
D(\xi\|\omega)\coloneqq\Tr\!\left[\xi(\ln\xi-\ln\omega)\right].
\end{equation}

Among the two possible relative-entropy minimum change principles
in~\eqref{eq:minimal-change-I} and~\eqref{eq:minimal-change-II},
we focus nearly exclusively on the minimum change principle in~\eqref{eq:minimal-change-I}
throughout our paper. There is mathematical simplicity in doing so,
along with historical precedent, going back to the work of Jaynes
linking information theory and statistical mechanics~\cite{Jaynes1957,Jaynes1957a,Jaynes1962},
in information geometry~\cite{Csiszar1975IDivergence,Csiszar1984Sanov,CsiszarMatus2003InformationProjections},
where it is known as the information projection (or I-projection),
and more recently in quantum information science in several different
contexts~\cite{Nuradha2025MultivariateFidelities,girardi2025quantumumlautinformation,qiu2025quantumesschertransform,Lami2026Asymptotic}.
Here, we can think of the optimal $M_{\mathcal{X}}$ in~\eqref{eq:minimal-change-I}
as the information projection of the forward process onto the convex
set of reverse processes realizable by quantum measurements. A notable
consequence is that the optimizer for~\eqref{eq:minimal-change-I}
is unique and admits an explicit representation, leading to the simple
dual formulation developed in Theorem~\ref{thm:main} below. By contrast,
the opposite ordering $D(Q_{\text{fwd}}\|Q_{\mathrm{rev}}(M_{\mathcal{X}}))$
leads to a substantially different optimization problem and does not
appear to admit a comparable characterization. We explore the associated
difficulties briefly in Appendix~\ref{app:Alternative-minimum-change}.

Our main result is the following characterization of the minimum change
principle in~\eqref{eq:minimal-change-I}:
\begin{thm}
\label{thm:main}Let $\tau$ be a positive definite reference state,
and let $\sigma_{x}>0$ be defined as in~\eqref{eq:sigma-sub-state-def}
for all $x\in\mathcal{X}$. Then the following equalities hold:
\begin{align}
 & \min_{M_{\mathcal{X}}}D(Q_{\mathrm{rev}}(M_{\mathcal{X}})\|Q_{\mathrm{fwd}})\nonumber \\
 & =\min_{M_{\mathcal{X}}}\sum_{x\in\mathcal{X}}D(\tau^{\frac{1}{2}}M_{x}\tau^{\frac{1}{2}}\|\sigma_{x})\label{eq:rewrite-primal}\\
 & =1-\inf_{A\in\Herm}\left\{ \Tr\!\left[A\tau\right]+\sum_{x\in\mathcal{X}}\Tr\!\left[e^{\ln\sigma_{x}-A}\right]\right\} ,\label{eq:dual-bayesian-min-change}
\end{align}
where $M_{\mathcal{X}}$ is defined in~\eqref{eq:meas-shorthand},
$Q_{\mathrm{rev}}(M_{\mathcal{X}})$ is given by~\eqref{eq:q-rev-simple},
and $Q_{\mathrm{fwd}}$ is given by~\eqref{eq:q-fwd-gen-1}. Furthermore,
the dual optimization in~\eqref{eq:dual-bayesian-min-change} admits
a unique minimum $A^{\star}$ that satisfies
\begin{equation}
\sum_{x\in\mathcal{X}}e^{\ln\sigma_{x}-A^{\star}}=\tau,\label{eq:constraint-optimal-povm-1}
\end{equation}
and the unique optimal POVM consists of measurement operators labeled
by $M_{x}^{\star}$ and, for all $x\in\mathcal{X}$, given by
\begin{equation}
M_{x}^{\star}=\tau^{-\frac{1}{2}}\left(e^{\ln\sigma_{x}-A^{\star}}\right)\tau^{-\frac{1}{2}}.\label{eq:optimal-measurement}
\end{equation}
Finally, the function
\begin{equation}
\Herm\ni A\mapsto\Tr\!\left[A\tau\right]+\sum_{x\in\mathcal{X}}\Tr\!\left[e^{\ln\sigma_{x}-A}\right]\label{eq:dual-objective-strictly-convex}
\end{equation}
is strictly convex.
\end{thm}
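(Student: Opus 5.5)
The plan is to first reduce the bipartite objective to a sum, then pass to a Lagrangian dual in which the completeness constraint $\sum_x M_x = I$ becomes a single Hermitian multiplier, and finally identify the dual optimizer.

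\textbf{Step 1: the rewriting~\eqref{eq:rewrite-primal}.} Both $Q_{\mathrm{rev}}(M_{\mathcal{X}})$ and $Q_{\mathrm{fwd}}$ are block diagonal with respect to the classical basis $\{|x\rangle\}$. Hence the relative entropy splits as
$\sum_x D(\tau^{1/2}M_x\tau^{1/2}\|\sigma_x)$. Here $D$ is understood for positive semidefinite operators, with the convention $0\ln 0=0$.

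\textbf{Step 2: change of variables and the dual.} Substitute $X_x\coloneqq\tau^{1/2}M_x\tau^{1/2}\ge 0$. Since $\tau>0$, the POVM constraint becomes the linear constraint $\sum_x X_x=\tau$. Introduce a Hermitian multiplier $A$ and form
\begin{equation}
L=\sum_x \Tr[X_x(\ln X_x-\ln\sigma_x)]-\Tr\!\left[A\left(\sum_x X_x-\tau\right)\right].
\end{equation}
For fixed $A$, minimize over each $X_x\ge 0$ separately, using the standard variational fact
\begin{equation}
\inf_{X\ge0}\Tr[X(\ln X-H)]=-\Tr[e^{H-1}].
\end{equation}
This follows from Gibbs' variational principle or from nonnegativity of the relative entropy $D(X\|e^{H-1})$ for unnormalized operators.

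Applying it with $H=\ln\sigma_x+A$ gives the dual function
\begin{equation}
\Tr[A\tau]-\sum_x\Tr[e^{\ln\sigma_x+A-1}].
\end{equation}
Replace $A$ by $1-A$, using $\Tr[\tau]=1$. The dual becomes
\begin{equation}
1-\Tr[A\tau]-\sum_x\Tr[e^{\ln\sigma_x-A}],
\end{equation}
and the supremum over $A$ is exactly the right-hand side of~\eqref{eq:dual-bayesian-min-change}. Strong duality holds by Slater's condition: the primal is convex, and $X_x=\tau/r$ is strictly feasible. Alternatively, strong duality will follow directly once Step 4 exhibits a primal–dual pair with zero gap.

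\textbf{Step 3: strict convexity and existence of a minimizer.} Convexity of $A\mapsto\Tr[e^{\ln\sigma_x-A}]$ follows because $A\mapsto\Tr e^{B-A}$ is convex (Peierls/trace convexity of $\Tr f(\cdot)$ for convex $f$). For strict convexity, restrict to a line $A+tB$ with $B\ne 0$. The function $t\mapsto\Tr e^{H-tB}$ has second derivative given by a Duhamel/Kubo–Mori integral that is strictly positive whenever $B\ne0$. Equivalently, $\Tr e^{K}$ is strictly convex on Hermitian matrices, so the dual objective is a sum of strictly convex terms plus a linear term.

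For existence of the minimum I need coercivity, and I expect this to be the main obstacle. The argument I would try first uses the bound $\Tr e^{\ln\sigma_x-A}\ge\Tr e^{c-A}$ with $c=\ln\lambda_{\min}(\sigma_x)$, which holds because $\Tr\exp$ is monotone. With $\tau\ge\lambda_{\min}(\tau)I>0$, the objective is then bounded below by $\sum_i\big(\lambda_{\min}(\tau)a_i+r e^{c-a_i}\big)$ when $A$ has large positive eigenvalues, more precisely by a function of the eigenvalues $a_i$ of $A$. This lower bound tends to $+\infty$ as $\|A\|\to\infty$: large positive $a_i$ are penalized by the linear term, and large negative $a_i$ by the exponential term. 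Together with strict convexity and continuity, this yields a unique minimizer $A^\star$.

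\textbf{Step 4: optimality condition and the optimal POVM.} Setting the derivative to zero, and using $\frac{d}{dt}\Tr e^{K-tB}|_{0}=-\Tr[e^{K}B]$, gives
\begin{equation}
\Tr[\tau B]=\sum_x\Tr[e^{\ln\sigma_x-A^\star}B]
\end{equation}
for all Hermitian $B$, which is exactly~\eqref{eq:constraint-optimal-povm-1}.

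Now define $X_x^\star=e^{\ln\sigma_x-A^\star}$. These are positive and sum to $\tau$, so $M_x^\star=\tau^{-1/2}X_x^\star\tau^{-1/2}$ is a POVM. To close the duality gap, use $\ln X_x^\star=\ln\sigma_x-A^\star$:
\begin{equation}
\sum_x D(X_x^\star\|\sigma_x)=-\Tr[A^\star\tau].
\end{equation}
This equals $1-\Tr[A^\star\tau]-\sum_x\Tr X^\star_x$, since $\sum_x\Tr X_x^\star=\Tr\tau=1$. Hence the primal and dual values coincide.

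For uniqueness of the optimal POVM, the primal objective is strictly convex in $(X_x)_x$, because $X\mapsto\Tr[X\ln X]$ is strictly convex. The map $M_x\mapsto X_x$ is injective, so the optimal POVM is unique.
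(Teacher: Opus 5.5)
Your proposal is correct and follows essentially the same route as the paper. The shared steps are the block-diagonal splitting, the substitution $N_x=\tau^{1/2}M_x\tau^{1/2}$, and a single Hermitian multiplier for $\sum_x N_x=\tau$. The inner minimization uses the same Gibbs variational principle; the paper phrases it as faithfulness of the generalized relative entropy $\widetilde{D}$, which plays the role of your shift $A\mapsto I-A$. The paper then uses the same stationarity condition and the same Duhamel-integral second-derivative argument for strict convexity.

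Your explicit coercivity argument for existence of $A^\star$, and the direct zero-gap check, are useful additions that the paper leaves implicit. One fix is needed in the coercivity bound: $\lambda_{\min}(\tau)\Tr[A]$ is not a lower bound for $\Tr[A\tau]$ when $A$ has negative eigenvalues. Use instead $\Tr[A\tau]\geq\lambda_{\min}(\tau)\sum_{a_i>0}a_i+\sum_{a_i<0}a_i$, which holds because the diagonal entries of $\tau$ in the eigenbasis of $A$ lie in $[\lambda_{\min}(\tau),1]$. With this bound, each eigenvalue term still tends to $+\infty$ in both directions, so your conclusion stands.
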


\begin{proof}
See Appendix~\ref{app:Proof-of-main-theorem}.
\end{proof}
To the best of our knowledge, the class of measurements $\left(M_{x}^{\star}\right)_{x\in\mathcal{X}}$
in~\eqref{eq:optimal-measurement}, in its most general form, has
not previously appeared in the quantum information literature. 

The
class of optimal measurements in~\eqref{eq:optimal-measurement} in Theorem~\ref{thm:main} unifies several
special cases of interest that we discuss further in Section~\ref{sec:special-cases}. In the classical case, the optimal measurement is independent of the choice of $\tau$, thus giving rise to a unique optimal measurement (see Corollary~\ref{cor:classical-case}). In the quantum case, on the other hand, there is no unique optimal measurement, and one can recover different optimal measurements that arise from different choices of $\tau$. 

\subsubsection{Computing the optimal measurement in Theorem~\ref{thm:main}}

\label{subsec:Computing-the-optimal-dual}

Here we devise an algorithm for finding the optimal value of the dual
objective in~\eqref{eq:dual-bayesian-min-change}. The value in doing
so is that, once one has the optimal $A^{\star}$, the optimal measurement
in~\eqref{eq:optimal-measurement} can be constructed directly.

Given that the function in~\eqref{eq:dual-objective-strictly-convex}
is strictly convex in $A$, one can employ gradient descent in order
to search for the optimum solution. This algorithm is simple, having
the following form:
\begin{equation}
A_{k+1}\leftarrow A_{k}-\eta\left(\tau-\sum_{x\in\mathcal{X}}e^{\ln\sigma_{x}-A_{k}}\right),
\end{equation}
where $k\in\mathbb{N}$ is an index for the iteration, $\eta\in\left(0,2\right)$
is the step size, and the matrix gradient $\frac{\partial}{\partial A}$
of~\eqref{eq:dual-objective-strictly-convex} is given by $\tau-\sum_{x\in\mathcal{X}}e^{\ln\sigma_{x}-A_{k}}$,
as shown in Appendix~\ref{app:Proof-of-main-theorem}.

This method has guaranteed local convergence near the optimum $A^{\star}$,
as established in Appendix~\ref{app:Local-convergence}. In practice,
one can combine this iteration with backtracking line search or trust-region
strategies to obtain a globally convergent descent method \cite[Chapters~3 \& 4]{NocedalWright2006}.

\section{Special cases of Theorem~\ref{thm:main}}

\label{sec:special-cases}

In this section, we consider several special cases of Theorem~\ref{thm:main},
which include the classical case, the pretty good measurement, a novel
measurement that we call the softmin thermal measurement, and the
Fermi--Dirac thermal measurement of~\cite{liu2026}.

\subsection{Classical case}

Let us begin by considering the case when the tuple $\left(\sigma_{x}\right)_{x\in\mathcal{X}}$,
as defined in~\eqref{eq:sigma-sub-state-def}, consists of commuting
substates and each substate $\sigma_{x}$ also commutes with the reference
state $\tau$. We can then write
\begin{align}
\sigma_{x} & =\sum_{y\in\mathcal{Y}}p(x)p(y|x)|y\rangle\!\langle y|,\label{eq:classical-case-commuting-rep}\\
\tau & =\sum_{y\in\mathcal{Y}}t(y)|y\rangle\!\langle y|,
\end{align}
where $\mathcal{Y}$ is an alphabet, $p(y|x)$ is a conditional probability
distribution, $t(y)$ is a probability distribution, and $\left\{ |y\rangle\right\} _{y\in\mathcal{Y}}$
is an orthonormal basis. In this case, the following corollary of
Theorem~\ref{thm:main} holds:
\begin{cor}
\label{cor:classical-case}Let $\tau$ be a positive definite reference
state, and let $\sigma_{x}>0$ be defined as in~\eqref{eq:sigma-sub-state-def}
for all $x\in\mathcal{X}$. Suppose that $\left[\tau,\sigma_{x}\right]=\left[\sigma_{x},\sigma_{x'}\right]=0$
for all $x,x'\in\mathcal{X}$. Then the following equality holds:
\begin{equation}
\min_{M_{\mathcal{X}}}D(Q_{\mathrm{rev}}(M_{\mathcal{X}})\|Q_{\mathrm{fwd}})=D(\tau\|\sigma),\label{eq:dual-bayesian-min-change-classical}
\end{equation}
where $M_{\mathcal{X}}$ is defined in~\eqref{eq:meas-shorthand},
$Q_{\mathrm{rev}}(M_{\mathcal{X}})$ is given by~\eqref{eq:q-rev-simple},
$Q_{\mathrm{fwd}}$ is given by~\eqref{eq:q-fwd-gen-1}, and $\sigma$
is defined in~\eqref{eq:average-sigma-state}. Furthermore, the unique
optimal $M_{x}^{\star}$ for all $x\in\mathcal{X}$ is given by
\begin{equation}
M_{x}^{\star}=\sigma_{x}\sigma^{-1}=\sum_{y\in\mathcal{Y}}p(x|y)|y\rangle\!\langle y|,\label{eq:optimal-meas-classical}
\end{equation}
where
\begin{align}
p(x|y) & \coloneqq\frac{p(y|x)p(x)}{p(y)},\\
p(y) & \coloneqq\sum_{x\in\mathcal{X}}p(y|x)p(x),
\end{align}
and $p(y|x)$ and $\left\{ |y\rangle\right\} _{y\in\mathcal{Y}}$
are defined in~\eqref{eq:classical-case-commuting-rep}.
\end{cor}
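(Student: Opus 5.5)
The plan is to specialize Theorem~\ref{thm:main} by guessing the dual optimizer explicitly in the commuting case and checking that it solves~\eqref{eq:constraint-optimal-povm-1}. Since $[\tau,\sigma_x]=[\sigma_x,\sigma_{x'}]=0$, all operators are diagonal in the basis $\{|y\rangle\}_{y\in\mathcal{Y}}$. It is natural to look for a diagonal $A$. Take the candidate
\begin{equation}
A^{\star}=\ln\sigma-\ln\tau ,
\end{equation}
which commutes with every $\sigma_x$ and with $\tau$ (note $\sigma>0$ because each $\sigma_x>0$). Then $e^{\ln\sigma_x-A^{\star}}=\sigma_x\sigma^{-1}\tau$, and summing over $x$ gives $\sigma\sigma^{-1}\tau=\tau$. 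Thus $A^{\star}$ satisfies~\eqref{eq:constraint-optimal-povm-1}.

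By the uniqueness of the minimizer of the strictly convex dual objective~\eqref{eq:dual-objective-strictly-convex} stated in Theorem~\ref{thm:main}, this $A^{\star}$ is the unique minimizer. One small step needs care here: uniqueness is phrased for the minimizer, so I should confirm that any solution of~\eqref{eq:constraint-optimal-povm-1} is a minimizer. This holds because~\eqref{eq:constraint-optimal-povm-1} is exactly the vanishing of the gradient $\tau-\sum_x e^{\ln\sigma_x-A}$, and a stationary point of a convex differentiable function is a global minimizer. Hence $A^{\star}$ is the unique optimum.

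Next, substitute into~\eqref{eq:optimal-measurement}. Everything commutes, so
\begin{equation}
M_x^{\star}=\tau^{-\frac12}\sigma_x\sigma^{-1}\tau\,\tau^{-\frac12}=\sigma_x\sigma^{-1}.
\end{equation}
In the basis $\{|y\rangle\}$ its diagonal entries are $p(x)p(y|x)/p(y)=p(x|y)$, using $\sigma=\sum_y p(y)|y\rangle\!\langle y|$. Uniqueness of the optimal POVM is inherited from Theorem~\ref{thm:main}.

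For the optimal value, evaluate the dual expression in~\eqref{eq:dual-bayesian-min-change} at $A^{\star}$. We have $\sum_x\Tr[e^{\ln\sigma_x-A^{\star}}]=\Tr[\tau]=1$ and $\Tr[A^{\star}\tau]=\Tr[\tau\ln\sigma]-\Tr[\tau\ln\tau]$. The value is therefore
\begin{equation}
1-\bigl(\Tr[\tau\ln\sigma]-\Tr[\tau\ln\tau]+1\bigr)=\Tr[\tau(\ln\tau-\ln\sigma)]=D(\tau\|\sigma),
\end{equation}
which is~\eqref{eq:dual-bayesian-min-change-classical}.

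As a sanity check, one could also compute directly from~\eqref{eq:rewrite-primal}: $\sum_x D(\tau\sigma_x\sigma^{-1}\|\sigma_x)=\sum_x\Tr[\tau\sigma_x\sigma^{-1}(\ln\tau-\ln\sigma)]=D(\tau\|\sigma)$. There is no real obstacle here; the only point needing attention is the stationarity-implies-optimality justification above, and everything else is commuting functional calculus.
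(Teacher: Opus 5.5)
Your proof is correct, but it obtains the dual optimizer differently from the paper.

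\textbf{The paper's route.} The paper does not guess $A^{\star}$. It averages $A$ over the diagonal phase unitaries $U_y=\sum_{z}e^{2\pi iyz/d}|z\rangle\!\langle z|$, which fix $\tau$ and every $\ln\sigma_x$. It then uses convexity of $B\mapsto\Tr[e^{B}]$ to show that replacing $A$ by its diagonal part (in the common eigenbasis) never increases the dual objective; the term $\Tr[A\tau]$ only sees the diagonal. The dual then splits into $d$ scalar problems $\inf_{a_z\in\mathbb{R}}\{\tau_z a_z+\sigma_z e^{-a_z}\}$, with $\sigma_z=\sum_x\sigma_{x,z}$. These are solved by $a_z^{\star}=\ln(\sigma_z/\tau_z)$, which is exactly your $A^{\star}=\ln\sigma-\ln\tau$. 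From there, $M_x^{\star}$ and the value $D(\tau\|\sigma)$ follow as in your computation.

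\textbf{Comparison.} Your route skips the twirling step. You check the stationarity equation for a commuting candidate, then invoke first-order optimality for convex functions and the uniqueness in Theorem~\ref{thm:main}. This is shorter and fully rigorous given that theorem. The paper's route derives the optimizer rather than verifying a guess. It also shows directly, without strict convexity, that non-diagonal dual variables cannot help in the commuting case.

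\textbf{Your primal check.} It is worth keeping as more than a sanity check. The primal objective at $M^{\star}$ and the dual objective at $A^{\star}$ both equal $D(\tau\|\sigma)$. Weak duality alone therefore certifies that both are optimal, independently of the minimax step.
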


\begin{proof}
See Appendix~\ref{app:classical-case}.
\end{proof}
We note here that the finding in Corollary~\ref{cor:classical-case}
was already observed in~\cite{Bai2025}, where it was also noted that
the optimal reversal channel is independent of the reference state
$\tau$. Appendix~\ref{app:classical-case} provides an explicit proof
of these conclusions by starting from the dual formulation in~\eqref{eq:dual-bayesian-min-change}.

\subsubsection{Binary classical case}

\label{subsec:Binary-classical-case}

Here we briefly elaborate on the form of the Bayesian reversal channel
when the input alphabet $\mathcal{X}$ is binary. This further specializes
the classical case mentioned above and prepares for the developments
in Section~\ref{subsec:Sigmoid-neurons}.

Suppose that there is a binary random variable $X$ with a realization
$x\in\left\{ 0,1\right\} $ and probability $p\in\left(0,1\right)$
for $x=0$ and probability $1-p$ for $x=1$. Conditioned on the choice
of $x\in\left\{ 0,1\right\} $, a value $y\in\mathcal{Y}$ is chosen
according to the conditional probability
\begin{equation}
p(y|x)\equiv p_{Y|X}(y|x).
\end{equation}

The Bayes reversal channel $p_{X|Y}(0|y)$ in this scenario is given
by
\begin{align}
p_{X|Y}(0|y) & =\frac{p(y|0)p}{p(y|0)p+p(y|1)\left(1-p\right)}\\
 & =\frac{1}{1+\frac{p(y|1)\left(1-p\right)}{p(y|0)p}}\\
 & =\frac{1}{1+e^{\ln p(y|1)-\ln p(y|0)+\ln\left(\frac{1-p}{p}\right)}},\label{eq:Bayes-reversal-classical}
\end{align}
where we assumed that $p(y|0)>0$. Similarly, we find the following
for $p_{X|Y}(1|y)$:
\begin{align}
p_{X|Y}(1|y) & =\frac{1}{1+e^{-\left[\ln p(y|1)-\ln p(y|0)+\ln\left(\frac{1-p}{p}\right)\right]}},
\end{align}
and we observe that $p_{X|Y}(1|y)=1-p_{X|Y}(0|y)$, as expected.

\subsection{Pretty good measurement}

Here we consider the special case of Theorem~\ref{thm:main} when
the reference state $\tau=\sigma$, the latter defined in~\eqref{eq:average-sigma-state}.
In this case, we recover the pretty good measurement~\cite{Belavkin75,Belavkin75a,hughston1993complete,hausladen1994pretty}.
We note here that this finding was already observed in~\cite{Bai2025}.
\begin{cor}
\label{cor:pretty-good}Let $\tau$ be a positive definite reference
state such that $\tau=\sigma$, as defined in~\eqref{eq:average-sigma-state},
and let $\sigma_{x}>0$ be defined as in~\eqref{eq:sigma-sub-state-def}
for all $x\in\mathcal{X}$. Then the following equality holds:
\begin{equation}
\min_{M_{\mathcal{X}}}D(Q_{\mathrm{rev}}(M_{\mathcal{X}})\|Q_{\mathrm{fwd}})=0,\label{eq:dual-bayesian-min-change-classical-1}
\end{equation}
where $M_{\mathcal{X}}$ is defined in~\eqref{eq:meas-shorthand},
$Q_{\mathrm{rev}}(M_{\mathcal{X}})$ is given by~\eqref{eq:q-rev-simple},
and $Q_{\mathrm{fwd}}$ is given by~\eqref{eq:q-fwd-gen-1}. Furthermore,
the unique optimal $M_{x}^{\star}$ for all $x\in\mathcal{X}$ is
given by
\begin{equation}
M_{x}^{\star}=\sigma^{-\frac{1}{2}}\sigma_{x}\sigma^{-\frac{1}{2}}.\label{eq:PGM}
\end{equation}
\end{cor}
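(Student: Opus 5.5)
We plan to obtain Corollary~\ref{cor:pretty-good} directly from Theorem~\ref{thm:main} by guessing the optimal dual variable and verifying the optimality condition~\eqref{eq:constraint-optimal-povm-1}. The candidate is suggested by the structure of the nonlinear operator equation: we take $A^{\star}=0$. Substituting into the left-hand side of~\eqref{eq:constraint-optimal-povm-1}, and using $\sigma_{x}>0$ so that $e^{\ln\sigma_{x}}=\sigma_{x}$, we get
\begin{equation}
\sum_{x\in\mathcal{X}}e^{\ln\sigma_{x}-0}=\sum_{x\in\mathcal{X}}\sigma_{x}=\sigma=\tau .
\end{equation}
Hence $A^{\star}=0$ satisfies the stationarity condition.

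The next step is to argue that this stationary point is the unique minimizer of the dual problem. The function in~\eqref{eq:dual-objective-strictly-convex} is strictly convex by Theorem~\ref{thm:main}, and its gradient is $\tau-\sum_{x}e^{\ln\sigma_{x}-A}$ (as recorded in Section~\ref{subsec:Computing-the-optimal-dual}). A point where the gradient vanishes is therefore the unique global minimizer, so $A^{\star}=0$. Equivalently, Theorem~\ref{thm:main} already asserts that the unique minimizer is characterized by~\eqref{eq:constraint-optimal-povm-1}. Plugging $A^{\star}=0$ into~\eqref{eq:optimal-measurement} gives
\[
M_{x}^{\star}=\sigma^{-\frac{1}{2}}\sigma_{x}\sigma^{-\frac{1}{2}},
\]
which is~\eqref{eq:PGM}. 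Uniqueness of this POVM is inherited from the uniqueness statement in Theorem~\ref{thm:main}.

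For the optimal value, we evaluate the dual objective at $A=0$:
\begin{equation}
1-\left(\Tr[0\cdot\tau]+\sum_{x}\Tr[\sigma_{x}]\right)=1-\Tr[\sigma]=0 ,
\end{equation}
since $\sigma$ is a state. As a sanity check independent of duality, we can use the primal form~\eqref{eq:rewrite-primal}. For this choice $\tau^{\frac{1}{2}}M_{x}^{\star}\tau^{\frac{1}{2}}=\sigma_{x}$, so each term $D(\sigma_{x}\|\sigma_{x})$ vanishes. Moreover, $Q_{\mathrm{rev}}$ and $Q_{\mathrm{fwd}}$ are both normalized states, because $\Tr[\tau]=1$ and $\sum_{x}M_{x}=I$. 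By Klein's inequality the objective is therefore nonnegative, and the value $0$ is attained.

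There is essentially no obstacle here. The only point requiring care is to justify that a zero-gradient point of the strictly convex dual is the minimizer, rather than relying on existence alone. This follows from the first-order condition for differentiable convex functions together with strict convexity.
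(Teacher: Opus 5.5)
Your proposal is correct and follows the paper's own argument, which likewise verifies the claim in two ways: by setting $A^{\star}=0$ in the dual of Theorem~\ref{thm:main} (giving value $1-\sum_x\Tr[\sigma_x]=0$), and by plugging the pretty good measurement into the primal form to get $\sum_x D(\sigma_x\|\sigma_x)=0$. Your extra checks make explicit what the paper leaves implicit: that $A^{\star}=0$ satisfies the stationarity condition, and that nonnegativity of relative entropy between the normalized states $Q_{\mathrm{rev}}$ and $Q_{\mathrm{fwd}}$ makes $0$ the minimum.
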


\begin{proof}
This is a direct consequence of Theorem~\ref{thm:main}, which we
can see in two different ways. First, by plugging the choice in~\eqref{eq:PGM}
into~\eqref{eq:rewrite-primal} and applying the assumption that $\tau=\sigma$,
we find that
\begin{align}
\sum_{x\in\mathcal{X}}D(\tau^{\frac{1}{2}}M_{x}^{\star}\tau^{\frac{1}{2}}\|\sigma_{x}) & =\sum_{x\in\mathcal{X}}D(\sigma^{\frac{1}{2}}\sigma^{-\frac{1}{2}}\sigma_{x}\sigma^{-\frac{1}{2}}\sigma^{\frac{1}{2}}\|\sigma_{x})\\
 & =\sum_{x\in\mathcal{X}}D(\sigma_{x}\|\sigma_{x})\\
 & =0,
\end{align}
thus achieving the minimum value. Alternatively, we can choose $A^{\star}=0$
for the dual problem in~\eqref{eq:dual-bayesian-min-change}, and
find that the dual objective reduces to
\begin{align}
 & 1-\left(\Tr\!\left[A^{\star}\tau\right]+\sum_{x\in\mathcal{X}}\Tr\!\left[e^{\ln\sigma_{x}-A^{\star}}\right]\right)\nonumber \\
 & =1-\sum_{x\in\mathcal{X}}\Tr\!\left[e^{\ln\sigma_{x}}\right]\\
 & =1-\sum_{x\in\mathcal{X}}\Tr\!\left[\sigma_{x}\right]\\
 & =0,
\end{align}
thus concluding the proof.
\end{proof}

\subsection{Softmin thermal measurement}

\label{subsec:Softmin-thermal-measurement}We consider a special case
of Theorem~\ref{thm:main} when the reference state $\tau=\frac{I}{d}$,
i.e., the maximally mixed state. In this case, the optimal measurement
is what we refer to as the softmin thermal measurement, as a generalization
of the softmin decision rule~\cite{Bridle1990,Bishop2006,Goodfellow2016}.
This nomenclature choice becomes more pertinent in Section~\ref{sec:Bayesian-minimal-change-thermal-states},
where we consider the form of this measurement for an ensemble of
thermal states.
\begin{cor}
\label{cor:softmax}Let $\tau=\frac{I}{d}$ be a reference state,
and let $\sigma_{x}>0$ be defined as in~\eqref{eq:sigma-sub-state-def}
for all $x\in\mathcal{X}$. Then the following equality holds:
\begin{multline}
\min_{M_{\mathcal{X}}}D(Q_{\mathrm{rev}}(M_{\mathcal{X}})\|Q_{\mathrm{fwd}})=\\
1-\inf_{A\in\Herm}\left\{ \frac{1}{d}\Tr\!\left[A\right]+\sum_{x\in\mathcal{X}}\Tr\!\left[e^{\ln\sigma_{x}-A}\right]\right\} ,\label{eq:dual-bayesian-min-change-1}
\end{multline}
where $M_{\mathcal{X}}$ is defined in~\eqref{eq:meas-shorthand},
$Q_{\mathrm{rev}}(M_{\mathcal{X}})$ is given by~\eqref{eq:q-rev-simple},
and $Q_{\mathrm{fwd}}$ is given by~\eqref{eq:q-fwd-gen-1}. Furthermore,
the dual optimization in~\eqref{eq:dual-bayesian-min-change-1} admits
a unique minimum $A^{\star}$ that satisfies
\begin{equation}
\sum_{x\in\mathcal{X}}e^{\ln\sigma_{x}-A^{\star}+\ln d}=I,\label{eq:constraint-optimal-povm-1-1}
\end{equation}
and the unique optimal $M_{x}^{\star}$ for all $x\in\mathcal{X}$
is given by
\begin{equation}
M_{x}^{\star}=e^{\ln\sigma_{x}-A^{\star}+\ln d}.\label{eq:q-softmax-meas}
\end{equation}
\end{cor}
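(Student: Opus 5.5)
This is a direct specialization of Theorem~\ref{thm:main} with $\tau=\frac{I}{d}$. The plan is to substitute and simplify; no new analysis is required.

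\textbf{Step 1: hypotheses.} The state $\tau=\frac{I}{d}$ is positive definite, and $\sigma_x>0$ is assumed for all $x\in\mathcal{X}$. Hence every conclusion of Theorem~\ref{thm:main} is available.

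\textbf{Step 2: the dual formula.} Substituting $\tau=\frac{I}{d}$ into~\eqref{eq:dual-bayesian-min-change} gives $\Tr[A\tau]=\frac{1}{d}\Tr[A]$. This turns~\eqref{eq:dual-bayesian-min-change} into~\eqref{eq:dual-bayesian-min-change-1}. Existence and uniqueness of the minimizer $A^{\star}$ are inherited directly from the theorem.

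\textbf{Step 3: the optimality condition.} Theorem~\ref{thm:main} gives~\eqref{eq:constraint-optimal-povm-1}, namely $\sum_x e^{\ln\sigma_x-A^{\star}}=\frac{I}{d}$. Multiply both sides by $d$ and write $d=e^{\ln d}$. The operator $(\ln d)I$ is a scalar multiple of the identity, so it commutes with every Hermitian operator, and therefore $e^{B}e^{(\ln d)I}=e^{B+(\ln d) I}$ for any Hermitian $B$. Applying this with $B=\ln\sigma_x-A^{\star}$ yields~\eqref{eq:constraint-optimal-povm-1-1}. Throughout, $\ln d$ is understood as $(\ln d)I$.

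\textbf{Step 4: the optimal POVM.} Formula~\eqref{eq:optimal-measurement} has $\tau^{-\frac12}=\sqrt{d}\,I$. It therefore becomes $M_x^{\star}=d\,e^{\ln\sigma_x-A^{\star}}$. By the same commuting-exponential identity as in Step~3, this equals $e^{\ln\sigma_x-A^{\star}+\ln d}$, which is~\eqref{eq:q-softmax-meas}. Uniqueness of the optimal POVM again comes from Theorem~\ref{thm:main}.

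Every step is a substitution. The only point needing a remark is the scalar-shift identity for the matrix exponential used in Steps~3 and~4, which holds because a multiple of the identity commutes with everything.
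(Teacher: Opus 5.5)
Your proposal is correct and follows the paper's proof, which simply states that the corollary is an immediate consequence of Theorem~\ref{thm:main} with $\tau=\frac{I}{d}$. Your Steps 2--4 just write out the substitutions the paper leaves implicit, namely $\Tr[A\tau]=\frac{1}{d}\Tr[A]$, $\tau^{-\frac{1}{2}}=\sqrt{d}\,I$, and $d\,e^{B}=e^{B+(\ln d)I}$.
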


\begin{proof}
This is an immediate corollary of Theorem~\ref{thm:main} that results
from setting $\tau=\frac{I}{d}$.
\end{proof}

\subsection{Fermi--Dirac thermal measurement}

Here we consider a special case of Corollary~\ref{cor:softmax} in
which $\left|\mathcal{X}\right|=2$; i.e., the alphabet is binary.
When doing so, we recover the Fermi--Dirac thermal measurement put
forward in~\cite{liu2026} and explored further in~\cite{he2026fermidiracmachines,he2026canonical}.
\begin{cor}
\label{cor:fermi-dirac}Let $\tau=\frac{I}{d}$ be a reference state,
and let $\sigma_{x}>0$ be defined as in~\eqref{eq:sigma-sub-state-def}
for all $x\in\mathcal{X}=\left\{ 0,1\right\} $. Then the following
equality holds:
\begin{multline}
\min_{M_{\mathcal{X}}}D(Q_{\mathrm{rev}}(M_{\mathcal{X}})\|Q_{\mathrm{fwd}})=\label{eq:fermi-dirac-obj}\\
D\!\left(\frac{M^{\star}}{d}\middle\|\sigma_{0}\right)+D\!\left(\frac{I-M^{\star}}{d}\middle\|\sigma_{1}\right),
\end{multline}
where $M_{\mathcal{X}}$ is defined in~\eqref{eq:meas-shorthand},
$Q_{\mathrm{rev}}(M_{\mathcal{X}})$ is given by~\eqref{eq:q-rev-simple},
and $Q_{\mathrm{fwd}}$ is given by~\eqref{eq:q-fwd-gen-1}. Furthermore,
the unique optimal $M^{\star}$ is given by
\begin{equation}
M^{\star}=\left(e^{\ln\sigma_{1}-\ln\sigma_{0}}+I\right)^{-1},\label{eq:FD-thermal-meas}
\end{equation}
so that
\begin{equation}
I-M^{\star}=\left(e^{-\left(\ln\sigma_{1}-\ln\sigma_{0}\right)}+I\right)^{-1}.
\end{equation}
\end{cor}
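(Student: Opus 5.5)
The plan is to specialize Corollary~\ref{cor:softmax} to $\mathcal{X}=\{0,1\}$ and show that the optimality condition~\eqref{eq:constraint-optimal-povm-1-1} can be solved explicitly in this binary case. Write $L_x\coloneqq\ln\sigma_x$ for $x\in\{0,1\}$, which is well defined because $\sigma_x>0$. Set $B\coloneqq A^{\star}-\ln d\, I$. By Corollary~\ref{cor:softmax}, the unique optimal POVM is $M_0^{\star}=e^{L_0-B}$, $M_1^{\star}=e^{L_1-B}$, and these satisfy $M_0^{\star}+M_1^{\star}=I$. Both operators are exponentials of Hermitian operators, so they are positive definite. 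Hence $M^{\star}\coloneqq M_0^{\star}$ satisfies $0<M^{\star}<I$, and $M_1^{\star}=I-M^{\star}$.

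The key observation is that, although $L_0$ and $L_1$ need not commute, $M^{\star}$ and $I-M^{\star}$ do commute. Taking logarithms gives $L_0-B=\ln M^{\star}$ and $L_1-B=\ln(I-M^{\star})$. Subtracting eliminates $B$:
\begin{equation}
L_1-L_0=\ln(I-M^{\star})-\ln M^{\star}=\ln\!\left((I-M^{\star})(M^{\star})^{-1}\right),
\end{equation}
where the last step uses commutativity of $M^{\star}$ and $I-M^{\star}$. Exponentiating yields $e^{L_1-L_0}=(M^{\star})^{-1}-I$, that is, $M^{\star}=(e^{\ln\sigma_1-\ln\sigma_0}+I)^{-1}$, which is~\eqref{eq:FD-thermal-meas}. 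The same computation with the roles of $0$ and $1$ swapped, or simply $I-M^{\star}$ computed in the eigenbasis of $e^{L_1-L_0}$, gives the stated formula $I-M^{\star}=(e^{-(L_1-L_0)}+I)^{-1}$.

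As a consistency check that does not lean on the direction of the argument, I would also verify the converse. Define $M$ by~\eqref{eq:FD-thermal-meas}; it is positive definite and strictly below $I$. Then set $B\coloneqq L_0-\ln M$ and $A\coloneqq B+\ln d\, I$. We have $e^{L_0-B}=M$, and
\begin{equation}
L_1-B=(L_1-L_0)+\ln M=\ln(M^{-1}-I)+\ln M=\ln(I-M),
\end{equation}
again using that $M^{-1}-I$ commutes with $M$. So this $A$ satisfies the stationarity condition~\eqref{eq:constraint-optimal-povm-1-1}. This condition is exactly the vanishing of the gradient of the objective~\eqref{eq:dual-objective-strictly-convex}, which Theorem~\ref{thm:main} shows is strictly convex. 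Therefore $A=A^{\star}$, and uniqueness of the optimal POVM follows from Theorem~\ref{thm:main}.

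Finally, the optimal value~\eqref{eq:fermi-dirac-obj} follows by substituting $M_0^{\star}=M^{\star}$, $M_1^{\star}=I-M^{\star}$, and $\tau=I/d$ into~\eqref{eq:rewrite-primal}, since $\tau^{\frac12}M\tau^{\frac12}=M/d$. The only delicate step is the logarithm manipulation in the presence of noncommuting $\sigma_0,\sigma_1$. It is resolved by noticing that all the logarithm and inverse identities are applied only to functions of the single operator $M^{\star}$, never to $L_0$ and $L_1$ jointly.
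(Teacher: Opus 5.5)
Your proof is correct, but it takes a different route from the paper's. The paper (Appendix~\ref{app:fermi-dirac}) does not invoke Theorem~\ref{thm:main} or Corollary~\ref{cor:softmax}. It rewrites the problem as $\min_{0\le X\le I/d}\{D(X\|\sigma_0)+D(I/d-X\|\sigma_1)\}$, argues strict convexity from the $\Tr[X\ln X]$ terms, and sets the matrix gradient $\ln X-\ln(I/d-X)+\ln\sigma_1-\ln\sigma_0$ to zero. It then solves that equation using the same commutativity of $X$ and $I/d-X$ that you exploit.

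You instead start from the dual characterization and eliminate the unknown multiplier by subtracting $\ln M_0^\star=L_0-B$ from $\ln M_1^\star=L_1-B$. Each route buys something different:
\begin{itemize}
\item The paper's route is more elementary and does not depend on the minimax/Slater step behind Theorem~\ref{thm:main}.
\item Your route gets $0<M^\star<I$ for free from the exponential form. The paper's primal stationarity argument tacitly assumes an interior minimizer, which is true because of the logarithmic barrier but is not stated.
\item Your converse check constructs the dual optimizer explicitly, $A^\star=\ln\sigma_0-\ln M^\star+(\ln d)I$. So in the binary case you verify existence of $A^\star$ directly rather than inheriting the attainment asserted in \eqref{eq:dual-bayesian-min-change}.
\item It also makes transparent that the Fermi--Dirac thermal measurement is literally the two-outcome softmin thermal measurement.
\end{itemize}
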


\begin{proof}
We provide a direct proof in Appendix~\ref{app:fermi-dirac}.
\end{proof}

\section{Relative-entropy minimum change principle for thermal states}

\label{sec:Bayesian-minimal-change-thermal-states}

In this section, we specialize Theorem~\ref{thm:main} and some of
the special cases in Section~\ref{sec:special-cases} to the case
when each state $\rho_{x}$ is a thermal state. One reason for doing
so is that thermal states play a key role in physics~\cite{Alhambra2023},
quantum optimization~\cite{Brandao2017,Brandao2019,Apeldoorn2019,vanApeldoorn2020quantumsdpsolvers,liu2025},
and quantum machine learning~\cite{Amin2018,Kieferova2017,Benedetti2017}.
When doing so, we recover softmin thermal and Fermi--Dirac thermal
measurements. The maximally mixed reference state yields the softmin
thermal measurement from Section~\ref{subsec:Softmin-thermal-measurement},
providing a multiclass quantum generalization of the classical softmin
rule. In the binary case, this further reduces to the Fermi--Dirac
thermal measurement from~\cite{liu2026}, a quantum generalization
of the sigmoid activation function explored further in~\cite{he2026fermidiracmachines,he2026canonical}
in the context of quantum machine learning.

In more detail, let $H_{x}$ be a Hamiltonian for all $x\in\mathcal{X}$.
Then the thermal state $\theta_{x}$ of this Hamiltonian is as follows:
\begin{align}
\theta_{x} & \coloneqq\frac{e^{-H_{x}}}{Z_{x}},\\
Z_{x} & \coloneqq\Tr\!\left[e^{-H_{x}}\right],
\end{align}
where we have absorbed the temperature into $H_{x}$ for simplicity
and brevity. Let $p(x)$ be a prior probability for the state $\theta_{x}$
and now define
\begin{equation}
\sigma_{x}\coloneqq p(x)\theta_{x}.\label{eq:sigma-substate-thermal}
\end{equation}
Consider that
\begin{align}
\ln\sigma_{x} & =-H_{x}',\\
\text{where }H_{x}' & \coloneqq H_{x}+\ln\!\left(\frac{Z_{x}}{p(x)}\right).
\end{align}

To begin with, observe that, for this case, the optimal measurement
from Theorem~\ref{thm:main} has the following measurement operators:
\begin{equation}
\tau^{-\frac{1}{2}}e^{-\left(H_{x}'+A^{\star}\right)}\tau^{-\frac{1}{2}},\label{eq:opt-meas-thermal}
\end{equation}
where $\tau>0$ is the reference state, $\sigma_{x}$ is chosen as
in~\eqref{eq:sigma-substate-thermal}, and each measurement operator
satisfies~\eqref{eq:constraint-optimal-povm-1}, so that the following
completeness relation holds:
\begin{equation}
\sum_{x\in\mathcal{X}}\tau^{-\frac{1}{2}}e^{-\left(H'_{x}+A^{\star}\right)}\tau^{-\frac{1}{2}}=I.
\end{equation}

\subsection{Softmin thermal measurements and Fermi--Dirac thermal measurements}

In the special case when $\tau=I/d$, i.e., the maximally mixed state,
Eq.~\eqref{eq:opt-meas-thermal} reduces to the softmin thermal measurement
of Corollary~\ref{cor:softmax}, with the following measurement operators:
\begin{equation}
e^{-\left(H_{x}''+A^{\star}\right)},\label{eq:softmax-thermal-meas}
\end{equation}
where $H''_{x}\coloneqq H'_{x}-\ln d$. Furthermore, the unique optimal
$A^{\star}$ satisfies
\begin{equation}
\sum_{x\in\mathcal{X}}e^{-\left(H_{x}''+A^{\star}\right)}=I,
\end{equation}
so that the measurement operators in~\eqref{eq:softmax-thermal-meas}
indeed form a legitimate POVM. The fact that each measurement operator
in~\eqref{eq:softmax-thermal-meas} consists of the exponential function
applied to $-\left(H_{x}''+A^{\star}\right)$ is the reason that we
call the measurement $\left(e^{-\left(H_{x}''+A^{\star}\right)}\right)_{x\in\mathcal{X}}$
a softmin thermal measurement. We elaborate further on this point
in Section~\ref{subsec:Exponential-families} below.

For the case of just two states, so that $\mathcal{X}=\left\{ 0,1\right\} $,
the optimal measurement from Theorem~\ref{thm:main} reduces to the
Fermi--Dirac thermal measurement, with the following two measurement
operators:
\begin{align}
M_{0} & =\left(e^{\Delta H}+I\right)^{-1},\label{eq:FD-thermal-M0}\\
M_{1} & =\left(e^{-\Delta H}+I\right)^{-1},\label{eq:FD-thermal-M1}
\end{align}
where
\begin{align}
\Delta H & \coloneqq H_{0}-H_{1}+\ln\!\left(\frac{Z_{0}\left(1-p\right)}{Z_{1}p}\right),
\end{align}
and we have set $p\equiv p(0)$, so that $1-p=p(1)$. This form of
the measurement is then in precise correspondence with the recent
proposal of~\cite{he2026fermidiracmachines,he2026canonical}, for
canonical quantization of neurons based on the sigmoid activation
function. We elaborate more on this point in Section~\ref{subsec:Sigmoid-neurons}.

\subsection{Classical case}

In this section, we consider the fully classical case, with our goal
being to justify how the softmin thermal measurement and Fermi--Dirac
thermal measurement represent quantum generalizations of Bayesian
reversal channels resulting from thermal probability distributions
(also called ``exponential family'').

\subsubsection{Exponential families}

\label{subsec:Exponential-families}Suppose that there is a prior
probability distribution $p_{X}(x)$ for the forward process, where
$x\in\mathcal{X}$. Suppose now that the forward channel is in the
exponential family:
\begin{align}
p_{Y|X}(y|x) & =\frac{e^{-H_{x}(y)}}{Z_{x}},\label{eq:exponential-family-dists}\\
Z_{x} & \coloneqq\sum_{y\in\mathcal{Y}}e^{-H_{x}(y)},
\end{align}
where $H_{x}(y)$ is a Hamiltonian function for $x\in\mathcal{X}$
(i.e., each $H_{x}$ is a polynomial function of the elements in $y$).
Then, after defining
\begin{equation}
H'_{x}(y)\coloneqq H_{x}(y)+\ln\!\left(\frac{Z_{x}}{p_{X}(x)}\right),
\end{equation}
it follows that the reverse channel has the following form:
\begin{align}
p_{X|Y}(x|y) & =\softmin(x),
\end{align}
where
\begin{equation}
\softmin(x)\coloneqq\frac{e^{-H'_{x}(y)}}{\sum_{x\in\mathcal{X}}e^{-H'_{x}(y)}}.
\end{equation}
The softmin function is widely used in machine learning as a method
for converting a vector of real numbers to a probability vector. It
is essentially a Boltzmann weighting scheme that assigns higher probability
to lower energy values and lower probability to higher energy values.
Thus, the softmin thermal measurement in~\eqref{eq:softmax-thermal-meas}
represents a quantum generalization of this concept.

Let us now consider the case when the alphabet $\mathcal{X}$ for
the forward process is binary; i.e., $\mathcal{X}=\left\{ 0,1\right\} $.
Continuing from the development in Section~\ref{subsec:Binary-classical-case},
observe for this case that
\begin{equation}
\ln p(y|1)-\ln p(y|0)=H_{0}(y)-H_{1}(y)+\ln\!\left(\frac{Z_{0}}{Z_{1}}\right),
\end{equation}
and plugging into~\eqref{eq:Bayes-reversal-classical} gives the following
for the Bayes reversal channel:
\begin{align}
p_{X|Y}(0|y) & =\frac{1}{1+e^{H(y)+b}},\label{eq:bayes-reverse-exp-fam-0}\\
p_{X|Y}(1|y) & =\frac{1}{1+e^{-\left[H(y)+b\right]}},\label{eq:bayes-reverse-exp-fam-1}\\
H(y) & \coloneqq H_{0}(y)-H_{1}(y),\\
b & \coloneqq\ln\!\left(\frac{Z_{0}\left(1-p\right)}{Z_{1}p}\right).
\end{align}
Thus, the conditional probabilities $p_{X|Y}(0|y)$ and $p_{X|Y}(1|y)$
have the form of a sigmoid function of $H(y)+b$, where we recall
that the sigmoid function has the following form:
\begin{equation}
z\mapsto\left(1+e^{z}\right)^{-1}.
\end{equation}
We now see that the Fermi--Dirac thermal measurement in~\eqref{eq:FD-thermal-M0}--\eqref{eq:FD-thermal-M1}
represents a quantum generalization of this concept.

\subsubsection{Sigmoid neurons and Boltzmann machines}

\label{subsec:Sigmoid-neurons}Suppose furthermore that each Hamiltonian
function is quadratic, so that
\begin{equation}
H_{x}(y)=y^{T}W_{x}y+w_{x}^{T}y,
\end{equation}
where $W_{x}\in\mathbb{R}^{n\times n}$ and $w_{x}\in\mathbb{R}^{n}$
for $x\in\left\{ 0,1\right\} $. In this case, the prior probability
densities in~\eqref{eq:exponential-family-dists} are known as Boltzmann
machines~\cite{Ackley1985,Hinton1986}. Then an immediate conclusion
from~\eqref{eq:bayes-reverse-exp-fam-0}--\eqref{eq:bayes-reverse-exp-fam-1}
is that
\begin{align}
p_{X|Y}(0|y) & =\frac{1}{1+e^{y^{T}Wy+w^{T}y+b}},\\
p_{X|Y}(1|y) & =\frac{1}{1+e^{-\left[y^{T}Wy+w^{T}y+b\right]}},\\
W & \coloneqq W_{0}-W_{1},\\
w & \coloneqq w_{0}-w_{1}.
\end{align}

The conditional probabilities again have the form of a sigmoid function,
but this time being a function of $y^{T}Wy+w^{T}y+b$. The above development
is a well known motivation for the sigmoid function used in classical
neurons.

\section{Softmin thermal measurements from semidefinite optimization}

\label{sec:Softmin-thermal-measurements-SDPs}

Theorem~\ref{thm:main} identifies the optimal measurement arising
from a relative-entropy minimum change principle. A natural question
is whether this same measurement also appears in other optimization
problems. In this section we show that the answer is affirmative.
In particular, we consider a broad class of entropy-regularized semidefinite
optimization problems whose optimal solutions are precisely the softmin
thermal measurements introduced in~\eqref{eq:softmax-thermal-meas}.
This provides an independent variational characterization of these
measurements and further highlights their role in quantum optimization. 

This result generalizes the optimization problem considered in~\cite{liu2026,Lindsey2023},
where the optimal solutions were shown to be Fermi--Dirac thermal
measurements. Allowing multiple measurement outcomes leads naturally
to the softmin thermal measurements of~\eqref{eq:softmax-thermal-meas}.
As such, the binary Fermi--Dirac thermal measurement is the two-outcome
member of a broader optimization framework.

Let us now specify the optimization problem. Let $c,d\in\mathbb{N}$,
let $\mathcal{X}$ be a finite alphabet, let $H_{x}$ be a $d\times d$
Hermitian matrix for all $x\in\mathcal{X}$, and let $Q_{i}$ be a
$d\times d$ Hermitian matrix for all $i\in\left[c\right]\equiv\left\{ 1,\ldots,c\right\} $.
Let $T>0$ be a temperature, and let $q_{x,i}\in\mathbb{R}$ for all
$x\in\mathcal{X}$ and $i\in\left[c\right]$. Adopting the notation
in~\eqref{eq:meas-shorthand}, the optimization problem is as follows:
\begin{equation}
\min_{M_{\mathcal{X}}}\left\{ 
\begin{array}{c}
     \sum_{x\in\mathcal{X}}\Tr\!\left[H_{x}M_{x}\right]-TS\!\left(M_{\mathcal{X}}\right):  \\
     \Tr\!\left[Q_{i}M_{x}\right]=q_{i,x}\,\forall i,x 
\end{array}\right\}
,\label{eq:measurement-free-energy-min}
\end{equation}
where the constraints hold for all $i\in\left[c\right]$ and $x\in\mathcal{X}$,
and the measurement entropy $S\!\left(M_{\mathcal{X}}\right)$ is
defined as
\begin{equation}
S\!\left(M_{\mathcal{X}}\right)\coloneqq-\sum_{x\in\mathcal{X}}\Tr\!\left[M_{x}\ln M_{x}\right].
\end{equation}
The optimization problem in~\eqref{eq:measurement-free-energy-min}
reduces to the following semidefinite program, close to the standard
form of \cite[Eq.~(4.51)]{Boyd2004}, in the limit $T\to0$:
\begin{equation}
\min_{M_{\mathcal{X}}}\left\{ \sum_{x\in\mathcal{X}}\Tr\!\left[H_{x}M_{x}\right]:\Tr\!\left[Q_{i}M_{x}\right]=q_{i,x}\forall i,x\right\} .
\end{equation}

The following theorem provides a complete dual characterization of
the optimization problem in~\eqref{eq:measurement-free-energy-min}
and shows that its optimal solutions are the softmin thermal measurements
from~\eqref{eq:softmax-thermal-meas}.
\begin{thm}
\label{thm:SDP-q-softmax}The following equality holds:
\begin{multline}
\min_{M_{\mathcal{X}}}\left\{ 
\begin{array}{c}
     \sum_{x\in\mathcal{X}}\Tr\!\left[H_{x}M_{x}\right]-TS\!\left(M_{\mathcal{X}}\right):  \\
     \Tr\!\left[Q_{i}M_{x}\right]=q_{i,x}\,\forall i,x 
\end{array}\right\} \\
=Td+\sup_{\vec{\mu}\in\mathbb{R}^{c\times\left|\mathcal{X}\right|}}\left\{ \vec{\mu}\cdot\vec{q}-f\!\left(\left(H_{x}\right)_{x},\vec{\mu},\left(Q_{i}\right)_{i}\right)\right\} ,
\end{multline}
where $\vec{\mu}\equiv\left(\mu_{i,x}\right)_{i,x}$, $\vec{\mu}\cdot\vec{q}\equiv\sum_{i,x}\mu_{i,x}q_{i,x}$,
and 
\begin{multline}
f\!\left(\left(H_{x}\right)_{x},\vec{\mu},\left(Q_{i}\right)_{i}\right)\coloneqq\\
\inf_{A\in\Herm}\left\{ \Tr[A]+T\sum_{x\in\mathcal{X}}\Tr\!\left[e^{-\frac{1}{T}\left(H_{x}-\sum_{i,x}\mu_{i,x}Q_{i}+A\right)}\right]\right\} .
\end{multline}
Furthermore, an optimal $A^{\star}$ for $f\!\left(\left(H_{x}\right)_{x},\vec{\mu},\left(Q_{i}\right)_{i}\right)$
satisfies
\begin{equation}
\sum_{x\in\mathcal{X}}e^{-\frac{1}{T}\left(H_{x}-\sum_{i,x}\mu_{i,x}Q_{i}+A^{\star}\right)}=I,
\end{equation}
and an optimal POVM consists of the measurement operators labeled
by $M_{x}^{\star}$, where
\begin{equation}
M_{x}^{\star}=e^{-\frac{1}{T}\left(H_{x}-\sum_{i,x}\mu_{i,x}Q_{i}+A^{\star}\right)}.
\end{equation}
\end{thm}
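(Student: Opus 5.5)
We will prove Theorem~\ref{thm:SDP-q-softmax} by Lagrangian duality, in the same way as Theorem~\ref{thm:main}. The plan is to dualize the linear constraints with multipliers $\mu_{i,x}$ and the completeness constraint $\sum_x M_x=I$ with a Hermitian multiplier $A$. Positivity $M_x\geq 0$ is left implicit, since the entropy term acts as a barrier.

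The Lagrangian is
\begin{multline*}
L(M_{\mathcal{X}},\vec{\mu},A)=\sum_{x}\Tr[H_xM_x]+T\sum_x\Tr[M_x\ln M_x]\\
-\sum_{i,x}\mu_{i,x}\left(\Tr[Q_iM_x]-q_{i,x}\right)+\Tr\!\left[A\left(\textstyle\sum_x M_x-I\right)\right].
\end{multline*}
Write $K_x\coloneqq H_x-\sum_i\mu_{i,x}Q_i+A$. For each $x$, the unconstrained minimization over $M_x\geq0$ of $\Tr[K_xM_x]+T\Tr[M_x\ln M_x]$ is a standard Gibbs variational problem. The key tool is the identity
\[
\Tr[KM]+T\Tr[M\ln M]=T\,D\!\left(M\middle\|e^{-(K+T)/T}\right)-T\,\Tr\!\left[e^{-(K+T)/T}\right],
\]
where $D$ here denotes the relative entropy for positive semidefinite operators (not necessarily normalized), which is nonnegative up to trace terms. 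Its minimizer is $M_x=e^{-K_x/T-I}$, with minimum value $-T\Tr[e^{-K_x/T-1}]$.

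After absorbing the constant shift into $A$ (replace $A$ by $A-T I$), the dual function becomes
\[
\vec{\mu}\cdot\vec{q}-\Tr[A]+Td-T\sum_x\Tr\!\left[e^{-\frac{1}{T}\left(H_x-\sum_i\mu_{i,x}Q_i+A\right)}\right].
\]
Here the $Td$ term comes from $-\Tr[A]$ shifting by $Td$, and the exponential-trace sign conventions must be tracked carefully. The exact constants must be matched to the stated formula $Td+\sup_{\vec{\mu}}\{\vec{\mu}\cdot\vec{q}-f\}$, and the bookkeeping of $e^{-1}$ factors against the shift of $A$ is where the $Td$ appears. Maximizing over $A$ gives $-\inf_A\{\cdots\}=-f$, and then maximizing over $\vec{\mu}$ yields the stated dual expression. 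Weak duality follows immediately.

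For strong duality, we invoke convexity. The objective is convex because $M\mapsto\Tr[M\ln M]$ is convex, and the constraints are affine. A Slater-type condition is needed: we assume feasibility with a strictly positive point, or else argue as in the proof of Theorem~\ref{thm:main} via a minimax theorem, for instance Sion's theorem on the compact convex set of POVMs.

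Stationarity in $A$ gives $\sum_x e^{-\frac{1}{T}(H_x-\sum_i\mu_{i,x}Q_i+A^\star)}=I$. The derivative of $\Tr[e^{-X/T}]$ in direction $B$ is $-\frac{1}{T}\Tr[e^{-X/T}B]$, by the trace-functional derivative identity, so the $T$ factor cancels. Stationarity thus shows that the minimizing $M_x^\star$ are exactly the stated exponentials and satisfy completeness. Complementary slackness at an optimal $\vec{\mu}$ then makes this POVM primal optimal.

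We expect the main obstacle to be justifying strong duality and the attainment of the supremum over $\vec{\mu}$ when the linear constraints $\Tr[Q_iM_x]=q_{i,x}$ may lack a strictly feasible point. If this fails, we would state the result under a Slater assumption, or derive optimality directly from the Gibbs identity once an optimal $\vec{\mu}$ exists.
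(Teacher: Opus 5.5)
Your proposal is correct and follows essentially the paper's route. It uses Lagrangian duality, the generalized-relative-entropy (Gibbs) identity for the inner minimization over $M_x\geq0$, and a shift of $A$ by $TI$ to produce the $Td$ term; the paper realizes this shift by adding $Td-T\sum_x\Tr[M_x]=0$ before dualizing completeness, and stationarity in $A$ then gives the completeness condition and the exponential form of $M_x^\star$. Your strong-duality worry goes away if you dualize in two stages as the paper does: first apply Sion's theorem to the $\vec{\mu}$-constraints over the compact convex set of POVMs, which needs no strictly feasible point for $\Tr[Q_iM_x]=q_{i,x}$ and no attainment of the supremum over $\vec{\mu}$ for the equality, and then dualize only $\sum_xM_x=I$, for which $M_x=I/|\mathcal{X}|$ is strictly feasible, so no extra Slater assumption is needed.
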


\begin{proof}
See Appendix~\ref{app:softmin-SDP}.
\end{proof}
Theorems~\ref{thm:main} and~\ref{thm:SDP-q-softmax} show that softmin
thermal measurements admit two distinct variational characterizations:
one as solutions of a relative-entropy minimum-change principle and
another as optimizers of entropy-regularized semidefinite programs.
This dual perspective suggests that softmin thermal measurements occupy
a role analogous to thermal states, the latter arising both from maximum-entropy
principles and from free-energy minimization.

\section{Connections to quantum information theory}

\label{sec:Connections-to-QIT}

In this section, we establish several links of the measurements in
Theorem~\ref{thm:main} to quantum information theory. First, we prove
that the minimum change principle in~\eqref{eq:minimal-change-I}
satisfies a particular additivity property (Section~\ref{subsec:Additivity-MCP}),
which is similar in spirit to the additivity of accessible information
\cite{Holevo1973StatisticalDecision}. After that, we establish bounds
on the performance of Fermi--Dirac thermal measurements in quantum
hypothesis testing (Section~\ref{subsec:Fermi=002013Dirac-thermal-measurements-QHT}).

\subsection{Additivity for the minimum change principle in Equation~\eqref{eq:minimal-change-I}}

\label{subsec:Additivity-MCP}

Here we state the additivity result, but before doing so, we establish
some notation. Let $\mathcal{E}_{1}$ and $\mathcal{E}_{2}$ be forward
channels of the form in~\eqref{eq:forward-channel}, so that
\begin{align}
\mathcal{E}_{1}(\omega_{1}) & \coloneqq\sum_{x_{1}\in\mathcal{X}_{1}}\langle x_{1}|\omega_{1}|x_{1}\rangle\rho_{x_{1}}^{(1)},\\
\mathcal{E}_{2}(\omega_{2}) & \coloneqq\sum_{x_{2}\in\mathcal{X}_{2}}\langle x_{2}|\omega_{2}|x_{2}\rangle\rho_{x_{2}}^{(2)}.
\end{align}
Also, suppose that the prior probability distribution for the product
channel $\mathcal{E}_{1}\otimes\mathcal{E}_{2}$ is a product distribution
$p_{X_{1}}(x_{1})p_{X_{2}}(x_{2})$, and suppose furthermore that
the reference state for the reverse process is a product state $\tau_{1}\otimes\tau_{2}$.
Then we use the following notations for the various bipartite states
involved in the forward and reverse processes, for both the individual
and combined cases:
\begin{align}
Q_{\mathrm{fwd}}(p_{X_{1}},\mathcal{E}_{1}) & \coloneqq\sum_{x_{1}\in\mathcal{X}_{1}}\sigma_{x_{1}}^{(1)}\otimes|x_{1}\rangle\!\langle x_{1}|,\label{eq:additive-notation-1}\\
Q_{\mathrm{fwd}}(p_{X_{2}},\mathcal{E}_{2}) & \coloneqq\sum_{x_{2}\in\mathcal{X}_{2}}\sigma_{x_{2}}^{(2)}\otimes|x_{2}\rangle\!\langle x_{2}|,\\
\sigma_{x_{1}}^{(1)} & \coloneqq p_{X_{1}}(x_{1})\rho_{x_{1}}^{(1)},\\
\sigma_{x_{2}}^{(2)} & \coloneqq p_{X_{2}}(x_{2})\rho_{x_{2}}^{(2)},\\
Q_{\mathrm{rev}}(\tau_{1},M_{\mathcal{X}_{1}}) & \coloneqq\sum_{x_{1}\in\mathcal{X}_{1}}\tau_{1}^{\frac{1}{2}}M_{x_{1}}^{(1)}\tau_{1}^{\frac{1}{2}}\otimes|x_{1}\rangle\!\langle x_{1}|,\\
Q_{\mathrm{rev}}(\tau_{2},M_{\mathcal{X}_{2}}) & \coloneqq\sum_{x_{2}\in\mathcal{X}_{2}}\tau_{2}^{\frac{1}{2}}M_{x_{2}}^{(2)}\tau_{2}^{\frac{1}{2}}\otimes|x_{2}\rangle\!\langle x_{2}|,
\end{align}
\begin{multline}
Q_{\mathrm{fwd}}(p_{X_{1}}\otimes p_{X_{2}},\mathcal{E}_{1}\otimes\mathcal{E}_{2})\coloneqq\\
\sum_{\substack{x_{1}\in\mathcal{X}_{1},\\
x_{2}\in\mathcal{X}_{2}
}
}\sigma_{x_{1}}^{(1)}\otimes\sigma_{x_{2}}^{(2)}\otimes|x_{1}\rangle\!\langle x_{1}|\otimes|x_{2}\rangle\!\langle x_{2}|,
\end{multline}
\begin{multline}
Q_{\mathrm{rev}}(\tau_{1}\otimes\tau_{2},M_{\mathcal{X}_{1}\times\mathcal{X}_{2}})\coloneqq\\
\sum_{\substack{x_{1}\in\mathcal{X}_{1},\\
x_{2}\in\mathcal{X}_{2}
}
}\left(\tau_{1}\otimes\tau_{2}\right)^{\frac{1}{2}}M_{x_{1}x_{2}}\left(\tau_{1}\otimes\tau_{2}\right)^{\frac{1}{2}}\otimes|x_{1}\rangle\!\langle x_{1}|\otimes|x_{2}\rangle\!\langle x_{2}|.\label{eq:additive-notation-last}
\end{multline}

\begin{thm}
\label{thm:additivity}Given the notation in~\eqref{eq:additive-notation-1}--\eqref{eq:additive-notation-last},
the following additivity relation holds:
\begin{multline}
\!\!\!\!\!\!\min_{M_{\mathcal{X}_{1}\times\mathcal{X}_{2}}}D(Q_{\mathrm{rev}}(\tau_{1}\otimes\tau_{2},M_{\mathcal{X}_{1}\times\mathcal{X}_{2}})\|Q_{\mathrm{fwd}}(p_{X_{1}}\otimes p_{X_{2}},\mathcal{E}_{1}\otimes\mathcal{E}_{2}))\\
=\sum_{i=1}^{2}\min_{M_{\mathcal{X}_{i}}}D(Q_{\mathrm{rev}}(\tau_{i},M_{\mathcal{X}_{i}})\|Q_{\mathrm{fwd}}(p_{X_{i}},\mathcal{E}_{i})).
\end{multline}
\end{thm}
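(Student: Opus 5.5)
The plan is to prove the two inequalities separately. The primal formulation in Theorem~\ref{thm:main} gives ``$\le$'', and the dual formulation in~\eqref{eq:dual-bayesian-min-change} gives ``$\ge$''. Throughout we assume, as in Theorem~\ref{thm:main}, that $\tau_i>0$ and $\sigma^{(i)}_{x_i}>0$, so that $\tau_1\otimes\tau_2>0$ and $\sigma^{(1)}_{x_1}\otimes\sigma^{(2)}_{x_2}>0$. Theorem~\ref{thm:main} then applies to the combined problem, whose substates are $\sigma^{(1)}_{x_1}\otimes\sigma^{(2)}_{x_2}$ with alphabet $\mathcal{X}_1\times\mathcal{X}_2$. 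Write $\Phi_i(A)\coloneqq \Tr[A\tau_i]+\sum_{x_i}\Tr[e^{\ln\sigma^{(i)}_{x_i}-A}]$ for the individual dual objectives, and $\Phi_{12}$ for the combined one. Let $v_i$ and $v_{12}$ denote the minimum change values, so that $v=1-\inf\Phi$ in each case.

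\textbf{Upper bound (subadditivity).} Take the product POVM $M_{x_1x_2}=M^{(1)}_{x_1}\otimes M^{(2)}_{x_2}$, built from optimal (or arbitrary) individual POVMs. Then
\[
(\tau_1\otimes\tau_2)^{\frac12}M_{x_1x_2}(\tau_1\otimes\tau_2)^{\frac12}=\omega^{(1)}_{x_1}\otimes\omega^{(2)}_{x_2},\qquad \omega^{(i)}_{x_i}\coloneqq\tau_i^{\frac12}M^{(i)}_{x_i}\tau_i^{\frac12}.
\]
Using~\eqref{eq:rewrite-primal} and the additivity of $D(\cdot\|\cdot)$ under tensor products for positive (not necessarily normalized) operators, $D(\omega_1\otimes\omega_2\|\sigma_1\otimes\sigma_2)=\Tr[\omega_2]D(\omega_1\|\sigma_1)+\Tr[\omega_1]D(\omega_2\|\sigma_2)$. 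We then sum over $x_1,x_2$. Because $\sum_{x_i}\Tr[\omega^{(i)}_{x_i}]=\Tr[\tau_i]=1$, the total equals $\sum_{x_1}D(\omega^{(1)}_{x_1}\|\sigma^{(1)}_{x_1})+\sum_{x_2}D(\omega^{(2)}_{x_2}\|\sigma^{(2)}_{x_2})$. This proves $v_{12}\le v_1+v_2$.

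\textbf{Lower bound via the dual.} Here we need $\inf\Phi_{12}\le$ a suitable combination of $\inf\Phi_1$ and $\inf\Phi_2$, and the right trial point is the product of the optimal dual variables. Let $A_i^\star$ be the optimizers from Theorem~\ref{thm:main}, so that $\sum_{x_i}e^{\ln\sigma^{(i)}_{x_i}-A_i^\star}=\tau_i$. Set $A=A_1^\star\otimes I+I\otimes A_2^\star$. Since $\ln(\sigma_{x_1}\otimes\sigma_{x_2})=\ln\sigma_{x_1}\otimes I+I\otimes\ln\sigma_{x_2}$ and commuting summands exponentiate to a product,
\[
e^{\ln(\sigma_{x_1}\otimes\sigma_{x_2})-A}=e^{\ln\sigma_{x_1}-A_1^\star}\otimes e^{\ln\sigma_{x_2}-A_2^\star}.
\]
Summing over $(x_1,x_2)$ gives exactly $\tau_1\otimes\tau_2$, so by uniqueness $A$ is the combined optimizer $A^\star$. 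It remains to evaluate the objective at this point. At any optimizer the stationarity condition gives $\Phi(A^\star)=\Tr[A^\star\tau]+\Tr[\tau]=\Tr[A^\star\tau]+1$, hence $v=-\Tr[A^\star\tau]$. Applying this to the combined problem, $v_{12}=-\Tr[(A_1^\star\otimes I+I\otimes A_2^\star)(\tau_1\otimes\tau_2)]=-\Tr[A_1^\star\tau_1]-\Tr[A_2^\star\tau_2]=v_1+v_2$. This argument actually yields equality directly and makes the primal step only a consistency check.

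\textbf{Main obstacle.} The potential obstacle is justifying that the dual value equals $1-\Phi(A^\star)$ with the stationarity identity. This requires that the infimum is attained and that the gradient formula $\tau-\sum_x e^{\ln\sigma_x-A}$ holds; both are given by Theorem~\ref{thm:main}, namely the unique minimum satisfying~\eqref{eq:constraint-optimal-povm-1} and strict convexity. The key computational identity is that the tensor-sum of Hermitian operators splits the exponential, which holds because $X\otimes I$ and $I\otimes Y$ commute. As a byproduct, the optimal measurement~\eqref{eq:optimal-measurement} for the combined problem factorizes as $M^{\star(1)}_{x_1}\otimes M^{\star(2)}_{x_2}$.
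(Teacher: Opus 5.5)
Your proof is correct, but for the lower bound it takes a genuinely different route from the paper. The paper proves ``$\le$'' the same way you do, using a product POVM and additivity of $D$ under tensor products. It never passes to the dual, however. For ``$\ge$'' it applies superadditivity of relative entropy, $D(\xi_{AB}\|\gamma_A\otimes\zeta_B)\ge D(\xi_A\|\gamma_A)+D(\xi_B\|\zeta_B)$, with $\xi_{AB}=Q_{\mathrm{rev}}(\tau_1\otimes\tau_2,M_{\mathcal{X}_1\times\mathcal{X}_2})$ for an arbitrary joint POVM. It then checks that the two marginals are themselves reverse states $Q_{\mathrm{rev}}(\tau_i,N^{(i)}_{\mathcal{X}_i})$ for the legitimate POVMs $N^{(1)}_{x_1}=\Tr_2[(I\otimes\tau_2)\sum_{x_2}M_{x_1x_2}]$ and $N^{(2)}_{x_2}=\Tr_1[(\tau_1\otimes I)\sum_{x_1}M_{x_1x_2}]$.

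You instead show that the tensor sum $A_1^\star\otimes I+I\otimes A_2^\star$ solves the combined stationarity equation, so it is the combined dual optimizer. Strictly, it is convexity of the combined dual objective that makes a stationary point a minimizer; uniqueness then identifies it with $A^\star$. You then read off $v_{12}=v_1+v_2$ from the identity $v=-\Tr[A^\star\tau]$, which makes the primal step redundant.

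Your route gives extra structure for free: an explicit optimal dual variable for the product problem, the factorization of the optimal POVM directly from~\eqref{eq:optimal-measurement}, and a closed form for the optimal value. The price is that it rests entirely on Theorem~\ref{thm:main} (strong duality, attainment, convexity). It therefore only covers positive definite $\tau_i$ and $\sigma^{(i)}_{x_i}$, which you also need for $\ln(\sigma^{(1)}_{x_1}\otimes\sigma^{(2)}_{x_2})$ to split as a tensor sum. The paper's primal argument uses only elementary properties of relative entropy and holds for the notation as stated, with no full-rank assumptions.
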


\begin{proof}
See Appendix~\ref{app:additivity}.
\end{proof}
An important implication of Theorem~\ref{thm:additivity} is that,
for a forward process consisting of a product of two processes and
a reference state that is a product of two reference states, the optimal
reversal channel is a product of the reversal channels that are optimal
for each individual case.

\subsection{Fermi--Dirac thermal measurements and quantum hypothesis testing}

\label{subsec:Fermi=002013Dirac-thermal-measurements-QHT}

In this section, we investigate the performance of the Fermi--Dirac
thermal measurement in~\eqref{eq:FD-thermal-meas} for quantum hypothesis
testing. Our main results consist of upper bounds on the error probability
when using this measurement in both the one-shot and asymptotic scenarios.
We also conclude that, in the asymptotic independent and identically
distributed (i.i.d.) scenario, this measurement consists of a product
measurement followed by classical postprocessing, implying that it
cannot achieve optimal asymptotic performance in general. One conclusion
of this finding is that there cannot exist a multiplicative constant
relating the error probability of the Fermi--Dirac thermal measurement
to that of the optimal measurement, as there does for the pretty good
measurement~\cite{BarnumKnill2002}.

\subsubsection{Review of quantum hypothesis testing}

Let us begin by reviewing some well known results in quantum hypothesis
testing. In the scenario of binary symmetric hypothesis testing, a
state $\rho_{0}$ is prepared with probability $p_{0}$ and a state
$\rho_{1}$ is prepared with probability $p_{1}$, where $p_{0}+p_{1}=1$.
For simplicity, we assume throughout that $\rho_{0}$ and $\rho_{1}$
are positive definite states and $p_{0},p_{1}>0$. The optimal error
probability in deciding which state is prepared is as follows~\cite{Helstrom1967,Helstrom1969,Holevo1972}:
\begin{align}
p_{\mathrm{Hel}}^{e} & \coloneqq\min_{M:0\leq M\leq I}\left\{ \Tr\!\left[M\sigma_{0}\right]+\Tr\!\left[\left(I-M\right)\sigma_{1}\right]\right\} \label{eq:helstrom-error-prob}\\
 & =\frac{1}{2}\left(1-\left\Vert \Delta\right\Vert _{1}\right),
\end{align}
where
\begin{equation}
\Delta\coloneqq\sigma_{1}-\sigma_{0},
\end{equation}
and we have adopted the same notation from~\eqref{eq:sigma-sub-state-def},
so that
\begin{equation}
\sigma_{x}\coloneqq p_{x}\rho_{x}\quad\forall x\in\left\{ 0,1\right\} .
\end{equation}
The optimal measurement operator $M^{\star}$ in~\eqref{eq:helstrom-error-prob}
is known as the Helstrom measurement, equal to the projection $\Pi_{+}$
onto the positive eigenspace of $\Delta$~\cite{Helstrom1967,Helstrom1969}.
Letting $\Pi_{-}$ denote the projection onto the orthogonal subspace,
we can write
\begin{equation}
p_{\mathrm{Hel}}^{e}=\frac{1}{2}\left(1-\Tr\!\left[\signum(\Delta)\Delta\right]\right),
\end{equation}
where
\begin{equation}
\signum(\Delta)\coloneqq\Pi_{+}-\Pi_{-}.
\end{equation}
This follows because $\left\Vert \Delta\right\Vert _{1}=\Tr\!\left[\signum(\Delta)\Delta\right]$.

Let $p_{\mathrm{PG}}^{e}$ denote the error probability of the pretty
good measurement:
\begin{equation}
p_{\mathrm{PG}}^{e}=\Tr\!\left[\sigma^{-\frac{1}{2}}\sigma_{1}\sigma^{-\frac{1}{2}}\sigma_{0}\right]+\Tr\!\left[\sigma^{-\frac{1}{2}}\sigma_{0}\sigma^{-\frac{1}{2}}\sigma_{1}\right],
\end{equation}
where
\begin{equation}
\sigma\coloneqq\sigma_{0}+\sigma_{1}.
\end{equation}
The following inequalities hold
\begin{equation}
p_{\mathrm{Hel}}^{e}\leq p_{\mathrm{PG}}^{e}\leq2p_{\mathrm{Hel}}^{e},\label{eq:PG-to-opt}
\end{equation}
where the second inequality was proven in~\cite{BarnumKnill2002}
(see also \cite[Theorem~3.10]{watrous2018theory}). The upper bound
in~\eqref{eq:PG-to-opt} justifies the name ``pretty good,'' and
we note that this result was generalized recently in~\cite{Mishra2025}.

\subsubsection{Non-asymptotic performance of Fermi--Dirac thermal measurements}

We now establish upper bounds on the error probability in hypothesis
testing when using Fermi--Dirac thermal measurements. Defining
\begin{equation}
A\coloneqq\ln\sigma_{1}-\ln\sigma_{0},
\end{equation}
we can write the Fermi--Dirac thermal measurement in~\eqref{eq:FD-thermal-meas}
as follows:
\begin{equation}
\left(\left(e^{A}+I\right)^{-1},\left(e^{-A}+I\right)^{-1}\right),\label{eq:FD-thermal-meas-A}
\end{equation}
where the first outcome is identified with guessing ``0'' and the
second with guessing ``1''. Then we can write the error probability,
when using this measurement for symmetric hypothesis testing, as follows:
\begin{align}
p_{\mathrm{FD}}^{e} & \coloneqq\Tr\!\left[\left(e^{-A}+I\right)^{-1}\sigma_{0}\right]+\Tr\!\left[\left(e^{A}+I\right)^{-1}\sigma_{1}\right]\label{eq:FD-error-prob-QHT}\\
 & =\frac{1}{2}\left(1-\Tr\!\left[\tanh\!\left(A/2\right)\Delta\right]\right),\label{eq:FD-err-prob-tanh}
\end{align}
where the second equality follows from simple manipulations and is
proved in Appendix~\ref{app:Proofs-FD-QHT}.

The following inequality is immediate, a consequence of the fact that
the Helstrom measurement is optimal:
\begin{equation}
p_{\mathrm{Hel}}^{e}\leq p_{\mathrm{FD}}^{e}.\label{eq:Hel-to-FD}
\end{equation}
A key mathematical difference between the optimal error probability
and that in~\eqref{eq:FD-err-prob-tanh} is the substitution of $\signum(\Delta)$
with $\tanh\!\left(A/2\right)$. The function $\tanh$ is a smooth
approximation of the sgn function, and in the commuting (classical)
case, i.e., when $\left[\sigma_{0},\sigma_{1}\right]=0$, the operators
$\signum(\Delta)$ and $\signum(A)$ are the same. However, in the
general noncommuting case, there is a strong distinction between the
operators $\signum(\Delta)$ and $\signum(A)$, making it difficult
to relate the performance of the Fermi--Dirac thermal measurement
to that of the optimal measurement, by means of a relation other than
that in~\eqref{eq:Hel-to-FD}.

We now progress towards establishing an upper bound on the error probability
$p_{\mathrm{FD}}^{e}$. Before doing so, let us recall the Chernoff
divergence between states $\rho_{0}$ and $\rho_{1}$, defined as
\begin{align}
C(\rho_{0}\|\rho_{1}) & \coloneqq\sup_{s\in\left(0,1\right)}C_{s}(\rho_{0}\|\rho_{1}),\label{eq:chernoff-div}\\
C_{s}(\rho_{0}\|\rho_{1}) & \coloneqq-\ln\Tr\!\left[\rho_{0}^{s}\rho_{1}^{1-s}\right].\label{eq:unopt-chernoff}
\end{align}
It is the optimal error exponent for asymptotic quantum hypothesis
testing~\cite{Audenaert2007,Nussbaum2009} (discussed more in Section
\ref{subsec:Asymptotic-performance-FD-QHT}). Here, we define the
following quantity that is equivalent to~\eqref{eq:unopt-chernoff}
whenever $\rho_{0}$ and $\rho_{1}$ commute but is different otherwise
(see Lemma~\ref{lem:usual-chernoff-to-modified}):
\begin{equation}
C_{s}^{\natural}(\rho_{0}\|\rho_{1})\coloneqq-\ln\Tr\!\left[e^{\left(1-s\right)\left(\ln\rho_{1}-\ln\rho_{0}\right)}\rho_{0}\right].\label{eq:unopt-unusal-chernoff}
\end{equation}
This quantity is related to the performance of the Fermi--Dirac thermal
measurement in quantum hypothesis testing, as stated in the following
lemma:
\begin{lem}
\label{lem:one-shot-FD-err-bnd}The error probability in one-shot
symmetric quantum hypothesis testing, when using a Fermi--Dirac thermal
measurement of the form in~\eqref{eq:FD-thermal-meas}, is bounded
as follows for all $s\in\left[0,1\right]$:
\begin{equation}
p_{\mathrm{FD}}^{e}\leq2g(s,p_{0})e^{-\min\left\{ C_{s}^{\natural}(\rho_{0}\|\rho_{1}),C_{1-s}^{\natural}(\rho_{1}\|\rho_{0})\right\} },\label{eq:err-prob-bnd-QHT-FD-unusal-Chernoff}
\end{equation}
where $p_{\mathrm{FD}}^{e}$ is defined in~\eqref{eq:FD-error-prob-QHT},
$C_{s}^{\natural}(\rho_{0}\|\rho_{1})$ is defined in~\eqref{eq:unopt-unusal-chernoff},
and 
\begin{equation}
g(s,p_{0})\coloneqq\left(sp_{0}\right)^{s}\left(\left(1-s\right)p_{1}\right)^{1-s}.\label{eq:g-s-func}
\end{equation}
\end{lem}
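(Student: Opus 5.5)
The plan is to bound each of the two terms in~\eqref{eq:FD-error-prob-QHT} separately, using a scalar inequality lifted to operators via functional calculus in the single Hermitian operator $A=\ln\sigma_{1}-\ln\sigma_{0}$. Both measurement operators $(e^{-A}+I)^{-1}$ and $(e^{A}+I)^{-1}$ are functions of $A$ alone. So if $f(a)\leq h(a)$ for all real $a$, then $f(A)\leq h(A)$ in the operator order. Since $\sigma_{0},\sigma_{1}\geq0$, it follows that $\Tr[f(A)\sigma_{x}]\leq\Tr[h(A)\sigma_{x}]$. This avoids any difficulty from the noncommutativity of $\sigma_{0}$ and $\sigma_{1}$.

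The scalar inequality I will use is, for $t\in[0,1]$ and all $a\in\mathbb{R}$,
\begin{equation}
\frac{1}{1+e^{-a}}\leq t^{t}(1-t)^{1-t}e^{(1-t)a},
\end{equation}
with the convention $0^{0}=1$. To prove it, set $u=e^{a}>0$. The claim becomes $\sup_{u>0}u^{1-t}/(1+u)\leq t^{t}(1-t)^{1-t}$. Elementary calculus puts the maximizer at $u=(1-t)/t$, where the value is exactly $t^{t}(1-t)^{1-t}$. The endpoints $t\in\{0,1\}$ are checked directly.

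For the first term, I apply the inequality with $t=s$ and $a=A$:
\begin{equation}
\Tr[(e^{-A}+I)^{-1}\sigma_{0}]\leq s^{s}(1-s)^{1-s}\,\Tr[e^{(1-s)A}\sigma_{0}].
\end{equation}
Next I use $\ln\sigma_{x}=\ln\rho_{x}+(\ln p_{x})I$, so that $A=\ln\rho_{1}-\ln\rho_{0}+\ln(p_{1}/p_{0})I$. The scalar shift factors out of the exponential, giving
\begin{equation}
\Tr[e^{(1-s)A}\sigma_{0}]=p_{0}^{s}p_{1}^{1-s}e^{-C_{s}^{\natural}(\rho_{0}\|\rho_{1})}.
\end{equation}
The first term is therefore at most $g(s,p_{0})e^{-C_{s}^{\natural}(\rho_{0}\|\rho_{1})}$.

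For the second term, I write $(e^{A}+I)^{-1}=(e^{-b}+I)^{-1}$ with $b=-A$, and apply the inequality with $t=1-s$. This gives the bound $s^{s}(1-s)^{1-s}\Tr[e^{-sA}\sigma_{1}]$. Here $-sA=s(\ln\rho_{0}-\ln\rho_{1})+s\ln(p_{0}/p_{1})I$, and $s=1-(1-s)$. Hence
\begin{equation}
\Tr[e^{-sA}\sigma_{1}]=p_{1}^{1-s}p_{0}^{s}e^{-C_{1-s}^{\natural}(\rho_{1}\|\rho_{0})}.
\end{equation}
The second term is therefore at most $g(s,p_{0})e^{-C_{1-s}^{\natural}(\rho_{1}\|\rho_{0})}$.

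Adding the two bounds and bounding each exponential by $e^{-\min\{\cdot,\cdot\}}$ yields~\eqref{eq:err-prob-bnd-QHT-FD-unusal-Chernoff}. The only step needing genuine care is the optimal-constant scalar inequality, since it produces exactly the prefactor $g(s,p_{0})$. After that, everything is bookkeeping of the prior factors, which commute with everything because they enter only as multiples of $I$.
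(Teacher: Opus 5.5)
Your proposal is correct and follows the paper's proof essentially step for step: the paper bounds each term with the scalar inequality $(1+e^{x})^{-1}\leq s^{s}(1-s)^{1-s}e^{-sx}$ (its Lemma~\ref{lem:FD-scalar-bnd-s}, applied with parameter $1-s$ to the first term and $s$ to the second), applies it through the functional calculus of $A$, and then factors out the prior-dependent multiples of $I$ to obtain $g(s,p_{0})$ and the two $C^{\natural}$ exponentials. One harmless slip: with $u=e^{a}$ the ratio to maximize is $u^{t}/(1+u)$, with maximizer $u=t/(1-t)$, whereas your $u^{1-t}/(1+u)$ corresponds to $u=e^{-a}$; both suprema equal $t^{t}(1-t)^{1-t}$, so the inequality and its constant are unaffected.
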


The following lemma relates the usual Chernoff divergence to that
appearing in Lemma~\ref{lem:one-shot-FD-err-bnd}:
\begin{lem}
\label{lem:usual-chernoff-to-modified}The following inequality holds
for all positive definite states $\rho_{0}$ and $\rho_{1}$ and $s\in\left(0,1\right)$:
\begin{equation}
C_{s}(\rho_{0}\|\rho_{1})\geq\min\!\left\{ C_{s}^{\natural}(\rho_{0}\|\rho_{1}),C_{1-s}^{\natural}(\rho_{1}\|\rho_{0})\right\} ,\label{eq:usual-chernoff-to-modified}
\end{equation}
and the inequality is strict if $\rho_{0}$ and $\rho_{1}$ do not
commute.
\end{lem}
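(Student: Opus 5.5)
The plan is to reduce the claim to one trace inequality between positive operators. Set $B\coloneqq\ln\rho_{1}-\ln\rho_{0}$ and
\begin{align}
a & \coloneqq\Tr\!\left[e^{(1-s)B}\rho_{0}\right]=e^{-C_{s}^{\natural}(\rho_{0}\|\rho_{1})},\\
b & \coloneqq\Tr\!\left[e^{-sB}\rho_{1}\right]=e^{-C_{1-s}^{\natural}(\rho_{1}\|\rho_{0})}.
\end{align}
Since $-\ln$ is decreasing, \eqref{eq:usual-chernoff-to-modified} is equivalent to
\begin{equation}
\Tr\!\left[\rho_{0}^{s}\rho_{1}^{1-s}\right]\leq\max\{a,b\}.
\end{equation}
I would aim for the stronger convex-combination bound $\Tr[\rho_{0}^{s}\rho_{1}^{1-s}]\leq s\,a+(1-s)\,b$, which implies the maximum bound immediately. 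As a sanity check, when $[\rho_{0},\rho_{1}]=0$ we have $e^{(1-s)B}\rho_{0}=\rho_{0}^{s}\rho_{1}^{1-s}=e^{-sB}\rho_{1}$. Then all three quantities coincide, so the inequality is tight exactly in the commuting case, as the strictness claim requires.

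To prove the convex-combination bound, I would write $a$ and $b$ as traces of positive operators, $a=\Tr[\rho_{0}^{1/2}e^{(1-s)B}\rho_{0}^{1/2}]$ and $b=\Tr[\rho_{1}^{1/2}e^{-sB}\rho_{1}^{1/2}]$. I would then compare each with the log-Euclidean mean $\Tr[e^{s\ln\rho_{0}+(1-s)\ln\rho_{1}}]$ and with $\Tr[\rho_{0}^{s}\rho_{1}^{1-s}]$.

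The natural tools are the Gibbs variational principle $\ln\Tr e^{H}=\max_{\omega}\{\Tr[\omega H]+S(\omega)\}$ and the Peierls--Bogoliubov inequality. I would use them to write $\ln a$ and $\ln b$ as suprema over states $\omega$, and then evaluate both suprema at a single common trial state. The candidate trial state is the normalized operator $\rho_{0}^{s/2}\rho_{1}^{1-s}\rho_{0}^{s/2}/\Tr[\rho_{0}^{s}\rho_{1}^{1-s}]$, or its symmetrized analogue. Averaging the two resulting lower bounds with weights $s$ and $1-s$ should make the $B$-dependent terms cancel against the logarithm of the trial state, leaving $\ln\Tr[\rho_{0}^{s}\rho_{1}^{1-s}]$ on the other side. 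Concavity of $\ln$ then gives $s\ln a+(1-s)\ln b\leq\ln(sa+(1-s)b)$ and closes the argument. Equivalently, it gives $\ln\Tr[\rho_{0}^{s}\rho_{1}^{1-s}]\leq s\ln a+(1-s)\ln b\leq\ln\max\{a,b\}$, which is already enough.

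The main obstacle is this middle step. The Golden--Thompson inequality, applied naively, yields $a,b\geq\Tr[e^{s\ln\rho_{0}+(1-s)\ln\rho_{1}}]$. Araki--Lieb--Thirring yields $\Tr[\rho_{0}^{s}\rho_{1}^{1-s}]\geq\Tr[e^{s\ln\rho_{0}+(1-s)\ln\rho_{1}}]$ as well. Both therefore bound the two sides from the same direction, and a genuinely variational argument with a well-chosen $\omega$ is needed rather than a chaining of standard trace inequalities. If the common-trial-state idea fails, my fallback is to differentiate $t\mapsto\Tr[e^{tB}\rho_{0}]$ along the segment $t\in[-s,1-s]$, using $e^{-B}$-conjugation to relate the endpoint $t=-s$ to $b$. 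I would then try to exploit convexity of $t\mapsto\ln\Tr[e^{tB}\rho_{0}]$, which holds because it is a cumulant generating function, to bound the Petz quantity by the endpoint values.

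For the strictness claim, I would track the equality cases of every inequality used. Equality in the Gibbs variational principle forces the trial state to be the corresponding Gibbs state. Equality in the Peierls--Bogoliubov and Golden--Thompson-type steps forces commutation. So if $\rho_{0}$ and $\rho_{1}$ do not commute, at least one step is strict, and hence so is \eqref{eq:usual-chernoff-to-modified}.
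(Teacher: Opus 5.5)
Your reduction is fine, and your intermediate target $\ln\Tr[\rho_0^s\rho_1^{1-s}]\le s\ln a+(1-s)\ln b$ is exactly what the paper proves. The gap is the middle step you flag as the main obstacle, and the mechanism you propose for it cannot work. The Gibbs principle controls $\ln\Tr e^{H}$, not $\ln\Tr[\rho e^{H}]$. To turn $\ln a$ and $\ln b$ into suprema with usable integrands, you must first pass through Golden--Thompson or Peierls--Bogoliubov, which gives $\ln\Tr[\rho e^{H}]\ge\Tr[\omega H]-D(\omega\|\rho)$. Averaging the two such bounds at a common $\omega$ does cancel the $B$-terms. What remains is $-sD(\omega\|\rho_0)-(1-s)D(\omega\|\rho_1)$, whose supremum over $\omega$ is $\ln\Tr[e^{s\ln\rho_0+(1-s)\ln\rho_1}]$.

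That is exactly the Golden--Thompson bound you already rejected, since $\ln\rho_0+(1-s)B=\ln\rho_1-sB=s\ln\rho_0+(1-s)\ln\rho_1$. By the equality case of Golden--Thompson, it is strictly below $\ln\Tr[\rho_0^s\rho_1^{1-s}]$ whenever the states do not commute, so no choice of trial state can close the argument. For your specific candidate $\omega\propto\rho_0^{s/2}\rho_1^{1-s}\rho_0^{s/2}$, the hoped-for cancellation against $\ln\omega$ would need $\ln(\rho_0^{s/2}\rho_1^{1-s}\rho_0^{s/2})$ to split as $s\ln\rho_0+(1-s)\ln\rho_1$. If you instead use the exact form $a=\Tr e^{H_a}$ with $H_a=\ln(\rho_0^{1/2}e^{(1-s)B}\rho_0^{1/2})$, it would need to match $sH_a+(1-s)H_b$. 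Either requirement fails precisely in the noncommuting case.

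Your fallback does not close either. The function $t\mapsto\ln\Tr[e^{tB}\rho_0]$ sees only the diagonal of $\rho_0$ in the eigenbasis of $B$, whereas $b$ involves the diagonal of $\rho_1$. Also, $\rho_1=e^{\ln\rho_0+B}$ is not obtained from $\rho_0$ by conjugating with powers of $e^{B}$ unless the states commute, and the Petz quantity is not a value of this function. Consequently your strictness argument, which tracks equality cases of these steps, has nothing to attach to.

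The missing idea is that $e^{(1-s)B}$ and $e^{-sB}$ are functions of the single operator $B=\sum_jh_j|\phi_j\rangle\!\langle\phi_j|$. Hence $a=\sum_jp_je^{(1-s)h_j}$ and $b=\sum_jq_je^{-sh_j}$, where $p_j=\langle\phi_j|\rho_0|\phi_j\rangle$ and $q_j=\langle\phi_j|\rho_1|\phi_j\rangle$. H\"older's inequality then gives
\[
\sum_jp_j^sq_j^{1-s}=\sum_j\bigl(p_je^{(1-s)h_j}\bigr)^{s}\bigl(q_je^{-sh_j}\bigr)^{1-s}\le a^{s}b^{1-s}.
\]
This is the classical version of your common-trial-state idea, where it does work. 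Data processing for the Petz--R\'enyi divergence, under the measurement in $\{|\phi_j\rangle\}$, then gives $\Tr[\rho_0^s\rho_1^{1-s}]\le\sum_jp_j^sq_j^{1-s}$.

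This is the paper's proof. There it is phrased through Berta et al.'s variational formula for the measured R\'enyi divergence with $\omega=e^{(1-s)B}$, so that $\omega^{s/(s-1)}=e^{-sB}$. Strictness comes from the data-processing step, not from equality cases of Gibbs or Peierls--Bogoliubov. For noncommuting positive definite states, $D_s(\rho_0\|\rho_1)>D_s^{M}(\rho_0\|\rho_1)$ by Lemma~\ref{lem:strict-ineq-petz-measured}. That lemma is proved from the Araki--Lieb--Thirring equality conditions for $s\in[1/2,1)$, and by swapping the two states for $s\in(0,1/2]$.
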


\begin{proof}
See Appendix~\ref{app:usual-chernoff-to-modified}.
\end{proof}
Recall the following bound from \cite[Theorem~1]{Audenaert2007}:
\begin{equation}
p_{\mathrm{Hel}}^{e}\leq p_{0}^{s}p_{1}^{1-s}e^{-C(\rho_{0}\|\rho_{1})}.\label{eq:q-chernoff-bnd}
\end{equation}
Ignoring prefactors, Lemma~\ref{lem:usual-chernoff-to-modified} indicates
that the bound in~\eqref{eq:q-chernoff-bnd} is generally a stronger
upper bound on the error probability of symmetric hypothesis testing
than is the bound in~\eqref{eq:err-prob-bnd-QHT-FD-unusal-Chernoff},
and the distinction between them becomes more clear in the asymptotic
scenario discussed in the next section.

\subsubsection{Asymptotic performance of Fermi--Dirac thermal measurements}

\label{subsec:Asymptotic-performance-FD-QHT}

In this section, we discuss the performance of Fermi--Dirac thermal
measurements in asymptotic hypothesis testing and establish an upper
bound on the error probability when doing so. We also mention how
Fermi--Dirac thermal measurements for i.i.d.~states can be realized
as a product measurement followed by classical postprocessing.

To begin with, note that the optimal error probability for symmetric
hypothesis testing, when given access to $n$ samples of the unknown
state, is equal to
\begin{equation}
p_{\mathrm{Hel},n}^{e}\coloneqq\frac{1}{2}\left(1-\left\Vert p_{1}\rho_{1}^{\otimes n}-p_{0}\rho_{0}^{\otimes n}\right\Vert _{1}\right),
\end{equation}
which follows simply by substituting $\rho_{x}$ with $\rho_{x}^{\otimes n}$
in~\eqref{eq:helstrom-error-prob}, for all $x\in\left\{ 0,1\right\} $.
Following the same reasoning after~\eqref{eq:helstrom-error-prob},
an optimal measurement operator achieving this error probability is
given by the projection onto the positive eigenspace of $p_{1}\rho_{1}^{\otimes n}-p_{0}\rho_{0}^{\otimes n}$.
In general, implementing the corresponding measurement requires a
collective strategy, which cannot be realized by means of a product
measurement followed by classical postprocessing. The optimal asymptotic
error exponent is given by the quantum Chernoff divergence~\cite{Audenaert2007,Nussbaum2009}:
\begin{equation}
\lim_{n\to\infty}-\frac{1}{n}\ln p_{\mathrm{Hel},n}^{e}=C(\rho_{0}\|\rho_{1}),
\end{equation}
where $C(\rho_{0}\|\rho_{1})$ is defined in~\eqref{eq:chernoff-div}.

Let us now consider the performance of the Fermi--Dirac thermal measurement
for symmetric hypothesis testing. Defining
\begin{align}
A^{(n)} & \coloneqq\ln\!\left(p_{1}\rho_{1}^{\otimes n}\right)-\ln\!\left(p_{0}\rho_{0}^{\otimes n}\right)\\
 & =\ln\!\left(\frac{p_{1}}{p_{0}}\right)+\sum_{i=1}^{n}\left(\ln\rho_{1}^{(i)}-\ln\rho_{0}^{(i)}\right),\label{eq:additive-form-A-n}
\end{align}
this measurement is as follows:
\begin{equation}
\left(\left(e^{A^{(n)}}+I^{\otimes n}\right)^{-1},\left(e^{-A^{(n)}}+I^{\otimes n}\right)^{-1}\right),\label{eq:n-fold-FD-meas}
\end{equation}
which follows by making the substitution $A\to A^{(n)}$ in~\eqref{eq:FD-thermal-meas-A}.
Then the error probability, when using this measurement for symmetric
hypothesis testing, is as follows:
\begin{multline}
p_{\mathrm{FD},n}^{e}\coloneqq\Tr\!\left[\left(e^{-A^{(n)}}+I\right)^{-1}\left(p_{0}\rho_{0}^{\otimes n}\right)\right]\\
+\Tr\!\left[\left(e^{A^{(n)}}+I\right)^{-1}\left(p_{1}\rho_{1}^{\otimes n}\right)\right].\label{eq:FD-finite-n-err-prob}
\end{multline}

We can then make the following conclusion when using the Fermi--Dirac
measurement in asymptotic quantum hypothesis testing:
\begin{prop}
\label{prop:FD-asymptotic-QHT}Fix $n\in\mathbb{N}$. The error probability
in $n$-shot symmetric quantum hypothesis testing, when using the
measurement in~\eqref{eq:n-fold-FD-meas}, is bounded as follows for
all $s\in\left[0,1\right]$:
\begin{multline}
p_{\mathrm{FD},n}^{e}\leq2g(s,p_{0})e^{-n\min\left\{ C_{s}^{\natural}(\rho_{0}\|\rho_{1}),C_{1-s}^{\natural}(\rho_{1}\|\rho_{0})\right\} },
\end{multline}
where $p_{\mathrm{FD},n}^{e}$ is defined in~\eqref{eq:FD-finite-n-err-prob},
$g(s,p_{0})$ in~\eqref{eq:g-s-func}, and $C_{s}^{\natural}(\rho_{0}\|\rho_{1})$
in~\eqref{eq:unopt-unusal-chernoff}.
\end{prop}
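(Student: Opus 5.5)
Apply Lemma~\ref{lem:one-shot-FD-err-bnd} to the $n$-fold hypothesis testing problem. Then show that the modified Chernoff quantities $C_s^{\natural}$ are additive under tensor products.

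Concretely, the $n$-shot problem is an instance of the one-shot problem with states $\rho_0^{\otimes n}$ and $\rho_1^{\otimes n}$ and the same priors $p_0,p_1$. These tensor-power states are positive definite, since $\rho_0,\rho_1$ are. The operator $A^{(n)}$ is exactly $\ln(p_1\rho_1^{\otimes n})-\ln(p_0\rho_0^{\otimes n})$. Hence the measurement in~\eqref{eq:n-fold-FD-meas} is the Fermi--Dirac thermal measurement~\eqref{eq:FD-thermal-meas-A} for this instance, and $p^e_{\mathrm{FD},n}$ is the quantity~\eqref{eq:FD-error-prob-QHT} for it. Lemma~\ref{lem:one-shot-FD-err-bnd} therefore gives, for all $s\in[0,1]$,
\[
p^e_{\mathrm{FD},n}\leq 2g(s,p_0)\,e^{-\min\{C_s^{\natural}(\rho_0^{\otimes n}\|\rho_1^{\otimes n}),\,C_{1-s}^{\natural}(\rho_1^{\otimes n}\|\rho_0^{\otimes n})\}}.
\]
The prefactor $g(s,p_0)$ is unchanged because the priors are unchanged.

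It remains to show $C_s^{\natural}(\rho_0^{\otimes n}\|\rho_1^{\otimes n})=nC_s^{\natural}(\rho_0\|\rho_1)$, and likewise with the roles of $\rho_0,\rho_1$ swapped. We use $\ln(\rho^{\otimes n})=\sum_{i=1}^n\ln\rho^{(i)}$, where $\ln\rho^{(i)}$ acts on the $i$th factor, as in~\eqref{eq:additive-form-A-n}. This gives
\[
(1-s)\bigl(\ln\rho_1^{\otimes n}-\ln\rho_0^{\otimes n}\bigr)=\sum_{i=1}^n B^{(i)},\qquad B\coloneqq(1-s)(\ln\rho_1-\ln\rho_0).
\]
The summands $B^{(i)}$ act on distinct tensor factors, so they commute pairwise. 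Hence $e^{\sum_i B^{(i)}}=(e^{B})^{\otimes n}$. Multiplying by $\rho_0^{\otimes n}$ and taking the trace, the trace factorizes as $\bigl(\Tr[e^{B}\rho_0]\bigr)^n$, and taking $-\ln$ gives the factor $n$. The same argument applies to $C^{\natural}_{1-s}(\rho_1^{\otimes n}\|\rho_0^{\otimes n})$. Finally, $\min\{na,nb\}=n\min\{a,b\}$ since $n>0$, which yields the claimed bound.

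There is no real obstacle; the only point needing care is the additivity step. There, the mutual commutativity of the local terms $B^{(i)}$ is what justifies splitting the exponential of a sum into a tensor product. This is precisely why the modified quantity $C^{\natural}$, rather than any Golden--Thompson-type inequality, appears in the statement.
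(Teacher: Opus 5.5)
Your proposal is correct and matches the paper's proof: apply Lemma~\ref{lem:one-shot-FD-err-bnd} to $\rho_0^{\otimes n},\rho_1^{\otimes n}$ with the priors unchanged, then use additivity of $C_s^{\natural}$ under tensor products and pull $n$ out of the minimum. The paper gets the additivity from Lemma~\ref{lem:additivity-unopt-chernoff}, which it proves by the same factorization of the exponential of commuting local terms. The only difference is cosmetic: you prove the $n$-fold additivity directly, whereas the paper states it for two factors and iterates.
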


\begin{proof}
This is a direct consequence of Lemmas~\ref{lem:one-shot-FD-err-bnd}
and~\ref{lem:additivity-unopt-chernoff}, the latter establishing
additivity of $C_{s}^{\natural}(\rho_{0}\|\rho_{1})$. See Appendix
\ref{app:FD-asymptotic-QHT}.
\end{proof}
A direct consequence of the bound in Proposition~\ref{prop:FD-asymptotic-QHT}
is as follows:
\begin{cor}
When using Fermi--Dirac thermal measurements, the following lower
bound on the error exponent holds:
\begin{multline}
\lim_{n\to\infty}-\frac{1}{n}\ln p_{\mathrm{FD},n}^{e}\geq\\
\sup_{s\in\left(0,1\right)}\min\left\{ C_{s}^{\natural}(\rho_{0}\|\rho_{1}),C_{1-s}^{\natural}(\rho_{1}\|\rho_{0})\right\} ,
\end{multline}
where $p_{\mathrm{FD},n}^{e}$ is defined in~\eqref{eq:FD-finite-n-err-prob}
and $C_{s}^{\natural}(\rho_{0}\|\rho_{1})$ in~\eqref{eq:unopt-unusal-chernoff}.
\end{cor}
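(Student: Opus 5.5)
The corollary follows from Proposition~\ref{prop:FD-asymptotic-QHT} by taking logarithms and then optimizing over $s$. Fix $s\in(0,1)$ and write
\begin{equation}
m_s\coloneqq\min\left\{ C_{s}^{\natural}(\rho_{0}\|\rho_{1}),C_{1-s}^{\natural}(\rho_{1}\|\rho_{0})\right\}.
\end{equation}
Proposition~\ref{prop:FD-asymptotic-QHT} gives, for every $n\in\mathbb{N}$,
\begin{equation}
-\frac{1}{n}\ln p_{\mathrm{FD},n}^{e}\geq m_s-\frac{1}{n}\ln\!\left(2g(s,p_{0})\right).
\end{equation}
For $s\in(0,1)$ and $p_0,p_1>0$, the prefactor $g(s,p_0)=(sp_0)^s((1-s)p_1)^{1-s}$ is a strictly positive constant independent of $n$. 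Hence $\frac{1}{n}\ln(2g(s,p_0))\to0$, and taking $\liminf_{n\to\infty}$ gives
\begin{equation}
\liminf_{n\to\infty}-\frac{1}{n}\ln p_{\mathrm{FD},n}^{e}\geq m_s.
\end{equation}

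Since the left-hand side does not depend on $s$, I then take the supremum over $s\in(0,1)$ on the right, which yields the claimed bound. The open interval is used deliberately. At the endpoints $s\in\{0,1\}$ one factor of $g$ would be of the form $0^0=1$, which is harmless, but restricting to $(0,1)$ matches the statement and avoids the issue entirely.

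The one point needing care is that the statement is written with $\lim$ rather than $\liminf$. I would interpret the inequality as holding for the $\liminf$, or equivalently as holding whenever the limit exists, since a lower bound on the $\liminf$ implies the same bound on the limit when it exists. I would add a brief remark saying so.

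There is no genuine obstacle: Proposition~\ref{prop:FD-asymptotic-QHT} has already done the work, including the additivity of $C_{s}^{\natural}$ under tensor powers that it invokes from Lemma~\ref{lem:additivity-unopt-chernoff}. The corollary only needs the positivity of $g(s,p_0)$ and the standard $\liminf$ argument.
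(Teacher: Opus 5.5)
Your proposal is correct and matches the paper, which gives no separate proof and simply calls the corollary a direct consequence of Proposition~\ref{prop:FD-asymptotic-QHT}: apply $-\frac{1}{n}\ln$ to that bound, note that $2g(s,p_0)>0$ does not depend on $n$ so its contribution vanishes, and take the supremum over $s\in(0,1)$. Your remark that the inequality is properly a statement about the $\liminf$, and hence about the limit whenever it exists, is a sensible tightening of the paper's phrasing.
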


As a consequence of the form of the operator $A^{(n)}$ in~\eqref{eq:additive-form-A-n},
it follows that the measurement operators in~\eqref{eq:n-fold-FD-meas}
can be written in the following way, respectively:
\begin{align}
 & \sum_{j^{n}}\left(e^{a(j^{n})}+1\right)^{-1}|\phi_{j_{1}}\rangle\!\langle\phi_{j_{1}}|\otimes\cdots\otimes|\phi_{j_{n}}\rangle\!\langle\phi_{j_{n}}|,\label{eq:FD-n-fold-0}\\
 & \sum_{j^{n}}\left(e^{-a(j^{n})}+1\right)^{-1}|\phi_{j_{1}}\rangle\!\langle\phi_{j_{1}}|\otimes\cdots\otimes|\phi_{j_{n}}\rangle\!\langle\phi_{j_{n}}|,\label{eq:FD-n-fold-1}
\end{align}
where
\begin{align}
j^{n} & \equiv\left(j_{1},\ldots,j_{n}\right),\\
a(j^{n}) & \equiv\ln\!\left(\frac{p_{1}}{p_{0}}\right)+\sum_{i=1}^{n}h_{j_{i}},
\end{align}
and a spectral decomposition of $\ln\rho_{1}-\ln\rho_{0}$ is given
by
\begin{equation}
\ln\rho_{1}-\ln\rho_{0}=\sum_{j}h_{j}|\phi_{j}\rangle\!\langle\phi_{j}|.\label{eq:eigenbasis-for-FD}
\end{equation}

\begin{rem}
\label{rem:FD-asymp-decomp}The decomposition of the measurement operators
in~\eqref{eq:FD-n-fold-0}--\eqref{eq:FD-n-fold-1} implies that
the Fermi--Dirac thermal measurement in~\eqref{eq:n-fold-FD-meas}
can be realized by means of a product measurement followed by classical
postprocessing. See Appendix~\ref{app:Implementing-Fermi=002013Dirac-thermal-iid}
for more details.
\end{rem}

Based on the observation in Remark~\ref{rem:FD-asymp-decomp}, we
conclude the following upper bound on the error exponent when using
a Fermi--Dirac thermal measurement:
\begin{cor}
\label{cor:up-bnd-exp-FD}When using Fermi--Dirac thermal measurements,
the following upper bound on the error exponent holds:
\begin{equation}
\lim_{n\to\infty}-\frac{1}{n}\ln p_{\mathrm{FD},n}^{e}\leq C(p_{0}\|p_{1}),
\end{equation}
where $p_{\mathrm{FD},n}^{e}$ is defined in~\eqref{eq:FD-finite-n-err-prob},
$C(p_{0}\|p_{1})$ from~\eqref{eq:chernoff-div}, and $p_{0}$ and
$p_{1}$ are the following commuting (classical) states resulting
from measuring $\rho_{0}$ and $\rho_{1}$ in the eigenbasis defined
in~\eqref{eq:eigenbasis-for-FD}:
\begin{align}
p_{0} & \coloneqq\sum_{j}\langle\phi_{j}|\rho_{0}|\phi_{j}\rangle\,|\phi_{j}\rangle\!\langle\phi_{j}|,\\
p_{1} & \coloneqq\sum_{j}\langle\phi_{j}|\rho_{1}|\phi_{j}\rangle\,|\phi_{j}\rangle\!\langle\phi_{j}|.
\end{align}
\end{cor}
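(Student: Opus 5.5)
The plan is to realize the $n$-shot Fermi--Dirac measurement as a product measurement followed by classical postprocessing, as in Remark~\ref{rem:FD-asymp-decomp}, and then appeal to optimality of the classical Chernoff exponent. By \eqref{eq:FD-n-fold-0}--\eqref{eq:FD-n-fold-1}, both measurement operators in \eqref{eq:n-fold-FD-meas} are diagonal in the product basis $\{|\phi_{j_1}\rangle\otimes\cdots\otimes|\phi_{j_n}\rangle\}$. Hence the outcome statistics are obtained in two stages. First, measure each of the $n$ systems in the eigenbasis $\{|\phi_j\rangle\}$ of $\ln\rho_1-\ln\rho_0$, producing a sequence $j^n$. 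Second, output ``0'' with probability $(e^{a(j^n)}+1)^{-1}$ and ``1'' otherwise.

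Concretely, I will show that
\begin{equation}
\Tr\!\left[\left(e^{-A^{(n)}}+I\right)^{-1}\rho_0^{\otimes n}\right]=\sum_{j^n}p_0^{\otimes n}(j^n)\left(e^{-a(j^n)}+1\right)^{-1},
\end{equation}
where $p_0(j)=\langle\phi_j|\rho_0|\phi_j\rangle$. The same identity holds for the other term with $\rho_1$ and $p_1$. The identity follows by inserting the diagonal form \eqref{eq:FD-n-fold-1} and noting that $\langle\phi_{j_1}\cdots\phi_{j_n}|\rho_0^{\otimes n}|\phi_{j_1}\cdots\phi_{j_n}\rangle=\prod_i p_0(j_i)$. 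Consequently, $p^e_{\mathrm{FD},n}$ equals the error probability of a particular randomized classical decision rule for distinguishing the i.i.d.\ distributions $p_0^{\otimes n}$ and $p_1^{\otimes n}$ with priors $p_0,p_1$. (The notation clashes between priors and distributions; I will write $\pi_0,\pi_1$ for the priors in the proof.)

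Since the optimal classical (Helstrom) test minimizes error over all decision rules, including randomized ones, we get
\begin{equation}
p^e_{\mathrm{FD},n}\geq \tfrac12\left(1-\left\|\pi_1 p_1^{\otimes n}-\pi_0 p_0^{\otimes n}\right\|_1\right).
\end{equation}
The right-hand side decays with exponent exactly $C(p_0\|p_1)$, by the classical Chernoff theorem, which is the commuting special case of the result of \cite{Audenaert2007,Nussbaum2009}. Taking $-\frac1n\ln$ and the limit then gives the claim; if the limit is not known to exist, the argument works with $\limsup$.

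The main step is the reduction to classical statistics, and it is essentially immediate from the spectral decomposition because all $\ln\rho_1^{(i)}-\ln\rho_0^{(i)}$ commute and are simultaneously diagonalized by the product basis. One should also check that $p_0,p_1$ have full support, which holds since $\rho_0,\rho_1>0$, so that the classical Chernoff divergence is finite and the priors enter only as subexponential factors.
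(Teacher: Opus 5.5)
Your proposal is correct and follows the paper's own argument. The paper likewise uses Remark~\ref{rem:FD-asymp-decomp} to view the $n$-shot Fermi--Dirac measurement as a product measurement in the eigenbasis of $\ln\rho_1-\ln\rho_0$ followed by classical postprocessing, and then invokes the optimal classical Chernoff exponent for the induced i.i.d.\ distributions; your explicit identity for the outcome statistics, the observation that randomized tests cannot beat the classical Helstrom error, and the $\limsup$ caveat just fill in details the paper leaves implicit.
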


\begin{proof}
This follows from Remark~\ref{rem:FD-asymp-decomp} and the optimal
error exponent for the classical case~\cite{Chernoff1952Measure}.
Indeed, the Fermi--Dirac thermal measurement begins with a product
measurement in the eigenbasis defined in~\eqref{eq:eigenbasis-for-FD}.
This induces classical product distributions, which are then subject
to the optimal limits from the classical case.
\end{proof}
We note here that, if desired, one can arrive at a non-asymptotic
upper bound on the error exponent by making use of \cite[Remark~16]{Ji2026}.
\begin{rem}
\label{rem:FD-not-PG}We finally conclude that the upper bound in
Corollary~\ref{cor:up-bnd-exp-FD} is strictly less than the quantum
Chernoff divergence $C(\rho_{0}\|\rho_{1})$ if and only if the states
do not commute, a direct consequence of Lemma~\ref{lem:strict-ineq-petz-measured}
in Appendix~\ref{app:usual-chernoff-to-modified}. As such, although
one might hope for a general bound similar to that in~\eqref{eq:PG-to-opt},
but for the Fermi--Dirac thermal measurement instead of the pretty
good measurement, this is impossible. For if there were, then the
Fermi--Dirac thermal measurement would achieve the optimal quantum
Chernoff exponent. However, Corollary~\ref{cor:up-bnd-exp-FD} excludes
this possibility.
\end{rem}

\section{Conclusion}

\label{sec:Conclusion}

In summary, following the framework of~\cite{Bai2025}, our main contribution
is to construct a solution -- i.e., an optimal measurement -- for a minimum change principle for quantum relative entropy
in the setting of quantum statistical inference (Theorem~\ref{thm:main}). 
Here, the forward process is classical to quantum (typically called
``preparation''), while the reverse process is quantum to classical
(measurement). By doing so, we found that the optimal measurement
has a novel form, as given in~\eqref{eq:optimal-measurement}, and
reduces to known measurements and a novel softmin thermal measurement
in~\eqref{eq:q-softmax-meas}. We then considered this measurement
for the special cases of thermal states, linking it to known classical
decision rules such as softmin and sigmoid. We also showed how the
softmin thermal measurement arises in a different context of entropic-regularized
semidefinite optimization, extending our earlier developments in~\cite{liu2026}.
Finally, we connected the minimum change principle in Theorem~\ref{thm:main}
to quantum information theory in two different ways, establishing
an additivity property for it and exploring the performance of Fermi--Dirac
thermal measurements in quantum hypothesis testing.

Going forward, we suspect that the softmin thermal measurement will
have applications in multiclassification tasks in quantum machine
learning, similar to how the Fermi--Dirac thermal measurement was
shown to have applications in binary classification~\cite{he2026fermidiracmachines,he2026canonical}.
However, a key obstacle to overcome is to determine how to implement
them as a quantum algorithm. This was accomplished in~\cite{liu2026}
and explored further in~\cite{he2026fermidiracmachines,he2026canonical},
but it remains unclear how to do so due to the presence of the operator
$A^{\star}$ in~\eqref{eq:q-softmax-meas}.

This minimum change principle for quantum relative entropy demonstrates how well-known measurements like pretty good measurements and other measurements arise from different choices of $\tau$, suggesting a variational interpretation for $\tau$. Thus, an important question is whether other choices of $\tau$ can lead to other meaningful measurements.

A particularly natural interpolation is to choose the reference state $\tau=e^{-\beta H}/\operatorname{Tr}[e^{-\beta H}]$, for some Hamiltonian $H$. For instance, when $[H, \sigma_x]=[\sigma_x, \sigma_{x'}]=0$ for all $x, x'$, then the optimal measurement reproduces the classical Bayes theorem (Corollary~\ref{cor:classical-case}). With the choice $H=-\ln \sigma$, then $\tau = \sigma^{\beta}/\operatorname{Tr}[\sigma^\beta]$, where we think of $\beta \geq 0 $ as an inverse temperature parameter that interpolates between the maximally mixed state at $\beta = 0$ (infinite temperature) and the state $\sigma$ in~\eqref{eq:average-sigma-state} at $\beta = 1$. The resulting measurement in~\eqref{eq:optimal-meas} then interpolates between the softmin thermal measurement at $\beta = 0$ (Corollary~\ref{cor:softmax}) and the pretty good measurement at $\beta = 1$ (Corollary~\ref{cor:pretty-good}). In this context, it is interesting to determine the performance of the resulting measurement in symmetric hypothesis testing, in comparison to the optimal measurement. As noted in~\eqref{eq:PG-to-opt}, the error probability of the pretty good measurement is no larger than twice that of the optimal measurement, and we wonder how this bound generalizes as a function of $\beta$, for the aforementioned interpolation choice. A future analysis on how $M^\star$ is modified due to small deviations in $\tau$ could also add insight into the landscape of different optimal measurements that is interpolated by $\tau$.

Finally, we suspect that our additivity result in Theorem~\ref{thm:additivity} has implications for an operational interpretation of the minimum change principle in~\eqref{eq:minimal-change-I} in a context of quantum hypothesis testing different from that already presented in Section~\ref{subsec:Asymptotic-performance-FD-QHT}. Namely, we think it should be relevant in the context of the quantum Sanov theorem~\cite{Bjelakovic2005QuantumSanov,Bjelakovic2008TypicalSupport, Noetzel2014QuantumSanov,Lami2026Asymptotic,Lami2025GeneralisedSanov}, in which the null hypothesis has a non-i.i.d.~structure but the alternative hypothesis is i.i.d.~(see especially \cite[Theorem~14]{Lami2026Asymptotic} here). In this setting, we expect for our additivity result to imply that the optimal error exponent simplifies considerably (i.e., be ``single-letter'' in the parlance of information theory). We leave this direction of inquiry open for future work.

\medskip{}

\textit{Note on independent work}---After completing the results
in Section~\ref{subsec:Fermi=002013Dirac-thermal-measurements-QHT}
of our paper, we noticed~\cite{Meunson2026} posted to the arXiv,
in which the quantity $C_{s}^{\natural}(\rho_{0}\|\rho_{1})$ was
defined and some of its properties established.

\section*{Acknowledgements}

We thank Ludovico Lami for a helpful discussion related to Remark
\ref{rem:FD-not-PG}. We also thank Mil\'an Mosonyi for pointing us
to \cite[Theorem~4.18]{Hiai2017Different} and \cite[Remark~III.12]{Hiai2023TestMeasured}.

NL acknowledges funding from the Science and Technology Commission
of Shanghai Municipality (STCSM) grant no.~24LZ1401200 (21JC1402900),
NSFC grants no.~12471411 and no.~12341104, the Shanghai Jiao Tong
University 2030 Initiative, the Shanghai Pilot Program for Basic Research,
and the Fundamental Research Funds for the Central Universities. MMW
acknowledges support from the National Science Foundation under grant
nos.~2329662 and 2611810.

% \bibliographystyle{unsrturl}
% \phantomsection\addcontentsline{toc}{section}{\refname}

\bibliography{Ref}

\appendix
\onecolumngrid
\large

\section{Alternative minimum change principle in Equation~\eqref{eq:minimal-change-II}}

\label{app:Alternative-minimum-change}

In the main text, we focused on the minimum change principle based
on minimizing $D(Q_{\mathrm{rev}}(M_{\mathcal{X}})\|Q_{\text{fwd}})$,
because it admits a simple dual formulation and an explicit characterization
of the optimal measurement. In this appendix, we briefly examine the
alternative ordering $D(Q_{\text{fwd}}\|Q_{\mathrm{rev}}(M_{\mathcal{X}}))$.
Although this optimization is well motivated and recovers the pretty
good measurement in an important special case, it appears to have
a substantially more complicated mathematical structure.

For convenience, let us recall Eq.~\eqref{eq:minimal-change-II}
here:
\begin{equation}
\min_{M_{\mathcal{X}}}D(Q_{\text{fwd}}\|Q_{\mathrm{rev}}(M_{\mathcal{X}}))=\min_{M_{\mathcal{X}}}\sum_{x\in\mathcal{X}}D\!\left(\sigma_{x}\middle\|\tau^{\frac{1}{2}}M_{x}\tau^{\frac{1}{2}}\right).\label{eq:recall-alt-MCP}
\end{equation}
To begin with, let us note that, when the reference state $\tau=\sigma$,
as defined in~\eqref{eq:sigma-sub-state-def}, the optimal measurement
is the pretty good measurement. This was already noted in~\cite{Bai2025},
and the proof is the same as the first part of the proof of Corollary
\ref{cor:pretty-good}. It was also noted in~\cite{Bai2025} that,
in the commuting case, the optimal reversal channel is the Bayes reversal
(we omit proving that explicitly here).

In general, since the relative entropy is convex in its arguments,
it follows that one can use relative entropy optimization methods
\cite{FawziSaunderson2023Optimal,He2025,He2025a,KossmannSchwonnek2026Optimising}
to find the optimal value of~\eqref{eq:minimal-change-II}, as well
as an optimal measurement. 

However, it is less clear whether one can solve this problem in the
way that we have done so in Theorem~\ref{thm:main}. To see this point,
let us consider a simple case in which the forward channel features
just two states. Then the optimization in~\eqref{eq:recall-alt-MCP}
reduces to
\begin{align}
 & \min_{M:0\leq M\leq I}\left\{ D\!\left(\sigma_{0}\middle\|\tau^{\frac{1}{2}}M\tau^{\frac{1}{2}}\right)+D\!\left(\sigma_{1}\middle\|\tau^{\frac{1}{2}}\left(I-M\right)\tau^{\frac{1}{2}}\right)\right\} \nonumber \\
 & =\min_{M:0\leq M\leq I}\left\{ -S(\sigma_{0})-\Tr\!\left[\sigma_{0}\ln\!\left(\tau^{\frac{1}{2}}M\tau^{\frac{1}{2}}\right)\right]-S(\sigma_{1})-\Tr\!\left[\sigma_{1}\ln\!\left(\tau^{\frac{1}{2}}\left(I-M\right)\tau^{\frac{1}{2}}\right)\right]\right\} \\
 & =-S(\sigma_{0})-S(\sigma_{1})-\max_{M:0\leq M\leq I}\left\{ \Tr\!\left[\sigma_{0}\ln\!\left(\tau^{\frac{1}{2}}M\tau^{\frac{1}{2}}\right)\right]+\Tr\!\left[\sigma_{1}\ln\!\left(\tau^{\frac{1}{2}}\left(I-M\right)\tau^{\frac{1}{2}}\right)\right]\right\} ,
\end{align}
where $S(\omega)\coloneqq-\Tr\!\left[\omega\ln\omega\right]$. For
simplicity, let us just set $\tau=I/d$. Then
\begin{align}
\Tr\!\left[\sigma_{0}\ln\!\left(\tau^{\frac{1}{2}}M\tau^{\frac{1}{2}}\right)\right]+\Tr\!\left[\sigma_{1}\ln\!\left(\tau^{\frac{1}{2}}\left(I-M\right)\tau^{\frac{1}{2}}\right)\right] & =\Tr\!\left[\sigma_{0}\ln\!\left(M/d\right)\right]+\Tr\!\left[\sigma_{1}\ln\!\left(\left(I-M\right)/d\right)\right]\\
 & =\Tr\!\left[\sigma_{0}\ln M\right]+\Tr\!\left[\sigma_{1}\ln\!\left(I-M\right)\right]-\ln d.
\end{align}
Taking the matrix derivative of the last line with respect to $M$,
we find that
\begin{equation}
\frac{\partial}{\partial M}\left(\Tr\!\left[\sigma_{0}\ln M\right]+\Tr\!\left[\sigma_{1}\ln\!\left(I-M\right)\right]\right)=D\ln(M)[\sigma_{0}]-D\ln(I-M)[\sigma_{1}],
\end{equation}
where the notation $D\ln(M)[\sigma_{0}]$ and $D\ln(I-M)[\sigma_{1}]$
denotes the Fr\'echet derivative. Setting this matrix gradient equal
to zero gives the following nonlinear equation in $M$:
\begin{equation}
D\ln(M)[\sigma_{0}]=D\ln(I-M)[\sigma_{1}].
\end{equation}
This is equivalent to
\begin{equation}
\int_{0}^{\infty}ds\left(M+sI\right)^{-1}\sigma_{0}\left(M+sI\right)^{-1}=\int_{0}^{\infty}ds\left(I-M+sI\right)^{-1}\sigma_{1}\left(I-M+sI\right)^{-1},
\end{equation}
and it is unclear how to solve this equation for $M$ in general.
We remark here that the issue is similar to that encountered when
trying to optimize the measured relative entropy of states, as discussed
in \cite[Section~2.1]{Sreekumar2026performance}.

\section{Proof of Theorem~\ref{thm:main}}

\label{app:Proof-of-main-theorem}

Let $F_{x}\geq0$ and $G_{x}>0$ for all $x\in\mathcal{X}$. The direct-sum
property of the quantum relative entropy is as follows:
\begin{equation}
D\!\left(\sum_{x\in\mathcal{X}}F_{x}\otimes|x\rangle\!\langle x|\middle\|\sum_{x\in\mathcal{X}}G_{x}\otimes|x\rangle\!\langle x|\right)=\sum_{x\in\mathcal{X}}D\!\left(F_{x}\middle\|G_{x}\right).\label{eq:direct-sum-rel-ent}
\end{equation}
By employing~\eqref{eq:direct-sum-rel-ent}, we find that the minimum
change principles in~\eqref{eq:minimal-change-I} and~\eqref{eq:minimal-change-II}
can be rewritten as follows:
\begin{align}
D(Q_{\mathrm{rev}}(M_{\mathcal{X}})\|Q_{\text{fwd}}) & =D\!\left(\sum_{x\in\mathcal{X}}\tau^{\frac{1}{2}}M_{x}\tau^{\frac{1}{2}}\otimes|x\rangle\!\langle x|\middle\|\sum_{x\in\mathcal{X}}\sigma_{x}\otimes|x\rangle\!\langle x|\right)\\
 & =\sum_{x\in\mathcal{X}}D\!\left(\tau^{\frac{1}{2}}M_{x}\tau^{\frac{1}{2}}\middle\|\sigma_{x}\right),
\end{align}
as well as
\begin{equation}
D(Q_{\text{fwd}}\|Q_{\mathrm{rev}}(M_{\mathcal{X}}))=\sum_{x\in\mathcal{X}}D\!\left(\sigma_{x}\middle\|\tau^{\frac{1}{2}}M_{x}\tau^{\frac{1}{2}}\right).
\end{equation}
Under the assumption that $\tau>0$, we can perform the substitution
\begin{equation}
N_{x}=\tau^{\frac{1}{2}}M_{x}\tau^{\frac{1}{2}}\label{eq:Nx-sub}
\end{equation}
for all $x\in\mathcal{X}$ and observe that
\begin{equation}
\sum_{x\in\mathcal{X}}M_{x}=I\quad\Leftrightarrow\quad\sum_{x\in\mathcal{X}}N_{x}=\tau
\end{equation}
to rewrite the optimization problems in~\eqref{eq:minimal-change-I}
and~\eqref{eq:minimal-change-II} as follows:
\begin{align}
\min_{M_{\mathcal{X}}}D(Q_{\mathrm{rev}}(M_{\mathcal{X}})\|Q_{\text{fwd}}) & =\min_{N_{x}\geq0\,\forall x}\left\{ \sum_{x\in\mathcal{X}}D\!\left(N_{x}\middle\|\sigma_{x}\right):\sum_{x\in\mathcal{X}}N_{x}=\tau\right\} ,\label{eq:minimal-change-opt-I}\\
\min_{M_{\mathcal{X}}}D(Q_{\text{fwd}}\|Q_{\mathrm{rev}}(M_{\mathcal{X}})) & =\min_{N_{x}\geq0\,\forall x}\left\{ \sum_{x\in\mathcal{X}}D\!\left(\sigma_{x}\middle\|N_{x}\right):\sum_{x\in\mathcal{X}}N_{x}=\tau\right\} .
\end{align}

Let us define the generalized relative entropy of $F\geq0$ and $G>0$
as 
\begin{equation}
\widetilde{D}\!\left(F\middle\|G\right)\coloneqq\Tr\!\left[F\left(\ln F-\ln G\right)\right]+\Tr\!\left[G\right]-\Tr\!\left[F\right],\label{eq:gen-rel-ent}
\end{equation}
which has the following faithfulness property \cite[Appendix~B]{Falk1970}:
\begin{equation}
\widetilde{D}\!\left(F\middle\|G\right)=0\qquad\Leftrightarrow\qquad F=G.
\end{equation}

Consider that
\begin{align}
 & \min_{\substack{N_{x}\geq0\,\forall x,\\
\sum_{x}N_{x}=\tau
}
}\sum_{x\in\mathcal{X}}D\!\left(N_{x}\middle\|\sigma_{x}\right)\nonumber \\
 & =\min_{\substack{N_{x}\geq0\,\forall x,\\
\sum_{x}N_{x}=\tau
}
}\sum_{x\in\mathcal{X}}\widetilde{D}\!\left(N_{x}\middle\|\sigma_{x}\right)\\
 & =\min_{N_{x}\geq0\,\forall x}\left\{ \sum_{x\in\mathcal{X}}\widetilde{D}\!\left(N_{x}\middle\|\sigma_{x}\right)+\sup_{A\in\Herm}\Tr\!\left[A\left(\sum_{x\in\mathcal{X}}N_{x}-\tau\right)\right]\right\} \\
 & =\min_{N_{x}\geq0\,\forall x}\sup_{A\in\Herm}\left\{ -\Tr\!\left[A\tau\right]+\sum_{x\in\mathcal{X}}\widetilde{D}\!\left(N_{x}\middle\|\sigma_{x}\right)+\Tr\!\left[AN_{x}\right]\right\} \\
 & =\sup_{A\in\Herm}\min_{N_{x}\geq0\,\forall x}\left\{ -\Tr\!\left[A\tau\right]+\sum_{x\in\mathcal{X}}\widetilde{D}\!\left(N_{x}\middle\|\sigma_{x}\right)+\Tr\!\left[AN_{x}\right]\right\} \label{eq:minimax-step}\\
 & =\sup_{A\in\Herm}\min_{N_{x}\geq0\,\forall x}\left\{ -\Tr\!\left[A\tau\right]+\sum_{x\in\mathcal{X}}\Tr\!\left[N_{x}\left(\ln N_{x}-\left(\ln\sigma_{x}-A\right)\right)\right]+\Tr\!\left[\sigma_{x}-N_{x}\right]\right\} \\
 & =\sup_{A\in\Herm}\min_{N_{x}\geq0\,\forall x}\left\{ 1-\Tr\!\left[A\tau\right]+\sum_{x\in\mathcal{X}}\Tr\!\left[N_{x}\left(\ln N_{x}-\ln e^{\ln\sigma_{x}-A}\right)\right]-\Tr\!\left[N_{x}\right]\right\} \\
 & =\sup_{A\in\Herm}\min_{N_{x}\geq0\,\forall x}\left\{ 1-\Tr\!\left[A\tau\right]+\sum_{x\in\mathcal{X}}\widetilde{D}(N_{x}\|e^{\ln\sigma_{x}-A})-\Tr\!\left[e^{\ln\sigma_{x}-A}\right]\right\} \label{eq:gen-rel-ent-step}\\
 & =\sup_{A\in\Herm}\left\{ 1-\Tr\!\left[A\tau\right]-\sum_{x\in\mathcal{X}}\Tr\!\left[e^{\ln\sigma_{x}-A}\right]+\min_{N_{x}\geq0\,\forall x}\sum_{x\in\mathcal{X}}\widetilde{D}(N_{x}\|e^{\ln\sigma_{x}-A})\right\} \\
 & =\sup_{A\in\Herm}\left\{ 1-\Tr\!\left[A\tau\right]-\sum_{x\in\mathcal{X}}\Tr\!\left[e^{\ln\sigma_{x}-A}\right]\right\} .
\end{align}
The first equality follows because
\begin{equation}
D\!\left(N_{x}\middle\|\sigma_{x}\right)+\Tr\!\left[\sigma_{x}-N_{x}\right]=\widetilde{D}\!\left(N_{x}\middle\|\sigma_{x}\right)
\end{equation}
and $\sum_{x\in\mathcal{X}}\Tr\!\left[\sigma_{x}-N_{x}\right]=0$,
given that
\begin{align}
\sum_{x\in\mathcal{X}}\Tr\!\left[\sigma_{x}\right] & =1,\\
\sum_{x\in\mathcal{X}}\Tr\!\left[N_{x}\right] & =\Tr[\tau]=1.
\end{align}
The minimax equality in~\eqref{eq:minimax-step} follows because the
objective function is convex in $N_{x}$ and linear in $A$. It is
also a consequence of Slater's theorem, which holds because the primal
feasible set contains interior points such as $N_{x}=\tau/\left|\mathcal{X}\right|$.
The equality in~\eqref{eq:gen-rel-ent-step} follows from the identity
\begin{equation}
\Tr\!\left[N_{x}\left(\ln N_{x}-\ln e^{\ln\sigma_{x}-A}\right)\right]+\Tr\!\left[e^{\ln\sigma_{x}-A}\right]-\Tr\!\left[N_{x}\right]=\widetilde{D}(N_{x}\|e^{\ln\sigma_{x}-A}).
\end{equation}
The last equality follows from applying the faithfulness of $\widetilde{D}(N_{x}\|e^{\ln\sigma_{x}-A})$.

Observe that
\begin{equation}
\sup_{A\in\Herm}\left\{ 1-\Tr\!\left[A\tau\right]-\sum_{x\in\mathcal{X}}\Tr\!\left[e^{\ln\sigma_{x}-A}\right]\right\} =1-\inf_{A\in\Herm}\left\{ \Tr\!\left[A\tau\right]+\sum_{x\in\mathcal{X}}\Tr\!\left[e^{\ln\sigma_{x}-A}\right]\right\} ,
\end{equation}
thus establishing~\eqref{eq:dual-bayesian-min-change}.

The dual stationarity condition is that
\begin{align}
0 & =\frac{\partial}{\partial A}\left(1-\Tr\!\left[A\tau\right]-\sum_{x\in\mathcal{X}}\Tr\!\left[e^{\ln\sigma_{x}-A}\right]\right)\\
 & =-\tau+\sum_{x\in\mathcal{X}}e^{\ln\sigma_{x}-A},\label{eq:stationarity-cond}
\end{align}
which follows because
\begin{equation}
\frac{\partial}{\partial A}\Tr\!\left[e^{\ln\sigma_{x}-A}\right]=-e^{\ln\sigma_{x}-A}.\label{eq:matrix-deriv-exp}
\end{equation}
To see the equality in~\eqref{eq:matrix-deriv-exp}, define the matrix
elements of $A$ as $a_{ij}\equiv\left[A\right]_{ij}$ and consider
that
\begin{align}
\frac{\partial}{\partial a_{ji}}\Tr\!\left[e^{\ln\sigma_{x}-A}\right] & =\frac{\partial}{\partial a_{ji}}\Tr\!\left[\sum_{n=0}^{\infty}\frac{\left(\ln\sigma_{x}-A\right)^{n}}{n!}\right]\\
 & =\frac{\partial}{\partial a_{ji}}\Tr\!\left[\sum_{n=1}^{\infty}\frac{\left(\ln\sigma_{x}-A\right)^{n}}{n!}\right]\\
 & =\sum_{n=1}^{\infty}\frac{1}{n!}\Tr\!\left[\frac{\partial}{\partial a_{ji}}\left(\ln\sigma_{x}-A\right)^{n}\right]\\
 & =\sum_{n=1}^{\infty}\frac{1}{n!}\left(\begin{array}{c}
\Tr\!\left[\left(\frac{\partial}{\partial a_{ji}}\left(\ln\sigma_{x}-A\right)\right)\left(\ln\sigma_{x}-A\right)^{n-1}\right]+\\
\Tr\!\left[\left(\ln\sigma_{x}-A\right)\left(\frac{\partial}{\partial a_{ji}}\left(\ln\sigma_{x}-A\right)\right)\left(\ln\sigma_{x}-A\right)^{n-2}\right]+\cdots+\\
\Tr\!\left[\left(\ln\sigma_{x}-A\right)^{n-1}\left(\frac{\partial}{\partial a_{ji}}\left(\ln\sigma_{x}-A\right)\right)\right]
\end{array}\right)\\
 & =-\sum_{n=1}^{\infty}\frac{1}{n!}\left(\begin{array}{c}
\Tr\!\left[|j\rangle\langle i|\left(\ln\sigma_{x}-A\right)^{n-1}\right]+\\
\Tr\!\left[\left(\ln\sigma_{x}-A\right)|j\rangle\langle i|\left(\ln\sigma_{x}-A\right)^{n-2}\right]+\cdots+\\
\Tr\!\left[\left(\ln\sigma_{x}-A\right)^{n-1}|j\rangle\langle i|\right]
\end{array}\right)\\
 & =-\sum_{n=1}^{\infty}\frac{n}{n!}\langle i|\left(\ln\sigma_{x}-A\right)^{n-1}|j\rangle\\
 & =-\langle i|\left(\sum_{n=1}^{\infty}\frac{\left(\ln\sigma_{x}-A\right)^{n-1}}{n-1!}\right)|j\rangle\\
 & =-\langle i|\left(\sum_{n=0}^{\infty}\frac{\left(\ln\sigma_{x}-A\right)^{n}}{n!}\right)|j\rangle\\
 & =-\langle i|e^{\ln\sigma_{x}-A}|j\rangle.
\end{align}
The stationarity condition in~\eqref{eq:stationarity-cond} implies
that the following equality holds for the optimal $A^{\star}$:
\begin{equation}
\sum_{x\in\mathcal{X}}e^{\ln\sigma_{x}-A^{\star}}=\tau.\label{eq:constraint-optimal-povm}
\end{equation}

The optimal $A^{\star}$ is unique because the objective function
\begin{equation}
A\mapsto1-\Tr\!\left[A\tau\right]-\sum_{x\in\mathcal{X}}\Tr\!\left[e^{\ln\sigma_{x}-A}\right]
\end{equation}
is strictly concave in $A$. To see this, let us compute the second
partial derivatives of the objective function, which are given by
\begin{align}
\frac{\partial}{\partial a_{\ell k}}\frac{\partial}{\partial a_{ji}}\left(1-\Tr\!\left[A\tau\right]-\sum_{x\in\mathcal{X}}\Tr\!\left[e^{\ln\sigma_{x}-A}\right]\right) & =\frac{\partial}{\partial a_{\ell k}}\left(-\langle i|\sum_{x\in\mathcal{X}}e^{\ln\sigma_{x}-A}|j\rangle\right)\\
 & =-\sum_{x\in\mathcal{X}}\langle i|\left(\frac{\partial}{\partial a_{\ell k}}e^{\ln\sigma_{x}-A}\right)|j\rangle.
\end{align}
We find that
\begin{align}
\frac{\partial}{\partial a_{\ell k}}e^{\ln\sigma_{x}-A} & =\frac{\partial}{\partial a_{\ell k}}\sum_{n=0}^{\infty}\frac{\left(\ln\sigma_{x}-A\right)^{n}}{n!}\\
 & =\frac{\partial}{\partial a_{\ell k}}\sum_{n=1}^{\infty}\frac{\left(\ln\sigma_{x}-A\right)^{n}}{n!}\\
 & =\sum_{n=1}^{\infty}\frac{1}{n!}\frac{\partial}{\partial a_{\ell k}}\left(\ln\sigma_{x}-A\right)^{n}\\
 & =\sum_{n=1}^{\infty}\frac{1}{n!}\left(\begin{array}{c}
\left(\frac{\partial}{\partial a_{\ell k}}\left(\ln\sigma_{x}-A\right)\right)\left(\ln\sigma_{x}-A\right)^{n-1}+\\
\left(\ln\sigma_{x}-A\right)\left(\frac{\partial}{\partial a_{\ell k}}\left(\ln\sigma_{x}-A\right)\right)\left(\ln\sigma_{x}-A\right)^{n-2}+\cdots+\\
\left(\ln\sigma_{x}-A\right)^{n-1}\left(\frac{\partial}{\partial a_{\ell k}}\left(\ln\sigma_{x}-A\right)\right)
\end{array}\right)\\
 & =-\sum_{n=1}^{\infty}\frac{1}{n!}\left(\begin{array}{c}
|\ell\rangle\langle k|\left(\ln\sigma_{x}-A\right)^{n-1}+\\
\left(\ln\sigma_{x}-A\right)|\ell\rangle\langle k|\left(\ln\sigma_{x}-A\right)^{n-2}+\cdots+\\
\left(\ln\sigma_{x}-A\right)^{n-1}|\ell\rangle\langle k|.
\end{array}\right)\\
 & =-\sum_{n=1}^{\infty}\frac{1}{n!}\sum_{m=0}^{n-1}\left(\ln\sigma_{x}-A\right)^{m}|\ell\rangle\langle k|\left(\ln\sigma_{x}-A\right)^{n-m-1}.
\end{align}
Now let $\sum_{r}\lambda_{r}\Pi_{r}$ be the spectral decomposition
of $\ln\sigma_{x}-A$:
\begin{equation}
\ln\sigma_{x}-A=\sum_{r}\lambda_{r}\Pi_{r}.
\end{equation}
Then
\begin{align}
 & -\sum_{n=1}^{\infty}\frac{1}{n!}\sum_{m=0}^{n-1}\left(\ln\sigma_{x}-A\right)^{m}|\ell\rangle\langle k|\left(\ln\sigma_{x}-A\right)^{n-m-1}\nonumber \\
 & =-\sum_{n=1}^{\infty}\frac{1}{n!}\sum_{m=0}^{n-1}\left(\sum_{r}\lambda_{r}\Pi_{r}\right)^{m}|\ell\rangle\langle k|\left(\sum_{r'}\lambda_{r'}\Pi_{r'}\right)^{n-m-1}\\
 & =-\sum_{n=1}^{\infty}\frac{1}{n!}\sum_{m=0}^{n-1}\left(\sum_{r}\lambda_{r}^{m}\Pi_{r}\right)|\ell\rangle\langle k|\left(\sum_{r'}\lambda_{r'}^{n-m-1}\Pi_{r'}\right)\\
 & =-\sum_{n=1}^{\infty}\frac{1}{n!}\sum_{r,r'}\left(\sum_{m=0}^{n-1}\lambda_{r}^{m}\lambda_{r'}^{n-m-1}\right)\Pi_{r}|\ell\rangle\langle k|\Pi_{r'}\\
 & =-\sum_{n=1}^{\infty}\frac{1}{n!}\left(\sum_{r}n\lambda_{r}^{n-1}\Pi_{r}|\ell\rangle\langle k|\Pi_{r}+\sum_{r,r'}\frac{\lambda_{r}^{n}-\lambda_{r'}^{n}}{\lambda_{r}-\lambda_{r'}}\Pi_{r}|\ell\rangle\langle k|\Pi_{r'}\right)\\
 & =-\left(\sum_{r}\left(\sum_{n=1}^{\infty}\frac{1}{n!}n\lambda_{r}^{n-1}\right)\Pi_{r}|\ell\rangle\langle k|\Pi_{r}+\sum_{r,r'}\frac{\sum_{n=1}^{\infty}\frac{1}{n!}\left(\lambda_{r}^{n}-\lambda_{r'}^{n}\right)}{\lambda_{r}-\lambda_{r'}}\Pi_{r}|\ell\rangle\langle k|\Pi_{r'}\right)\\
 & =-\left(\sum_{r}e^{\lambda_{r}}\Pi_{r}|\ell\rangle\langle k|\Pi_{r}+\sum_{r,r'}\frac{\sum_{n=0}^{\infty}\frac{1}{n!}\left(\lambda_{r}^{n}-\lambda_{r'}^{n}\right)}{\lambda_{r}-\lambda_{r'}}\Pi_{r}|\ell\rangle\langle k|\Pi_{r'}\right)\\
 & =-\left(\sum_{r}e^{\lambda_{r}}\Pi_{r}|\ell\rangle\langle k|\Pi_{r}+\sum_{r,r'}\frac{e^{\lambda_{r}}-e^{\lambda_{r'}}}{\lambda_{r}-\lambda_{r'}}\Pi_{r}|\ell\rangle\langle k|\Pi_{r'}\right)\\
 & =-\int_{0}^{1}ds\,e^{s\left(\ln\sigma_{x}-A\right)}|\ell\rangle\langle k|e^{\left(1-s\right)\left(\ln\sigma_{x}-A\right)}.
\end{align}
The last equality follows from the proof of \cite[Proposition~47]{wilde2025fisher}.
This finally implies that the second partial derivatives are as follows:
\begin{equation}
\frac{\partial}{\partial a_{\ell k}}\frac{\partial}{\partial a_{ji}}\left(1-\Tr\!\left[A\tau\right]-\sum_{x\in\mathcal{X}}\Tr\!\left[e^{\ln\sigma_{x}-A}\right]\right)=-\langle i|\sum_{x\in\mathcal{X}}\int_{0}^{1}ds\,e^{s\left(\ln\sigma_{x}-A\right)}|\ell\rangle\langle k|e^{\left(1-s\right)\left(\ln\sigma_{x}-A\right)}|j\rangle,
\end{equation}
so that the Hessian superoperator for the objective function is given
by
\begin{equation}
\mathcal{H}(\omega)\coloneqq-\sum_{x\in\mathcal{X}}\int_{0}^{1}ds\,e^{s\left(\ln\sigma_{x}-A\right)}\omega e^{\left(1-s\right)\left(\ln\sigma_{x}-A\right)}.\label{eq:Hessian-superoperator}
\end{equation}
Towards establishing strict concavity of the objective function, define
\begin{equation}
K_{x}\equiv\ln\sigma_{x}-A
\end{equation}
and consider that
\begin{align}
\max_{\omega:\left\Vert \omega\right\Vert _{2}=1}\left\langle \omega,\mathcal{H}(\omega)\right\rangle  & =\max_{\omega:\left\Vert \omega\right\Vert _{2}=1}-\left\langle \omega,\sum_{x\in\mathcal{X}}\int_{0}^{1}ds\,e^{sK_{x}}\omega e^{\left(1-s\right)K_{x}}\right\rangle \\
 & =-\min_{\omega:\left\Vert \omega\right\Vert _{2}=1}\sum_{x\in\mathcal{X}}\int_{0}^{1}ds\,\left\langle \omega,e^{sK_{x}}\omega e^{\left(1-s\right)K_{x}}\right\rangle \\
 & =-\min_{\omega:\left\Vert \omega\right\Vert _{2}=1}\sum_{x\in\mathcal{X}}\int_{0}^{1}ds\,\Tr\!\left[\omega^{\dag}e^{sK_{x}}\omega e^{\left(1-s\right)K_{x}}\right]\\
 & =-\min_{\omega:\left\Vert \omega\right\Vert _{2}=1}\sum_{x\in\mathcal{X}}\int_{0}^{1}ds\,\Tr\!\left[e^{\left(1-s\right)K_{x}/2}\omega^{\dag}e^{sK_{x}/2}e^{sK_{x}/2}\omega e^{\left(1-s\right)K_{x}/2}\right]\\
 & =-\min_{\omega:\left\Vert \omega\right\Vert _{2}=1}\sum_{x\in\mathcal{X}}\int_{0}^{1}ds\,\Tr\!\left[\left(e^{sK_{x}/2}\omega e^{\left(1-s\right)K_{x}/2}\right)^{\dag}e^{sK_{x}/2}\omega e^{\left(1-s\right)K_{x}/2}\right]\\
 & =-\min_{\omega:\left\Vert \omega\right\Vert _{2}=1}\sum_{x\in\mathcal{X}}\int_{0}^{1}ds\,\left\Vert e^{sK_{x}/2}\omega e^{\left(1-s\right)K_{x}/2}\right\Vert _{2}^{2}\\
 & \leq0.
\end{align}
The last inequality follows because
\begin{equation}
\sum_{x\in\mathcal{X}}\int_{0}^{1}ds\,\left\Vert e^{sK_{x}/2}\omega e^{\left(1-s\right)K_{x}/2}\right\Vert _{2}^{2}\geq0,
\end{equation}
given that $\left\Vert e^{sK_{x}/2}\omega e^{\left(1-s\right)K_{x}/2}\right\Vert _{2}^{2}\geq0$
for all $x\in\mathcal{X}$ and $s\in\left[0,1\right]$. We can in
fact prove that
\begin{equation}
\max_{\omega:\left\Vert \omega\right\Vert _{2}=1}\left\langle \omega,\mathcal{H}(\omega)\right\rangle <0,
\end{equation}
thus establishing strict concavity, by employing a proof by contradiction.
Suppose that
\begin{equation}
\max_{\omega:\left\Vert \omega\right\Vert _{2}=1}\left\langle \omega,\mathcal{H}(\omega)\right\rangle =0.
\end{equation}
Then this implies that there exists $\omega$ such that $\left\Vert \omega\right\Vert _{2}=1$
and
\begin{equation}
\sum_{x\in\mathcal{X}}\int_{0}^{1}ds\,\left\Vert e^{sK_{x}/2}\omega e^{\left(1-s\right)K_{x}/2}\right\Vert _{2}^{2}=0.
\end{equation}
This further implies that
\begin{equation}
\left\Vert e^{sK_{x}/2}\omega e^{\left(1-s\right)K_{x}/2}\right\Vert _{2}^{2}=0
\end{equation}
for all $s\in\left[0,1\right]$ and $x\in\mathcal{X}$, which in turn
implies that $e^{sK_{x}/2}\omega e^{\left(1-s\right)K_{x}/2}=0$.
Since $e^{sK_{x}/2}$ and $e^{\left(1-s\right)K_{x}/2}$ are invertible
for all $s\in\left[0,1\right]$ and $x\in\mathcal{X}$, we conclude
that $\omega=0$, thus contradicting the assumption that $\left\Vert \omega\right\Vert _{2}=1$.

Finally, from strict concavity, we conclude that the optimal $N_{x}^{\star}$
for all $x\in\mathcal{X}$ in the primal problem is unique and given
by
\begin{equation}
N_{x}^{\star}=e^{\ln\sigma_{x}-A^{\star}}
\end{equation}
where $A^{\star}$ satisfies~\eqref{eq:constraint-optimal-povm}.
The uniqueness property holds from the faithfulness of the generalized
relative entropy in~\eqref{eq:gen-rel-ent}. By the substitution in
\eqref{eq:Nx-sub}, we then conclude that
\begin{equation}
M_{x}^{\star}=\tau^{-\frac{1}{2}}\left(e^{\ln\sigma_{x}-A^{\star}}\right)\tau^{-\frac{1}{2}},
\end{equation}
as claimed.

\section{Local convergence of gradient descent algorithm}

\label{app:Local-convergence}

Here we analyze an upper bound on the maximum eigenvalue of the Hessian
superoperator for~\eqref{eq:dual-objective-strictly-convex}, the
latter given in~\eqref{eq:Hessian-superoperator}. We note here that
an expression for its maximum eigenvalue is given by
\begin{equation}
\max_{\omega:\left\Vert \omega\right\Vert _{2}=1}\sum_{x\in\mathcal{X}}\int_{0}^{1}ds\,\Tr\!\left[\omega^{\dag}e^{s\left(\ln\sigma_{x}-A\right)}\omega e^{\left(1-s\right)\left(\ln\sigma_{x}-A\right)}\right]\leq\Tr\!\left[\sum_{x\in\mathcal{X}}e^{\ln\sigma_{x}-A}\right].\label{eq:upper-bound-hessian-eigenval}
\end{equation}
The upper bound in~\eqref{eq:upper-bound-hessian-eigenval} follows
because
\begin{align}
 & \sum_{x\in\mathcal{X}}\int_{0}^{1}ds\,\Tr\!\left[\omega^{\dag}e^{s\left(\ln\sigma_{x}-A\right)}\omega e^{\left(1-s\right)\left(\ln\sigma_{x}-A\right)}\right]\nonumber \\
 & \leq\sum_{x\in\mathcal{X}}\int_{0}^{1}ds\,\left\Vert \omega^{\dag}\right\Vert \left\Vert e^{s\left(\ln\sigma_{x}-A\right)}\right\Vert _{1/s}\left\Vert \omega\right\Vert \left\Vert e^{\left(1-s\right)\left(\ln\sigma_{x}-A\right)}\right\Vert _{1/(1-s)}\\
 & =\sum_{x\in\mathcal{X}}\int_{0}^{1}ds\,\left(\Tr\!\left[e^{\ln\sigma_{x}-A}\right]\right)^{s}\left(\Tr\!\left[e^{\ln\sigma_{x}-A}\right]\right)^{1-s}\\
 & =\sum_{x\in\mathcal{X}}\int_{0}^{1}ds\,\Tr\!\left[e^{\ln\sigma_{x}-A}\right]\\
 & =\sum_{x\in\mathcal{X}}\Tr\!\left[e^{\ln\sigma_{x}-A}\right]\\
 & =\Tr\!\left[\sum_{x\in\mathcal{X}}e^{\ln\sigma_{x}-A}\right],
\end{align}
where we applied the multivariate H\"older inequality for the inequality
(see, e.g., \cite[Eq.~(8)]{Beigi2013}), along with $\left\Vert \omega^{\dag}\right\Vert =\left\Vert \omega\right\Vert \leq\left\Vert \omega\right\Vert _{2}=1$,
and the fact that each $e^{\ln\sigma_{x}-A}$ is positive semidefinite
for the first equality.

At the optimal $A^{\star}$, we can apply the stationarity condition
in~\eqref{eq:stationarity-cond} to conclude that
\begin{equation}
\Tr\!\left[\sum_{x\in\mathcal{X}}e^{\ln\sigma_{x}-A^{\star}}\right]=\Tr[\tau]=1,
\end{equation}
where we used the fact that $\tau$ is a density operator. Thus,
for gradient descent to converge locally near the optimum $A^{\star}$,
it suffices to pick $\eta<2$ (see \cite[Section~2.1.2]{Nesterov2018}
and \cite[Section~3.2]{Bubeck2015}). Indeed, since the Hessian superoperator
depends continuously on $A$, the above estimate implies that for
every $\varepsilon>0$, there exists a neighborhood of $A^{\star}$
such that in which the Hessian norm is bounded from above by $1+\varepsilon$.
Consequently, the gradient is locally Lipschitz, and standard convergence
results for gradient descent on convex functions functions imply local
convergence whenever $\eta\in\left(0,\frac{2}{1+\varepsilon}\right)$.
Since $\varepsilon$ may be chosen arbitrarily small by shrinking
the neighborhood, an arbitrary fixed step size $\eta\in\left(0,2\right)$
yields local convergence.

\section{Proof of Corollary~\ref{cor:classical-case} (classical case)}

\label{app:classical-case}

Let us now consider the commuting case, i.e., when $\left[\tau,\sigma_{x}\right]=\left[\sigma_{x},\sigma_{x'}\right]=0$
for all $x,x'\in\mathcal{X}$. This means that there is an orthonormal
basis $\left\{ |z\rangle\right\} _{z=0}^{d-1}$ such that
\begin{align}
\tau & =\sum_{z=0}^{d-1}\tau_{z}|z\rangle\!\langle z|,\\
\sigma_{x} & =\sum_{z=0}^{d-1}\sigma_{x,z}|z\rangle\!\langle z|.
\end{align}
Then the optimization in~\eqref{eq:dual-bayesian-min-change} reduces
to
\begin{align}
 & \inf_{A\in\Herm}\left\{ \Tr\!\left[A\tau\right]+\sum_{x\in\mathcal{X}}\Tr\!\left[e^{\ln\sigma_{x}-A}\right]\right\} \nonumber \\
 & =\inf_{A\in\Herm}\left\{ \Tr\!\left[A\sum_{z=0}^{d-1}\tau_{z}|z\rangle\!\langle z|\right]+\sum_{x\in\mathcal{X}}\Tr\!\left[\exp\!\left(\ln\!\left(\sum_{z=0}^{d-1}\sigma_{x,z}|z\rangle\!\langle z|\right)-A\right)\right]\right\} \\
 & =\inf_{A\in\Herm}\left\{ \sum_{z=0}^{d-1}\tau_{z}\langle z|A|z\rangle+\sum_{x\in\mathcal{X}}\Tr\!\left[\exp\!\left(\sum_{z=0}^{d-1}\ln\!\left(\sigma_{x,z}\right)|z\rangle\!\langle z|-A\right)\right]\right\} .
\end{align}
Now, defining the unitaries
\begin{equation}
U_{y}\coloneqq\sum_{z=0}^{d-1}e^{2\pi iyz/d}|z\rangle\!\langle z|,
\end{equation}
and observing that
\begin{equation}
\frac{1}{d}\sum_{y=0}^{d-1}U_{y}\omega U_{y}^{\dag}=\sum_{z=0}^{d-1}|z\rangle\!\langle z|\omega|z\rangle\!\langle z|,
\end{equation}
consider that
\begin{align}
& \Tr\!\left[\exp\left(\sum_{z=0}^{d-1}\ln\!\left(\sigma_{x,z}\right)|z\rangle\!\langle z|-A\right)\right] \notag \\
& =\frac{1}{d}\sum_{y=0}^{d-1}\Tr\!\left[U_{y}\exp\!\left(\sum_{z=0}^{d-1}\ln\!\left(\sigma_{x,z}\right)|z\rangle\!\langle z|-A\right)U_{y}^{\dag}\right]\\
 & =\frac{1}{d}\sum_{y=0}^{d-1}\Tr\!\left[\exp\!\left(U_{y}\left[\sum_{z=0}^{d-1}\ln\!\left(\sigma_{x,z}\right)|z\rangle\!\langle z|-A\right]U_{y}^{\dag}\right)\right]\\
 & =\frac{1}{d}\sum_{y=0}^{d-1}\Tr\!\left[\exp\!\left(\sum_{z=0}^{d-1}\ln\!\left(\sigma_{x,z}\right)|z\rangle\!\langle z|-U_{y}AU_{y}^{\dag}\right)\right]\\
 & \geq\Tr\!\left[\exp\!\left(\sum_{z=0}^{d-1}\ln\!\left(\sigma_{x,z}\right)|z\rangle\!\langle z|-\frac{1}{d}\sum_{y=0}^{d-1}U_{y}AU_{y}^{\dag}\right)\right]\\
 & =\Tr\!\left[\exp\!\left(\sum_{z=0}^{d-1}\ln\!\left(\sigma_{x,z}\right)|z\rangle\!\langle z|-\sum_{z=0}^{d-1}|z\rangle\!\langle z|A|z\rangle\!\langle z|\right)\right]\\
 & =\Tr\!\left[\exp\!\left(\sum_{z=0}^{d-1}\left(\ln\!\left(\sigma_{x,z}\right)-\langle z|A|z\rangle\right)|z\rangle\!\langle z|\right)\right]\\
 & =\Tr\!\left[\sum_{z=0}^{d-1}\exp\!\left(\ln\!\left(\sigma_{x,z}\right)-\langle z|A|z\rangle\right)|z\rangle\!\langle z|\right]\\
 & =\sum_{z=0}^{d-1}\exp\!\left(\ln\!\left(\sigma_{x,z}\right)-\langle z|A|z\rangle\right).
\end{align}
For the inequality, we employed the convexity of the function $B\to\Tr\!\left[e^{B}\right]$
\cite[Theorem~2.10]{Carlen2010}. So this implies that
\begin{align}
 & \inf_{A\in\Herm}\left\{ \Tr\!\left[A\tau\right]+\sum_{x\in\mathcal{X}}\Tr\!\left[e^{\ln\sigma_{x}-A}\right]\right\} \nonumber \\
 & =\inf_{A\in\Herm}\left\{ \sum_{z=0}^{d-1}\tau_{z}\langle z|A|z\rangle+\sum_{x\in\mathcal{X}}\sum_{z=0}^{d-1}\exp\!\left(\ln\!\left(\sigma_{x,z}\right)-\langle z|A|z\rangle\right)\right\} \\
 & =\inf_{A\in\Herm}\left\{ \sum_{z=0}^{d-1}\tau_{z}\langle z|A|z\rangle+\sum_{x\in\mathcal{X}}\exp\!\left(\ln\!\left(\sigma_{x,z}\right)-\langle z|A|z\rangle\right)\right\} \\
 & =\inf_{a_{z}\in\mathbb{R}\,\forall z}\left\{ \sum_{z=0}^{d-1}\tau_{z}a_{z}+\sum_{x\in\mathcal{X}}\exp\!\left(\ln\!\left(\sigma_{x,z}\right)-a_{z}\right)\right\} \\
 & =\inf_{a_{z}\in\mathbb{R}\,\forall z}\left\{ \sum_{z=0}^{d-1}\tau_{z}a_{z}+\sum_{x\in\mathcal{X}}\sigma_{x,z}e^{-a_{z}}\right\} \\
 & =\sum_{z=0}^{d-1}\inf_{a_{z}\in\mathbb{R}}\left\{ \tau_{z}a_{z}+\sum_{x\in\mathcal{X}}\sigma_{x,z}e^{-a_{z}}\right\} .
\end{align}
The optimality equation in~\eqref{eq:constraint-optimal-povm-1} then
becomes
\begin{equation}
\sum_{x\in\mathcal{X}}\sum_{z=0}^{d-1}\sigma_{x,z}e^{-a_{z}}|z\rangle\!\langle z|=\sum_{z=0}^{d-1}\tau_{z}|z\rangle\!\langle z|,
\end{equation}
which implies that
\begin{equation}
\sum_{z=0}^{d-1}\left(\sum_{x\in\mathcal{X}}\sigma_{x,z}\right)e^{-a_{z}}|z\rangle\!\langle z|=\sum_{z=0}^{d-1}\tau_{z}|z\rangle\!\langle z|.
\end{equation}
Thus,
\begin{equation}
a_{z}^{\star}=\ln\!\left(\frac{\sum_{x\in\mathcal{X}}\sigma_{x,z}}{\tau_{z}}\right).
\end{equation}
Then the optimal measurement in~\eqref{eq:optimal-measurement} reduces
to
\begin{align}
M_{x}^{\star} & =\tau^{-\frac{1}{2}}\left(e^{\ln\sigma_{x}-A^{\star}}\right)\tau^{-\frac{1}{2}}\\
 & =\tau^{-1}\left(e^{\ln\sigma_{x}-A^{\star}}\right)\\
 & =\sum_{z}\tau_{z}^{-1}\sigma_{x,z}e^{-a_{z}^{\star}}|z\rangle\!\langle z|\\
 & =\sum_{z}\tau_{z}^{-1}\sigma_{x,z}e^{-\ln\frac{\sum_{x\in\mathcal{X}}\sigma_{x,z}}{\tau_{z}}}|z\rangle\!\langle z|\\
 & =\sum_{z}\frac{\sigma_{x,z}}{\sum_{x\in\mathcal{X}}\sigma_{x,z}}|z\rangle\!\langle z|\\
 & =\sigma_{x}\sigma^{-1},
\end{align}
thus establishing the claim in~\eqref{eq:optimal-meas-classical}.
Additionally, defining
\begin{equation}
\sigma_{z}\coloneqq\sum_{x\in\mathcal{X}}\sigma_{x,z},
\end{equation}
the optimal objective function value is given by
\begin{align}
 & 1-\sum_{z=0}^{d-1}\left(\tau_{z}\ln\!\left(\frac{\sum_{x\in\mathcal{X}}\sigma_{x,z}}{\tau_{z}}\right)+\sum_{x\in\mathcal{X}}\sigma_{x,z}e^{-\ln\frac{\sum_{x\in\mathcal{X}}\sigma_{x,z}}{\tau_{z}}}\right)\nonumber \\
 & =1-\sum_{z=0}^{d-1}\left(\tau_{z}\ln\!\left(\frac{\sum_{x\in\mathcal{X}}\sigma_{x,z}}{\tau_{z}}\right)+\sum_{x\in\mathcal{X}}\frac{\tau_{z}\sigma_{x,z}}{\sum_{x\in\mathcal{X}}\sigma_{x,z}}\right)\\
 & =1-\sum_{z=0}^{d-1}\left(\tau_{z}\ln\!\left(\frac{\sigma_{z}}{\tau_{z}}\right)+\tau_{z}\right)\\
 & =\sum_{z=0}^{d-1}\tau_{z}\ln\!\left(\frac{\tau_{z}}{\sigma_{z}}\right)\\
 & =D(\tau\|\sigma),
\end{align}
thus establishing the claim in~\eqref{eq:dual-bayesian-min-change-classical}.

\section{Proof of Corollary~\ref{cor:fermi-dirac} (binary case)}

\label{app:fermi-dirac}

We now consider a special case of Corollary~\ref{cor:softmax} in
which $\left|\mathcal{X}\right|=2$. We derive it from the beginning,
as it seems simpler to do so. Consider that
\begin{align}
\min_{M_{\mathcal{X}}}D(Q_{\mathrm{rev}}(M_{\mathcal{X}})\|Q_{\mathrm{fwd}}) & =\min_{M:0\leq M\leq I}\left\{ D(M/d\|\sigma_{0})+D(\left(I-M\right)/d\|\sigma_{1})\right\} \\
 & =\min_{X:0\leq X\leq I/d}\left\{ D(X\|\sigma_{0})+D(I/d-X\|\sigma_{1})\right\} .
\end{align}
The function
\begin{equation}
M\mapsto D(M/d\|\sigma_{0})+D(\left(I-M\right)/d\|\sigma_{1})
\end{equation}
is strictly convex over the domain $0\leq M\leq I$ because it can
be rewritten as
\begin{multline}
D(M/d\|\sigma_{0})+D(\left(I-M\right)/d\|\sigma_{1})=\Tr\!\left[M/d\ln\!\left(M/d\right)\right]+\Tr\!\left[\left(I-M\right)/d\ln\!\left(\left(I-M\right)/d\right)\right]\\
+\Tr\!\left[M/d\ln\sigma_{0}\right]+\Tr\!\left[\left(I-M\right)/d\ln\sigma_{1}\right].
\end{multline}
The last two terms are affine in $M$ and the first two are strictly
convex in $M$. Thus, the function has a unique global minimum, which
is determined by the first-order stationarity condition. This stationarity
condition is as follows:
\begin{align}
0 & =\frac{\partial}{\partial X}\left(D(X\|\sigma_{0})+D(I/d-X\|\sigma_{1})\right)\nonumber \\
 & =\frac{\partial}{\partial X}\left(\Tr\!\left[X\ln X\right]-\Tr\!\left[X\ln\sigma_{0}\right]+\Tr\!\left[\left(I/d-X\right)\ln\!\left(I/d-X\right)\right]-\Tr\!\left[\left(I/d-X\right)\ln\sigma_{1}\right]\right)\\
 & =\ln X+I-\ln\sigma_{0}-\ln\left(I/d-X\right)-I+\ln\sigma_{1}\\
 & =\ln X-\ln\left(I/d-X\right)+\ln\sigma_{1}-\ln\sigma_{0},
\end{align}
which implies that
\begin{align}
\ln\sigma_{1}-\ln\sigma_{0} & =\ln\!\left(I/d-X\right)-\ln X\\
 & =\ln\!\left(X^{-1}/d-I\right)\\
\implies\quad e^{\ln\sigma_{1}-\ln\sigma_{0}} & =X^{-1}/d-I\\
\implies\quad e^{\ln\sigma_{1}-\ln\sigma_{0}}+I & =X^{-1}/d\\
\implies\quad\left(e^{\ln\sigma_{1}-\ln\sigma_{0}}+I\right)^{-1} & =Xd.
\end{align}
This finally implies that the unique optimal choice of $M$ is
\begin{equation}
M^{\star}=\left(e^{\ln\sigma_{1}-\ln\sigma_{0}}+I\right)^{-1}.
\end{equation}

\section{Proof of Theorem~\ref{thm:SDP-q-softmax}}

\label{app:softmin-SDP}

Consider that
\begin{align}
 & \min_{M_{\mathcal{X}}}\left\{ \sum_{x\in\mathcal{X}}\Tr\!\left[H_{x}M_{x}\right]-TS\!\left(M_{\mathcal{X}}\right):\Tr\!\left[Q_{i}M_{x}\right]=q_{i,x}\forall i,x\right\} \nonumber \\
 & =\min_{M_{\mathcal{X}}}\left\{ \sum_{x\in\mathcal{X}}\Tr\!\left[H_{x}M_{x}\right]-TS\!\left(M_{\mathcal{X}}\right)+\sup_{\mu_{i,x}\in\mathbb{R}\forall i,x}\left\{ \sum_{i,x}\mu_{i,x}\left(q_{i,x}-\Tr\!\left[Q_{i}M_{x}\right]\right)\right\} \right\} \\
 & =\min_{M_{\mathcal{X}}}\sup_{\mu_{i,x}\in\mathbb{R}\forall i,x}\left\{ \sum_{x\in\mathcal{X}}\Tr\!\left[H_{x}M_{x}\right]-TS\!\left(M_{\mathcal{X}}\right)+\sum_{i,x}\mu_{i,x}\left(q_{i,x}-\Tr\!\left[Q_{i}M_{x}\right]\right)\right\} \\
 & =\min_{M_{\mathcal{X}}}\sup_{\mu_{i,x}\in\mathbb{R}\forall i,x}\left\{ \sum_{i,x}\mu_{i,x}q_{i,x}-TS\!\left(M_{\mathcal{X}}\right)+\sum_{x\in\mathcal{X}}\Tr\!\left[M_{x}\left(H_{x}-\sum_{i}\mu_{i,x}Q_{i}\right)\right]\right\} \\
 & =\sup_{\mu_{i,x}\in\mathbb{R}\forall i,x}\min_{M_{\mathcal{X}}}\left\{ \sum_{i,x}\mu_{i,x}q_{i,x}-TS\!\left(M_{\mathcal{X}}\right)+\sum_{x\in\mathcal{X}}\Tr\!\left[M_{x}\left(H_{x}-\sum_{i}\mu_{i,x}Q_{i}\right)\right]\right\} \\
 & =\sup_{\mu_{i,x}\in\mathbb{R}\forall i,x}\left\{ \sum_{i,x}\mu_{i,x}q_{i,x}+\min_{M_{\mathcal{X}}}\left\{ -TS\!\left(M_{\mathcal{X}}\right)+\sum_{x\in\mathcal{X}}\Tr\!\left[M_{x}\left(H_{x}-\sum_{i}\mu_{i,x}Q_{i}\right)\right]\right\} \right\} .
\end{align}
The penultimate equality follows from the Sion minimax theorem, given
that the objective function is convex in $M_{\mathcal{X}}$, the set
of POVMs is compact, and the objective function is linear in $\mu_{i,x}$.
Defining
\begin{equation}
\mu_{x}\cdot Q\equiv\sum_{i}\mu_{i,x}Q_{i},
\end{equation}
consider that
\begin{align}
 & \min_{M_{\mathcal{X}}}\left\{ -TS\!\left(M_{\mathcal{X}}\right)+\sum_{x\in\mathcal{X}}\Tr\!\left[M_{x}\left(H_{x}-\mu_{x}\cdot Q\right)\right]\right\} \nonumber \\
 & =\min_{M_{\mathcal{X}}}\left\{ -TS\!\left(M_{\mathcal{X}}\right)+Td+\sum_{x\in\mathcal{X}}\Tr\!\left[M_{x}\left(H_{x}-\mu_{x}\cdot Q\right)\right]-T\Tr\!\left[M_{x}\right]\right\} \\
 & =Td+\min_{M_{\mathcal{X}}}\left\{ -TS\!\left(M_{\mathcal{X}}\right)+\sum_{x\in\mathcal{X}}\Tr\!\left[M_{x}\left(H_{x}-\mu_{x}\cdot Q\right)\right]-T\Tr\!\left[M_{x}\right]\right\} \\
 & =Td+\notag \\
 & \min_{M_{x}\geq0\forall x}\left\{ -TS\!\left(M_{\mathcal{X}}\right)+\sum_{x\in\mathcal{X}}\Tr\!\left[M_{x}\left(H_{x}-\mu_{x}\cdot Q\right)\right]-T\Tr\!\left[M_{x}\right]+\sup_{A\in\Herm}\Tr\!\left[A\left(\sum_{x\in\mathcal{X}}M_{x}-I\right)\right]\right\} .
\end{align}
Now consider that
\begin{align}
 & \min_{M_{x}\geq0\forall x}\left\{ -TS\!\left(M_{\mathcal{X}}\right)+\sum_{x\in\mathcal{X}}\Tr\!\left[M_{x}\left(H_{x}-\mu_{x}\cdot Q\right)\right]-T\Tr\!\left[M_{x}\right]+\sup_{A\in\Herm}\Tr\!\left[A\left(\sum_{x\in\mathcal{X}}M_{x}-I\right)\right]\right\} \nonumber \\
 & =\min_{M_{x}\geq0\forall x}\sup_{A\in\Herm}\left\{ -TS\!\left(M_{\mathcal{X}}\right)+\sum_{x\in\mathcal{X}}\Tr\!\left[M_{x}\left(H_{x}-\mu_{x}\cdot Q\right)\right]-T\Tr\!\left[M_{x}\right]+\Tr\!\left[A\left(\sum_{x\in\mathcal{X}}M_{x}-I\right)\right]\right\} \\
 & =\min_{M_{x}\geq0\forall x}\sup_{A\in\Herm}\left\{ -\Tr[A]-TS\!\left(M_{\mathcal{X}}\right)+\sum_{x\in\mathcal{X}}\Tr\!\left[M_{x}\left(H_{x}+A-\mu_{x}\cdot Q\right)\right]-T\Tr\!\left[M_{x}\right]\right\} \\
 & =\sup_{A\in\Herm}\min_{M_{x}\geq0\forall x}\left\{ -\Tr[A]-TS\!\left(M_{\mathcal{X}}\right)+\sum_{x\in\mathcal{X}}\Tr\!\left[M_{x}\left(H_{x}+A-\mu_{x}\cdot Q\right)\right]-T\Tr\!\left[M_{x}\right]\right\} \\
 & =\sup_{A\in\Herm}\left\{ -\Tr[A]+\min_{M_{x}\geq0\forall x}\left\{ -TS\!\left(M_{\mathcal{X}}\right)+\sum_{x\in\mathcal{X}}\Tr\!\left[M_{x}\left(H_{x}+A-\mu_{x}\cdot Q\right)\right]-T\Tr\!\left[M_{x}\right]\right\} \right\} .
\end{align}
For the penultimate equality, the minimax equality follows because
the objective function is convex in $M_{x}$ and linear in $A$. It
is also a consequence of Slater's theorem, which holds because the
primal feasible set contains interior points such as $M_{x}=I/\left|\mathcal{X}\right|$.
Now consider that
\begin{align}
 & -TS\!\left(M_{\mathcal{X}}\right)+\sum_{x\in\mathcal{X}}\Tr\!\left[M_{x}\left(H_{x}+A-\mu_{x}\cdot Q\right)\right]-T\Tr\!\left[M_{x}\right]\nonumber \\
 & =T\left(-S\!\left(M_{\mathcal{X}}\right)-\sum_{x\in\mathcal{X}}\Tr\!\left[M_{x}\left(-\frac{1}{T}\left(H_{x}+A-\mu_{x}\cdot Q\right)\right)\right]-\Tr\!\left[M_{x}\right]\right)\\
 & =T\left(-S\!\left(M_{\mathcal{X}}\right)-\sum_{x\in\mathcal{X}}\Tr\!\left[M_{x}\ln e^{-\frac{1}{T}\left(H_{x}+A-\mu_{x}\cdot Q\right)}\right]-\Tr\!\left[M_{x}\right]\right)\\
 & =T\left(\sum_{x\in\mathcal{X}}D\!\left(M_{x}\middle\|e^{-\frac{1}{T}\left(H_{x}+A-\mu_{x}\cdot Q\right)}\right)-\Tr\!\left[M_{x}\right]\right)\\
 & =T\left(\sum_{x\in\mathcal{X}}\widetilde{D}\!\left(M_{x}\middle\|e^{-\frac{1}{T}\left(H_{x}+A-\mu_{x}\cdot Q\right)}\right)-\Tr\!\left[e^{-\frac{1}{T}\left(H_{x}+A-\sum_{i}\mu_{i,x}Q_{i}\right)}\right]\right)\\
 & =T\left(\sum_{x\in\mathcal{X}}\widetilde{D}\!\left(M_{x}\middle\|e^{-\frac{1}{T}\left(H_{x}+A-\sum_{i}\mu_{i,x}Q_{i}\right)}\right)-\Tr\!\left[e^{-\frac{1}{T}\left(H_{x}+A-\sum_{i}\mu_{i,x}Q_{i}\right)}\right]\right).
\end{align}
Then we find that
\begin{align}
 & \sup_{A\in\Herm}\left\{ -\Tr[A]+\min_{M_{x}\geq0\forall x}\left\{ -TS\!\left(M_{\mathcal{X}}\right)+\sum_{x\in\mathcal{X}}\Tr\!\left[M_{x}\left(H_{x}+A-\mu_{x}\cdot Q\right)\right]-T\Tr\!\left[M_{x}\right]\right\} \right\} \nonumber \\
 & =\sup_{A\in\Herm}\left\{ -\Tr[A]+\min_{M_{x}\geq0\forall x}\left\{ T\left(\sum_{x\in\mathcal{X}}\widetilde{D}\!\left(M_{x}\middle\|e^{-\frac{1}{T}\left(H_{x}+A-\mu_{x}\cdot Q\right)}\right)-\Tr\!\left[e^{-\frac{1}{T}\left(H_{x}+A-\mu_{x}\cdot Q\right)}\right]\right)\right\} \right\} \label{eq:gen-rel-ent-to-min}\\
 & =\sup_{A\in\Herm}\left\{ -\Tr[A]-T\sum_{x\in\mathcal{X}}\Tr\!\left[e^{-\frac{1}{T}\left(H_{x}+A-\mu_{x}\cdot Q\right)}\right]\right\} \\
 & =-\inf_{A\in\Herm}\left\{ \Tr[A]+T\sum_{x\in\mathcal{X}}\Tr\!\left[e^{-\frac{1}{T}\left(H_{x}+A-\mu_{x}\cdot Q\right)}\right]\right\} ,\label{eq:inf-softmax}
\end{align}
so that
\begin{align}
 & \min_{M_{\mathcal{X}}}\left\{ \sum_{x\in\mathcal{X}}\Tr\!\left[H_{x}M_{x}\right]-TS\!\left(M_{\mathcal{X}}\right):\Tr\!\left[Q_{i}M_{x}\right]=q_{i,x}\forall i,x\right\} \nonumber \\
 & =\sup_{\mu_{i,x}\in\mathbb{R}\forall i,x}\left\{ \sum_{i,x}\mu_{i,x}q_{i,x}+Td-\inf_{A\in\Herm}\left\{ \Tr[A]+T\sum_{x\in\mathcal{X}}\Tr\!\left[e^{-\frac{1}{T}\left(H_{x}+A-\mu_{x}\cdot Q\right)}\right]\right\} \right\} \\
 & =Td+\sup_{\mu_{i,x}\in\mathbb{R}\forall i,x}\left\{ \sum_{i,x}\mu_{i,x}q_{i,x}-\inf_{A\in\Herm}\left\{ \Tr[A]+T\sum_{x\in\mathcal{X}}\Tr\!\left[e^{-\frac{1}{T}\left(H_{x}+A-\mu_{x}\cdot Q\right)}\right]\right\} \right\} .
\end{align}
A similar analysis, as in~\eqref{eq:stationarity-cond}, of the stationarity
condition for~\eqref{eq:inf-softmax} allows us to conclude that an
optimal $A^{\star}$ satisfies
\begin{equation}
\sum_{x\in\mathcal{X}}e^{-\frac{1}{T}\left(H_{x}+A^{\star}-\mu_{x}\cdot Q\right)}=I.
\end{equation}
Furthermore, based on the faithfulness of the generalized relative
entropy in~\eqref{eq:gen-rel-ent-to-min}, an optimal measurement
operator $M_{x}^{\star}$ is as follows:
\begin{equation}
M_{x}^{\star}=e^{-\frac{1}{T}\left(H_{x}+A^{\star}-\mu_{x}\cdot Q\right)}.
\end{equation}

\section{Proof of Theorem~\ref{thm:additivity}}

\label{app:additivity}

By choosing the measurement $M_{\mathcal{X}_{1}\times\mathcal{X}_{2}}$
to be a product measurement with measurement operators of the form
$M_{x_{1}}^{(1)}\otimes M_{x_{2}}^{(2)}$, it follows that $Q_{\mathrm{rev}}(\tau_{1}\otimes\tau_{2},M_{\mathcal{X}_{1}\times\mathcal{X}_{2}})$
is a tensor product of $Q_{\mathrm{rev}}(\tau_{1},M_{\mathcal{X}_{1}})$
and $Q_{\mathrm{rev}}(\tau_{2},M_{\mathcal{X}_{2}})$:
\begin{equation}
Q_{\mathrm{rev}}(\tau_{1}\otimes\tau_{2},M_{\mathcal{X}_{1}\times\mathcal{X}_{2}})\simeq Q_{\mathrm{rev}}(\tau_{1},M_{\mathcal{X}_{1}})\otimes Q_{\mathrm{rev}}(\tau_{2},M_{\mathcal{X}_{2}}),
\end{equation}
where the $\simeq$ symbol indicates equality up to permutation of
systems. Given that $Q_{\mathrm{fwd}}(p_{X_{1}}\otimes p_{X_{2}},\mathcal{E}_{1}\otimes\mathcal{E}_{2})$
is a tensor product of $Q_{\mathrm{fwd}}(p_{X_{1}},\mathcal{E}_{1})$
and $Q_{\mathrm{fwd}}(p_{X_{2}},\mathcal{E}_{2})$, i.e.,
\begin{equation}
Q_{\mathrm{fwd}}(p_{X_{1}}\otimes p_{X_{2}},\mathcal{E}_{1}\otimes\mathcal{E}_{2})\simeq Q_{\mathrm{fwd}}(p_{X_{1}},\mathcal{E}_{1})\otimes Q_{\mathrm{fwd}}(p_{X_{2}},\mathcal{E}_{2}),
\end{equation}
we apply additivity of quantum relative entropy for tensor products
to conclude that
\begin{equation}
D(Q_{\mathrm{rev}}(\tau_{1}\otimes\tau_{2},M_{\mathcal{X}_{1}\times\mathcal{X}_{2}})\|Q_{\mathrm{fwd}}(p_{X_{1}}\otimes p_{X_{2}},\mathcal{E}_{1}\otimes\mathcal{E}_{2}))=\sum_{i=1}^{2}D(Q_{\mathrm{rev}}(\tau_{i},M_{\mathcal{X}_{i}})\|Q_{\mathrm{fwd}}(p_{X_{i}},\mathcal{E}_{i}))
\end{equation}
in this case. Since the choice of $M_{\mathcal{X}_{1}}$ and $M_{\mathcal{X}_{2}}$
is arbitrary, we conclude that
\begin{multline}
\min_{M_{\mathcal{X}_{1}\times\mathcal{X}_{2}}}D(Q_{\mathrm{rev}}(\tau_{1}\otimes\tau_{2},M_{\mathcal{X}_{1}\times\mathcal{X}_{2}})\|Q_{\mathrm{fwd}}(p_{X_{1}}\otimes p_{X_{2}},\mathcal{E}_{1}\otimes\mathcal{E}_{2}))\\
\leq\sum_{i=1}^{2}\min_{M_{\mathcal{X}_{i}}}D(Q_{\mathrm{rev}}(\tau_{i},M_{\mathcal{X}_{i}})\|Q_{\mathrm{fwd}}(p_{X_{i}},\mathcal{E}_{i})),\label{eq:subadd-proof}
\end{multline}
as the optimization on the left-hand side is over all possible joint
measurements.

To see that the inequality holds in the opposite direction, let us
recall that the quantum relative entropy is superadditive in the following
sense (see, e.g., \cite[Proposition~2]{Capel2018Superadditivity}):
\begin{equation}
D(\xi_{AB}\|\gamma_{A}\otimes\zeta_{B})\geq D(\xi_{A}\|\gamma_{A})+D(\xi_{B}\|\zeta_{B}),\label{eq:superadd-rel-ent}
\end{equation}
where $\xi_{AB}$, $\gamma_{A}$, and $\zeta_{B}$ are states, $\xi_{A}\coloneqq\Tr_{B}[\xi_{AB}]$,
and $\xi_{B}\coloneqq\Tr_{A}[\xi_{AB}]$. Now picking a general measurement
$M_{\mathcal{X}_{1}\times\mathcal{X}_{2}}$, we apply~\eqref{eq:superadd-rel-ent}
to conclude that
\begin{align}
 & D(Q_{\mathrm{rev}}(\tau_{1}\otimes\tau_{2},M_{\mathcal{X}_{1}\times\mathcal{X}_{2}})\|Q_{\mathrm{fwd}}(p_{X_{1}}\otimes p_{X_{2}},\mathcal{E}_{1}\otimes\mathcal{E}_{2}))\nonumber \\
 & =D(Q_{\mathrm{rev}}(\tau_{1}\otimes\tau_{2},M_{\mathcal{X}_{1}\times\mathcal{X}_{2}})\|Q_{\mathrm{fwd}}(p_{X_{1}},\mathcal{E}_{1})\otimes Q_{\mathrm{fwd}}(p_{X_{2}},\mathcal{E}_{2}))\label{eq:superadd-1}\\
 & \geq D(\Tr_{2}\!\left[Q_{\mathrm{rev}}(\tau_{1}\otimes\tau_{2},M_{\mathcal{X}_{1}\times\mathcal{X}_{2}})\right]\|Q_{\mathrm{fwd}}(p_{X_{1}},\mathcal{E}_{1}))+D(\Tr_{1}\!\left[Q_{\mathrm{rev}}(\tau_{1}\otimes\tau_{2},M_{\mathcal{X}_{1}\times\mathcal{X}_{2}})\right]\|Q_{\mathrm{fwd}}(p_{X_{2}},\mathcal{E}_{2}))\\
 & =D(Q_{\mathrm{rev}}(\tau_{1},N_{\mathcal{X}_{1}}^{\left(1\right)})\|Q_{\mathrm{fwd}}(p_{X_{1}},\mathcal{E}_{1}))+D(Q_{\mathrm{rev}}(\tau_{2},N_{\mathcal{X}_{2}}^{\left(2\right)})\|Q_{\mathrm{fwd}}(p_{X_{2}},\mathcal{E}_{2}))\\
 & \geq\sum_{i=1}^{2}\min_{M_{\mathcal{X}_{i}}}D(Q_{\mathrm{rev}}(\tau_{i},M_{\mathcal{X}_{i}})\|Q_{\mathrm{fwd}}(p_{X_{i}},\mathcal{E}_{i})).\label{eq:superadd-last}
\end{align}
The last equality follows because
\begin{align}
\Tr_{2}\!\left[Q_{\mathrm{rev}}(\tau_{1}\otimes\tau_{2},M_{\mathcal{X}_{1}\times\mathcal{X}_{2}})\right] & =\Tr_{2}\!\left[\sum_{\substack{x_{1}\in\mathcal{X}_{1},\\
x_{2}\in\mathcal{X}_{2}
}
}\left(\tau_{1}\otimes\tau_{2}\right)^{\frac{1}{2}}M_{x_{1}x_{2}}\left(\tau_{1}\otimes\tau_{2}\right)^{\frac{1}{2}}\otimes|x_{1}\rangle\!\langle x_{1}|\otimes|x_{2}\rangle\!\langle x_{2}|\right]\\
 & =\sum_{\substack{x_{1}\in\mathcal{X}_{1},\\
x_{2}\in\mathcal{X}_{2}
}
}\Tr_{2}\left[\left(\tau_{1}\otimes\tau_{2}\right)^{\frac{1}{2}}M_{x_{1}x_{2}}\left(\tau_{1}\otimes\tau_{2}\right)^{\frac{1}{2}}\right]\otimes|x_{1}\rangle\!\langle x_{1}|\\
 & =\sum_{x_{1}\in\mathcal{X}_{1}}\tau_{1}^{\frac{1}{2}}\Tr_{2}\left[\left(I\otimes\tau_{2}^{\frac{1}{2}}\right)\sum_{x_{2}\in\mathcal{X}_{2}}M_{x_{1}x_{2}}\left(I\otimes\tau_{2}^{\frac{1}{2}}\right)\right]\tau_{1}^{\frac{1}{2}}\otimes|x_{1}\rangle\!\langle x_{1}|\\
 & =\sum_{x_{1}\in\mathcal{X}_{1}}\tau_{1}^{\frac{1}{2}}N_{x_{1}}^{(1)}\tau_{1}^{\frac{1}{2}}\otimes|x_{1}\rangle\!\langle x_{1}|,
\end{align}
where
\begin{equation}
N_{x_{1}}^{(1)}\coloneqq\Tr_{2}\left[\left(I\otimes\tau_{2}\right)\sum_{x_{2}\in\mathcal{X}_{2}}M_{x_{1}x_{2}}\right].
\end{equation}
Thus, $\left(N_{x_{1}}^{(1)}\right)_{x_{1}\in\mathcal{X}_{1}}$ is
a particular measurement because $N_{x_{1}}^{(1)}\geq0$ for all $x_{1}\in\mathcal{X}_{1}$
and
\begin{align}
\sum_{x_{1}\in\mathcal{X}_{1}}N_{x_{1}}^{(1)} & =\sum_{x_{1}\in\mathcal{X}_{1}}\Tr_{2}\left[\left(I\otimes\tau_{2}\right)\sum_{x_{2}\in\mathcal{X}_{2}}M_{x_{1}x_{2}}\right]\\
 & =\Tr_{2}\left[\left(I\otimes\tau_{2}\right)\sum_{x_{1}\in\mathcal{X}_{1}}\sum_{x_{2}\in\mathcal{X}_{2}}M_{x_{1}x_{2}}\right]\\
 & =\Tr_{2}\left[\left(I\otimes\tau_{2}\right)\right]\\
 & =I.
\end{align}
Similarly, 
\begin{align}
\Tr_{1}\!\left[Q_{\mathrm{rev}}(\tau_{1}\otimes\tau_{2},M_{\mathcal{X}_{1}\times\mathcal{X}_{2}})\right] & =\Tr_{1}\!\left[\sum_{\substack{x_{1}\in\mathcal{X}_{1},\\
x_{2}\in\mathcal{X}_{2}
}
}\left(\tau_{1}\otimes\tau_{2}\right)^{\frac{1}{2}}M_{x_{1}x_{2}}\left(\tau_{1}\otimes\tau_{2}\right)^{\frac{1}{2}}\otimes|x_{1}\rangle\!\langle x_{1}|\otimes|x_{2}\rangle\!\langle x_{2}|\right]\\
 & =\sum_{\substack{x_{1}\in\mathcal{X}_{1},\\
x_{2}\in\mathcal{X}_{2}
}
}\Tr_{1}\!\left[\left(\tau_{1}\otimes\tau_{2}\right)^{\frac{1}{2}}M_{x_{1}x_{2}}\left(\tau_{1}\otimes\tau_{2}\right)^{\frac{1}{2}}\right]\otimes|x_{2}\rangle\!\langle x_{2}|\\
 & =\sum_{x_{2}\in\mathcal{X}_{2}}\tau_{2}^{\frac{1}{2}}\Tr_{1}\!\left[\left(\tau_{1}\otimes I\right)^{\frac{1}{2}}\sum_{x_{1}\in\mathcal{X}_{1}}M_{x_{1}x_{2}}\left(\tau_{1}\otimes I\right)^{\frac{1}{2}}\right]\tau_{2}^{\frac{1}{2}}\otimes|x_{2}\rangle\!\langle x_{2}|\\
 & =\sum_{x_{2}\in\mathcal{X}_{2}}\tau_{2}^{\frac{1}{2}}N_{x_{2}}^{(2)}\tau_{2}^{\frac{1}{2}}\otimes|x_{2}\rangle\!\langle x_{2}|,
\end{align}
where
\begin{equation}
N_{x_{2}}^{(2)}\coloneqq\Tr_{1}\!\left[\left(\tau_{1}\otimes I\right)\sum_{x_{1}\in\mathcal{X}_{1}}M_{x_{1}x_{2}}\right].
\end{equation}
Thus, $\left(N_{x_{2}}^{(2)}\right)_{x_{2}\in\mathcal{X}_{2}}$ is
a particular measurement because $N_{x_{2}}^{(2)}\geq0$ for all $x_{2}\in\mathcal{X}_{2}$
and
\begin{align}
\sum_{x_{2}\in\mathcal{X}_{2}}N_{x_{2}}^{(2)} & =\sum_{x_{2}\in\mathcal{X}_{2}}\Tr_{1}\!\left[\left(\tau_{1}\otimes I\right)\sum_{x_{1}\in\mathcal{X}_{1}}M_{x_{1}x_{2}}\right]\\
 & =\Tr_{1}\!\left[\left(\tau_{1}\otimes I\right)\sum_{x_{1}\in\mathcal{X}_{1}}\sum_{x_{2}\in\mathcal{X}_{2}}M_{x_{1}x_{2}}\right]\\
 & =\Tr_{1}\left[\left(\tau_{1}\otimes I\right)\right]\\
 & =I.
\end{align}
Since the inequalities in~\eqref{eq:superadd-1}--\eqref{eq:superadd-last}
hold for every possible joint measurement, we conclude that
\begin{multline}
\min_{M_{\mathcal{X}_{1}\times\mathcal{X}_{2}}}D(Q_{\mathrm{rev}}(\tau_{1}\otimes\tau_{2},M_{\mathcal{X}_{1}\times\mathcal{X}_{2}})\|Q_{\mathrm{fwd}}(p_{X_{1}}\otimes p_{X_{2}},\mathcal{E}_{1}\otimes\mathcal{E}_{2}))\\
\geq\sum_{i=1}^{2}\min_{M_{\mathcal{X}_{i}}}D(Q_{\mathrm{rev}}(\tau_{i},M_{\mathcal{X}_{i}})\|Q_{\mathrm{fwd}}(p_{X_{i}},\mathcal{E}_{i})).\label{eq:superadd-proof}
\end{multline}
Combining~\eqref{eq:subadd-proof} and~\eqref{eq:superadd-proof}
concludes the proof.

\section{Proofs for Section~\ref{subsec:Fermi=002013Dirac-thermal-measurements-QHT}}

\label{app:Proofs-FD-QHT}

In this appendix, we provide several lemmas and proofs needed for
or claimed in Section~\ref{subsec:Fermi=002013Dirac-thermal-measurements-QHT}.

\subsection{Proof of Equation~\eqref{eq:FD-err-prob-tanh}}

Consider that
\begin{align}
\left(e^{-A}+I\right)^{-1} & =\frac{1}{2}\left(I+\tanh\!\left(A/2\right)\right),\\
\left(e^{A}+I\right)^{-1} & =I-\left(e^{-A}+I\right)^{-1}\\
 & =I-\frac{1}{2}\left(I+\tanh\!\left(A/2\right)\right)\\
 & =\frac{1}{2}\left(I-\tanh\!\left(A/2\right)\right),
\end{align}
implying that
\begin{align}
 & \Tr\!\left[\left(e^{-A}+I\right)^{-1}\sigma_{0}\right]+\Tr\!\left[\left(e^{A}+I\right)^{-1}\sigma_{1}\right]\nonumber \\
 & =\Tr\!\left[\frac{1}{2}\left(I+\tanh\!\left(A/2\right)\right)\sigma_{0}\right]+\Tr\!\left[\frac{1}{2}\left(I-\tanh\!\left(A/2\right)\right)\sigma_{1}\right]\\
 & =\frac{1}{2}-\frac{1}{2}\Tr\!\left[\tanh\!\left(A/2\right)\left(\sigma_{1}-\sigma_{0}\right)\right]\\
 & =\frac{1}{2}-\frac{1}{2}\Tr\!\left[\tanh\!\left(A/2\right)\Delta\right],
\end{align}
thus concluding the proof of Eq.~\eqref{eq:FD-err-prob-tanh}.

\subsection{Supplementary lemma}
\begin{lem}
\label{lem:FD-scalar-bnd-s}The following inequality holds for all
$s\in\left[0,1\right]$:
\begin{equation}
(1+e^{x})^{-1}\leq e^{-h(s)}e^{-sx},
\end{equation}
where the binary entropy is defined as $h(s)\coloneqq-s\ln s-\left(1-s\right)\ln\!\left(1-s\right)$.
\end{lem}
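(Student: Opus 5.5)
The claim is a scalar inequality. I would prove it by multiplying through, so that it becomes a weighted AM--GM inequality. Since both sides are positive, the statement $(1+e^{x})^{-1}\leq e^{-h(s)}e^{-sx}$ is equivalent to
\begin{equation}
e^{h(s)}e^{sx}\leq 1+e^{x}\qquad\forall x\in\mathbb{R}.
\end{equation}

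First I would dispose of the endpoints $s\in\{0,1\}$. With the convention $0\ln 0=0$ we have $h(0)=h(1)=0$. The two required inequalities are then $1\leq 1+e^{x}$ and $e^{x}\leq 1+e^{x}$, and both are trivial.

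For $s\in(0,1)$, I would write
\begin{equation}
1+e^{x}=(1-s)\cdot\frac{1}{1-s}+s\cdot\frac{e^{x}}{s}.
\end{equation}
This is a convex combination, with weights $1-s$ and $s$, of the positive numbers $a=\frac{1}{1-s}$ and $b=\frac{e^{x}}{s}$. The weighted AM--GM inequality (equivalently, concavity of $\ln$) gives
\begin{equation}
(1-s)a+sb\geq a^{1-s}b^{s}=(1-s)^{-(1-s)}s^{-s}e^{sx}.
\end{equation}
The prefactor on the right equals $e^{-(1-s)\ln(1-s)-s\ln s}=e^{h(s)}$, which is exactly the bound needed.

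As an alternative check, one can minimize $x\mapsto e^{-sx}(1+e^{x})$ directly. Its logarithm is convex in $x$, and the minimum is attained at $e^{x}=s/(1-s)$, where the value is $s^{-s}(1-s)^{-(1-s)}$. This confirms that the constant $e^{-h(s)}$ is tight.

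There is no real obstacle in this argument. The only care needed is at the endpoints $s=0,1$, where $h$ must be interpreted by continuity, and at the step of recognizing the AM--GM weights $1/(1-s)$ and $e^{x}/s$.
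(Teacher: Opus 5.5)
Your proof is correct, but it takes a different route from the paper. The paper uses calculus: it maximizes $g(x,s)=e^{sx}(1+e^{x})^{-1}$ over $x\in\mathbb{R}$ and finds the unique stationary point $x=\ln\!\left(\frac{s}{1-s}\right)$. It certifies this point as the global maximizer by checking that $g\to0$ as $x\to\pm\infty$, and then evaluates $g$ there to get $s^{s}(1-s)^{1-s}=e^{-h(s)}$.

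Your weighted AM--GM splitting $1+e^{x}=(1-s)\cdot\frac{1}{1-s}+s\cdot\frac{e^{x}}{s}$ gives the bound in one line. It needs no stationary-point analysis or boundary-limit argument to establish global optimality. Tightness also comes for free, since the AM--GM equality condition $\frac{1}{1-s}=\frac{e^{x}}{s}$ gives exactly the paper's maximizer $e^{x}=\frac{s}{1-s}$.

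You also handle $s\in\{0,1\}$ explicitly, which the paper's argument leaves implicit. Its stationary point is undefined at the endpoints, and its limit computation is stated only for $s\in(0,1)$; at $s=0,1$ the supremum equals $1$ but is not attained.

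The paper's approach has the advantage of being constructive: it derives the optimal constant without knowing it in advance, whereas your AM--GM weights $\frac{1}{1-s}$ and $\frac{e^{x}}{s}$ are chosen with the answer in mind. Your secondary check, via convexity of $x\mapsto\ln(1+e^{x})-sx$, is essentially the paper's argument in cleaner form.
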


\begin{proof}
We would like the following inequality to hold for all $x\in\mathbb{R}$,
for some function $f(s)$:
\begin{equation}
(1+e^{x})^{-1}\leq f(s)e^{-sx}.
\end{equation}
This is equivalent to
\begin{equation}
e^{sx}(1+e^{x})^{-1}\leq f(s).
\end{equation}
Thus, we can find $f(s)$ if we can evaluate the following:
\begin{equation}
\sup_{x\in\mathbb{R}}g(x,s),
\end{equation}
where
\begin{equation}
g(x,s)\coloneqq e^{sx}(1+e^{x})^{-1}.
\end{equation}
Then consider that
\begin{align}
\frac{\partial}{\partial x}g(x,s) & =se^{sx}(1+e^{x})^{-1}-e^{sx}(1+e^{x})^{-2}e^{x}\\
 & =e^{sx}(1+e^{x})^{-2}\left(s(1+e^{x})-e^{x}\right).
\end{align}
Now we set this derivative equal to zero and solve for $x$:
\begin{align}
0 & =\frac{\partial}{\partial x}g(x,s)\\
\Longleftrightarrow\qquad0 & =e^{sx}(1+e^{x})^{-2}\left(s(1+e^{x})-e^{x}\right)\\
\Longleftrightarrow\qquad0 & =s(1+e^{x})-e^{x}\\
\Longleftrightarrow\qquad\frac{e^{x}}{1+e^{x}} & =s\\
\Longleftrightarrow\qquad\frac{1}{e^{-x}+1} & =s\\
\Longleftrightarrow\qquad\frac{1}{s} & =e^{-x}+1\\
\Longleftrightarrow\qquad\frac{1}{s}-1 & =e^{-x}\\
\Longleftrightarrow\qquad\ln\!\left(\frac{1-s}{s}\right) & =-x\\
\Longleftrightarrow\qquad\ln\!\left(\frac{s}{1-s}\right) & =x.
\end{align}
We have found the unique stationary point. Now observe that, for $s\in\left(0,1\right)$,
\begin{align}
\lim_{x\to-\infty}g(x,s) & =\lim_{x\to-\infty}\frac{e^{sx}}{1+e^{x}}=0,\\
\lim_{x\to\infty}g(x,s) & =\lim_{x\to\infty}\frac{e^{sx}}{1+e^{x}}=\lim_{x\to\infty}e^{-\left(1-s\right)x}=0.
\end{align}
Since $g(x,s)$ is continuous on $\mathbb{R}$, tends to zero at both
ends of the real line, and has a unique stationary point, this stationary
point must be the global maximizer.

Plugging this value of $x$ into $g(x,s)$, we find that
\begin{align}
e^{sx}(1+e^{x})^{-1} & =e^{s\ln\left(\frac{s}{1-s}\right)}\left(1+e^{\ln\left(\frac{s}{1-s}\right)}\right)^{-1}\\
 & =\left(\frac{s}{1-s}\right)^{s}\left(1+\frac{s}{1-s}\right)^{-1}\\
 & =\left(\frac{s}{1-s}\right)^{s}\left(1-s\right)\\
 & =s^{s}\left(1-s\right)^{1-s}\\
 & =e^{-h(s)}.
\end{align}
This concludes the proof.
\end{proof}

\subsection{Proof of Lemma~\ref{lem:usual-chernoff-to-modified}}

\label{app:usual-chernoff-to-modified}

Consider that, for $\alpha\in\left(0,1\right)$, by applying \cite[Lemma~3]{Berta2017},
\begin{align}
\Tr\!\left[\rho_{0}^{\alpha}\rho_{1}^{1-\alpha}\right] & \leq\inf_{\omega>0}\left(\Tr\!\left[\omega\rho_{0}\right]\right)^{\alpha}\left(\Tr\!\left[\omega^{\frac{\alpha}{\alpha-1}}\rho_{1}\right]\right)^{1-\alpha}\label{eq:petz-renyi-measured}\\
 & \leq\left(\Tr\!\left[e^{\left(1-s\right)\left(\ln\rho_{1}-\ln\rho_{0}\right)}\rho_{0}\right]\right)^{\alpha}\left(\Tr\!\left[\left(e^{\left(1-s\right)\left(\ln\rho_{1}-\ln\rho_{0}\right)}\right)^{\frac{\alpha}{\alpha-1}}\rho_{1}\right]\right)^{1-\alpha},
\end{align}
where the last inequality follows by picking $\omega=e^{\left(1-s\right)\left(\ln\rho_{1}-\ln\rho_{0}\right)}$.
Now pick $\alpha=s$. This implies that
\begin{align}
\Tr\!\left[\rho_{0}^{s}\rho_{1}^{1-s}\right] & \leq\left(\Tr\!\left[e^{\left(1-s\right)\left(\ln\rho_{1}-\ln\rho_{0}\right)}\rho_{0}\right]\right)^{s}\left(\Tr\!\left[\left(e^{\left(1-s\right)\left(\ln\rho_{1}-\ln\rho_{0}\right)}\right)^{\frac{s}{s-1}}\rho_{1}\right]\right)^{1-s}\\
 & =\left(\Tr\!\left[e^{\left(1-s\right)\left(\ln\rho_{1}-\ln\rho_{0}\right)}\rho_{0}\right]\right)^{s}\left(\Tr\!\left[e^{s\left(\ln\rho_{0}-\ln\rho_{1}\right)}\rho_{1}\right]\right)^{1-s}.
\end{align}
We can rewrite this as
\begin{align}
C_{s}(\rho_{0}\|\rho_{1}) & =-\ln\Tr\!\left[\rho_{0}^{s}\rho_{1}^{1-s}\right]\\
 & \geq sC_{s}^{\natural}(\rho_{0}\|\rho_{1})+\left(1-s\right)C_{1-s}^{\natural}(\rho_{1}\|\rho_{0})\\
 & \geq\min\left\{ C_{s}^{\natural}(\rho_{0}\|\rho_{1}),C_{1-s}^{\natural}(\rho_{1}\|\rho_{0})\right\} ,
\end{align}
thus establishing~\eqref{eq:usual-chernoff-to-modified}.

The statement about strict inequality follows because the inequality
in~\eqref{eq:petz-renyi-measured} is strict if the states are positive
definite and do not commute. This follows as a direct consequence
of Lemma~\ref{lem:strict-ineq-petz-measured} below and the variational
characterization of the measured R\'enyi relative entropy from \cite[Lemma~3 \& Theorem~4]{Berta2017}.
To prepare for this, let us recall the Petz--R\'enyi relative entropy
\cite{Petz1985,Petz1986a}, the sandwiched R\'enyi relative entropy
\cite{MuellerLennert2013,Wilde2014}, and the measured R\'enyi relative
entropy~\cite{Fuchs1996}:
\begin{align}
D_{\alpha}(\rho\|\sigma) & \coloneqq\frac{1}{\alpha-1}\ln\Tr\!\left[\rho^{\alpha}\sigma^{1-\alpha}\right],\\
\widetilde{D}_{\alpha}(\rho\|\sigma) & \coloneqq\frac{1}{\alpha-1}\ln\Tr\!\left[\left(\sigma^{\frac{1-\alpha}{2\alpha}}\rho\sigma^{\frac{1-\alpha}{2\alpha}}\right)^{\alpha}\right],\\
D_{\alpha}^{M}(\rho\|\sigma) & \coloneqq\sup_{\mathcal{X},M_{\mathcal{X}}}\frac{1}{\alpha-1}\ln\sum_{x\in\mathcal{X}}\left(\Tr\!\left[M_{x}\rho\right]\right)^{\alpha}\left(\Tr\!\left[M_{x}\sigma\right]\right)^{1-\alpha}.
\end{align}
For $\alpha\in\left[1/2,1\right)$, these are related as follows:
\begin{equation}
D_{\alpha}^{M}(\rho\|\sigma)\leq\widetilde{D}_{\alpha}(\rho\|\sigma)\leq D_{\alpha}(\rho\|\sigma),\label{eq:meas-sand-petz}
\end{equation}
where the first inequality follows from data-processing~\cite{Frank2013}
and the second from \cite[Lemma~3]{Datta2014Limit}.
\begin{rem}
The statement about strict inequality in the following lemma is a
consequence of \cite[Theorem~4.18]{Hiai2017Different} or \cite[Remark~III.12]{Hiai2023TestMeasured}.
We state it here and provide a proof for convenience.
\end{rem}

\begin{lem}
\label{lem:strict-ineq-petz-measured}For all $\alpha\in\left(0,1\right)$,
the following inequality holds
\begin{equation}
D_{\alpha}(\rho\|\sigma)\geq D_{\alpha}^{M}(\rho\|\sigma).
\end{equation}
For positive definite $\rho$ and $\sigma$, the inequality is saturated
if and only if $\left[\rho,\sigma\right]=0$.
\end{lem}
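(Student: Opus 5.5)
The inequality $D_{\alpha}(\rho\|\sigma)\geq D_{\alpha}^{M}(\rho\|\sigma)$ for $\alpha\in(0,1)$ follows from data processing. For every measurement $M_{\mathcal{X}}$, the Petz--R\'enyi relative entropy satisfies the data-processing inequality under the measurement channel $\mathcal{M}$ in~\eqref{eq:meas-channel-gen-1}, since $\alpha\in(0,1)\subset[0,2]$. Its value on the resulting commuting (classical) outputs is exactly $\frac{1}{\alpha-1}\ln\sum_{x}(\Tr[M_x\rho])^{\alpha}(\Tr[M_x\sigma])^{1-\alpha}$. Taking the supremum over $\mathcal{X}$ and $M_{\mathcal{X}}$ gives the claim. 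For $\alpha\in[1/2,1)$ one could instead cite~\eqref{eq:meas-sand-petz}, but data processing covers the whole range at once.

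The saturation direction ``commuting $\Rightarrow$ equality'' is immediate. If $[\rho,\sigma]=0$, measure in a common eigenbasis $\{|z\rangle\}$. The classical sum then reproduces $\Tr[\rho^{\alpha}\sigma^{1-\alpha}]$, so the supremum is attained.

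For the converse, assume $\rho,\sigma>0$ and $D_{\alpha}(\rho\|\sigma)=D_{\alpha}^{M}(\rho\|\sigma)$; the aim is to show $[\rho,\sigma]=0$. I will work with the variational formula of \cite[Lemma~3 \& Theorem~4]{Berta2017}, already used in~\eqref{eq:petz-renyi-measured}:
\[
Q^{M}_{\alpha}(\rho\|\sigma)=\inf_{\omega>0}\left(\Tr[\omega\rho]\right)^{\alpha}\left(\Tr[\omega^{\frac{\alpha}{\alpha-1}}\sigma]\right)^{1-\alpha}.
\]
It states that the optimal measurement can be taken projective, in the eigenbasis of an optimizing $\omega$, and that the infimum is attained when $\rho,\sigma>0$ (by compactness after normalization, since the objective diverges as $\omega$ degenerates). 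Let $P=\{|\phi_j\rangle\!\langle\phi_j|\}$ be such an optimal rank-one projective measurement, and let $\mathcal{P}$ be the associated pinching/dephasing channel. Equality then reads $D_{\alpha}(\rho\|\sigma)=D_{\alpha}(\mathcal{P}(\rho)\|\mathcal{P}(\sigma))$, because the classical outputs coincide with the diagonal states.

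The key step, and the expected main obstacle, is the equality case of data processing for the Petz--R\'enyi divergence with $\alpha\in(0,1)$. I plan to use the known characterization (Petz; Hiai--Mosonyi--Petz--B\'eny): equality in data processing under a channel $\mathcal{N}$, for $\alpha\in(0,2)$ and full-rank states, holds iff $\mathcal{N}$ is sufficient. That is, the Petz recovery map $\mathcal{R}_{\sigma,\mathcal{N}}$ satisfies $\mathcal{R}(\mathcal{N}(\rho))=\rho$. Applied to $\mathcal{N}=\mathcal{P}$, sufficiency means $\rho$ and $\sigma$ are both recovered from their diagonal parts by a single channel. Since $\mathcal{R}$ maps the commutative algebra of diagonal matrices into operators that commute with each other (the image of a commuting pair under $\mathcal{R}=\sigma^{1/2}\mathcal{P}(\mathcal{P}(\sigma)^{-1/2}\cdot\mathcal{P}(\sigma)^{-1/2})\sigma^{1/2}$ is computed explicitly), one concludes $[\rho,\sigma]=0$. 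More concretely, sufficiency gives $\sigma^{it}\rho\sigma^{-it}$ expressed through diagonal data, which forces $\rho=\sigma^{1/2}\mathcal{P}(\sigma)^{-1/2}\mathcal{P}(\rho)\mathcal{P}(\sigma)^{-1/2}\sigma^{1/2}$. Combining this with $\rho^{it}\sigma^{-it}=\mathcal{P}(\rho)^{it}\mathcal{P}(\sigma)^{-it}$, which is diagonal, yields commutation.

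If citing the sufficiency theorem feels too heavy, my fallback is a direct argument. Write the Petz quantity as $\Tr[\rho^{\alpha}\sigma^{1-\alpha}]$ and use strict operator concavity of $t\mapsto t^{\alpha}$ together with Lieb concavity of $(\rho,\sigma)\mapsto\Tr[\rho^{\alpha}\sigma^{1-\alpha}]$. Here $\mathcal{P}(\cdot)=\frac{1}{d}\sum_{y}U_y(\cdot)U_y^{\dag}$ is an average of unitary conjugations, as in Appendix~\ref{app:classical-case}. Equality in Lieb's concavity under such averaging then forces the unitarily rotated pairs to share structure, hence each of $\rho$ and $\sigma$ is diagonal in $\{|\phi_j\rangle\}$, and therefore they commute.
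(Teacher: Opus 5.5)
Your reduction is sound up to the sufficiency step. With $\rho,\sigma>0$ the variational optimum is attained, and measuring in the eigenbasis of the optimal $\omega$ is optimal. So saturation becomes equality in data processing under the pinching $\mathcal{P}$, and the Hiai--Mosonyi--Petz--B\'eny theorem gives sufficiency of $\mathcal{P}$.

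\textbf{The gap} is the final deduction that sufficiency of $\mathcal{P}$ forces $[\rho,\sigma]=0$. The reason you give is false. For diagonal $X$ one has $\mathcal{R}_{\sigma,\mathcal{P}}(X)=\sigma^{1/2}\mathcal{P}(\sigma)^{-1}X\sigma^{1/2}$, and $\mathcal{R}(\mathcal{P}(\sigma))=\sigma$. Hence the images of two commuting diagonal operators commute only if the diagonal operator $\mathcal{P}(\sigma)^{-1}X$ commutes with $\sigma$, which is exactly what has to be proved. Your ``more concretely'' sentence only asserts that combining the recovery identity with $\rho^{it}\sigma^{-it}=\mathcal{P}(\rho)^{it}\mathcal{P}(\sigma)^{-it}$ yields commutation; it gives no argument. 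The fallback does not work either. You never state an equality condition for Lieb concavity. Its intermediate claim, that $\rho$ and $\sigma$ must be diagonal in the optimal basis, is false: for $\rho=\sigma$ every basis is optimal.

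\textbf{How to close it.} Differentiate Petz's condition at $t=0$ to get $\ln\rho=\ln\sigma+\ln\Lambda$, where $\Lambda\coloneqq\mathcal{P}(\sigma)^{-1}\mathcal{P}(\rho)$ is diagonal. Then
$1=\Tr[e^{\ln\sigma+\ln\Lambda}]\le\Tr[\sigma\Lambda]=\Tr[\mathcal{P}(\sigma)\Lambda]=\Tr[\mathcal{P}(\rho)]=1$.
So Golden--Thompson is saturated, which forces $[\sigma,\Lambda]=0$, and hence $\rho=\sigma\Lambda$ commutes with $\sigma$.

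Alternatively, start from your recovery identity $\rho=\sigma^{1/2}\Lambda\sigma^{1/2}$ together with $\langle\phi_j|\rho|\phi_j\rangle=\lambda_j\langle\phi_j|\sigma|\phi_j\rangle$. These give $\sum_k|\langle\phi_j|\sigma^{1/2}|\phi_k\rangle|^2(\lambda_k-\lambda_j)=0$ for every $j$. A maximum-principle argument then shows $\Lambda$ is constant on the blocks of $\sigma^{1/2}$, so again $[\rho,\sigma]=0$.

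\textbf{Comparison with the paper.} With this repair your route is valid and genuinely different. It covers all $\alpha\in(0,1)$ at once, but it leans on attainment of the measured optimum and on sufficiency theory. The paper avoids both:
\begin{itemize}
\item For $\alpha\in[1/2,1)$, the equality case of Araki--Lieb--Thirring gives $D_\alpha>\widetilde{D}_\alpha\ge D_\alpha^M$ for noncommuting states.
\item For $\alpha\in(0,1/2]$, it uses the symmetry $\alpha\leftrightarrow1-\alpha$ with $\rho\leftrightarrow\sigma$, valid for both the Petz and the measured quantities.
\end{itemize}
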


\begin{proof}
The inequality is a direct consequence of the data-processing inequality
for the Petz--R\'enyi relative entropy~\cite{Petz1985,Petz1986a}.
If the states commute, then a measurement in the common eigenbasis
of the states achieves equality.

It thus remains to prove that the inequality is strict if the states
do not commute. For $\alpha\in\left[1/2,1\right)$, it follows as
a direct consequence of the equality conditions~\cite{Hiai1994Equality}
for the Araki--Lieb--Thirring inequality that
\begin{equation}
D_{\alpha}(\rho\|\sigma)>\widetilde{D}_{\alpha}(\rho\|\sigma)
\end{equation}
in this case. By applying~\eqref{eq:meas-sand-petz}, this implies
that
\begin{equation}
D_{\alpha}(\rho\|\sigma)>D_{\alpha}^{M}(\rho\|\sigma)
\end{equation}
for all $\alpha\in\left[1/2,1\right)$ and positive definite states
$\rho$ and $\sigma$. We can then write this as
\begin{align}
\frac{1}{\alpha-1}\ln\Tr\!\left[\rho^{\alpha}\sigma^{1-\alpha}\right] & >\sup_{\mathcal{X},M_{\mathcal{X}}}\frac{1}{\alpha-1}\ln\sum_{x\in\mathcal{X}}\left(\Tr\!\left[M_{x}\rho\right]\right)^{\alpha}\left(\Tr\!\left[M_{x}\sigma\right]\right)^{1-\alpha}\\
 & =\frac{1}{\alpha-1}\ln\inf_{\mathcal{X},M_{\mathcal{X}}}\sum_{x\in\mathcal{X}}\left(\Tr\!\left[M_{x}\rho\right]\right)^{\alpha}\left(\Tr\!\left[M_{x}\sigma\right]\right)^{1-\alpha},
\end{align}
which in turn implies that
\begin{equation}
\Tr\!\left[\rho^{\alpha}\sigma^{1-\alpha}\right]<\inf_{\mathcal{X},M_{\mathcal{X}}}\sum_{x\in\mathcal{X}}\left(\Tr\!\left[M_{x}\rho\right]\right)^{\alpha}\left(\Tr\!\left[M_{x}\sigma\right]\right)^{1-\alpha}.
\end{equation}
However, since this holds for all positive definite states, we can
exchange $\rho$ and $\sigma$ to conclude that
\begin{equation}
\Tr\!\left[\sigma^{\alpha}\rho^{1-\alpha}\right]<\inf_{\mathcal{X},M_{\mathcal{X}}}\sum_{x\in\mathcal{X}}\left(\Tr\!\left[M_{x}\sigma\right]\right)^{\alpha}\left(\Tr\!\left[M_{x}\rho\right]\right)^{1-\alpha}.
\end{equation}
We can then work backwards to conclude that
\begin{equation}
D_{1-\alpha}(\rho\|\sigma)>D_{1-\alpha}^{M}(\rho\|\sigma).
\end{equation}
So this establishes the strict inequality for all $\alpha\in\left(0,1/2\right]$,
thus concluding the proof.
\end{proof}

\subsection{Proof of Lemma~\ref{lem:one-shot-FD-err-bnd}}

Consider that
\begin{align}
 & \Tr\!\left[\left(e^{-A}+I\right)^{-1}\sigma_{0}\right]+\Tr\!\left[\left(e^{A}+I\right)^{-1}\sigma_{1}\right]\nonumber \\
 & \leq s^{s}\left(1-s\right)^{1-s}\Tr\!\left[e^{\left(1-s\right)A}\sigma_{0}\right]+s^{s}\left(1-s\right)^{1-s}\Tr\!\left[e^{-sA}\sigma_{1}\right]\\
 & =s^{s}\left(1-s\right)^{1-s}\left(\Tr\!\left[e^{\left(1-s\right)\left(\ln\sigma_{1}-\ln\sigma_{0}\right)}\sigma_{0}\right]+\Tr\!\left[e^{-s\left(\ln\sigma_{1}-\ln\sigma_{0}\right)}\sigma_{1}\right]\right)\\
 & =s^{s}\left(1-s\right)^{1-s}p_{0}^{s}p_{1}^{1-s}\left(\Tr\!\left[e^{\left(1-s\right)\left(\ln\rho_{1}-\ln\rho_{0}\right)}\rho_{0}\right]+\Tr\!\left[e^{-s\left(\ln\rho_{1}-\ln\rho_{0}\right)}\rho_{1}\right]\right)\\
 & =g(s,p_{0})\left(\Tr\!\left[e^{\left(1-s\right)\left(\ln\rho_{1}-\ln\rho_{0}\right)}\rho_{0}\right]+\Tr\!\left[e^{s\left(\ln\rho_{0}-\ln\rho_{1}\right)}\rho_{1}\right]\right)\label{eq:big-step-err-bnd-FD-meas}\\
 & =g(s,p_{0})\left(e^{-C_{s}^{\natural}(\rho_{0}\|\rho_{1})}+e^{-C_{1-s}^{\natural}(\rho_{1}\|\rho_{0})}\right)\\
 & \leq2g(s,p_{0})e^{-\min\left\{ C_{s}^{\natural}(\rho_{0}\|\rho_{1}),C_{1-s}^{\natural}(\rho_{1}\|\rho_{0})\right\} }.
\end{align}
The first inequality follows from Lemma~\ref{lem:FD-scalar-bnd-s},
using $1-s$ for the first term and $s$ for the second term.

\subsection{Proof of Proposition~\ref{prop:FD-asymptotic-QHT}}

\label{app:FD-asymptotic-QHT}
\begin{proof}[Proof of Proposition~\ref{prop:FD-asymptotic-QHT}]
Consider that
\begin{align}
 & \Tr\!\left[\left(e^{-A^{(n)}}+I\right)^{-1}\left(p_{0}\rho_{0}^{\otimes n}\right)\right]+\Tr\!\left[\left(e^{A^{(n)}}+I\right)^{-1}\left(p_{1}\rho_{1}^{\otimes n}\right)\right]\nonumber \\
 & \leq2g(s,p_{0})e^{-\min\left\{ C_{s}^{\natural}(\rho_{0}^{\otimes n}\|\rho_{1}^{\otimes n}),C_{1-s}^{\natural}(\rho_{1}^{\otimes n}\|\rho_{0}^{\otimes n})\right\} }\\
 & =2g(s,p_{0})e^{-\min\left\{ nC_{s}^{\natural}(\rho_{0}\|\rho_{1}),nC_{1-s}^{\natural}(\rho_{1}\|\rho_{0})\right\} }\\
 & =2g(s,p_{0})e^{-n\min\left\{ C_{s}^{\natural}(\rho_{0}\|\rho_{1}),C_{1-s}^{\natural}(\rho_{1}\|\rho_{0})\right\} },
\end{align}
where the first inequality follows from Lemma~\ref{lem:one-shot-FD-err-bnd}
and the first equality from Lemma~\ref{lem:additivity-unopt-chernoff}.
\end{proof}
\begin{lem}
\label{lem:additivity-unopt-chernoff}For all $s\in\left[0,1\right]$,
the quantity $C_{s}^{\natural}(\rho_{0}\|\rho_{1})$ is additive in
the following sense:
\begin{equation}
C_{s}^{\natural}(\tau_{0}\otimes\omega_{0}\|\tau_{1}\otimes\omega_{1})=C_{s}^{\natural}(\tau_{0}\|\tau_{1})+C_{s}^{\natural}(\omega_{0}\|\omega_{1}),
\end{equation}
where $\tau_{0}$, $\omega_{0}$, $\tau_{1}$, and $\omega_{1}$ are
positive definite states.
\end{lem}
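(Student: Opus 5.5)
The plan is to compute directly from the definition~\eqref{eq:unopt-unusal-chernoff}, using the fact that logarithms of tensor products of positive definite operators split additively. First, since $\tau_{0},\tau_{1},\omega_{0},\omega_{1}$ are positive definite, I will invoke the identity
\begin{equation}
\ln(\tau_{j}\otimes\omega_{j})=\ln\tau_{j}\otimes I+I\otimes\ln\omega_{j},\qquad j\in\{0,1\},
\end{equation}
which follows by applying the spectral theorem in a product eigenbasis: the eigenvalues of $\tau_{j}\otimes\omega_{j}$ are products of eigenvalues, and the logarithm turns these into sums.

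Subtracting the two identities gives
\begin{equation}
\ln(\tau_{1}\otimes\omega_{1})-\ln(\tau_{0}\otimes\omega_{0})=K_{\tau}\otimes I+I\otimes K_{\omega},
\end{equation}
where $K_{\tau}\coloneqq\ln\tau_{1}-\ln\tau_{0}$ and $K_{\omega}\coloneqq\ln\omega_{1}-\ln\omega_{0}$. The two summands $K_{\tau}\otimes I$ and $I\otimes K_{\omega}$ commute, so the exponential factorizes:
\begin{equation}
e^{(1-s)(K_{\tau}\otimes I+I\otimes K_{\omega})}=e^{(1-s)K_{\tau}}\otimes e^{(1-s)K_{\omega}}.
\end{equation}

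Multiplying by $\tau_{0}\otimes\omega_{0}$ and using $\Tr[X\otimes Y]=\Tr[X]\Tr[Y]$, the trace becomes
\begin{equation}
\Tr\!\left[e^{(1-s)K_{\tau}}\tau_{0}\right]\Tr\!\left[e^{(1-s)K_{\omega}}\omega_{0}\right].
\end{equation}
Both factors are strictly positive, since each is the trace of a product of two positive definite operators. Taking $-\ln$ therefore yields the claimed sum. The same computation works for every $s\in[0,1]$, including the endpoints, where both sides equal zero because of normalization.

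No step here is a real obstacle. The only point that needs care is justifying that the logarithm of a tensor product splits additively and that the exponential of a sum of commuting operators factorizes. Both follow from the functional calculus on a common eigenbasis, and both rely on the positive definiteness assumption so that the logarithms are well defined.
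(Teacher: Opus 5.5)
Your proof is correct and is essentially the paper's own argument: you split $\ln$ of the tensor products additively, factor the exponential of the commuting summands $K_\tau\otimes I$ and $I\otimes K_\omega$, and factor the trace. One aside is wrong but harmless: at $s=0$ the trace $\Tr[e^{\ln\rho_1-\ln\rho_0}\rho_0]$ is generally not $1$ (by Golden--Thompson it is $\geq \Tr[\rho_1]=1$, strictly so for noncommuting states), so only the $s=1$ endpoint gives zero, and your general computation already covers $s=0$ without needing that remark.
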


\begin{proof}
Consider that
\begin{align}
C_{s}^{\natural}(\tau_{0}\otimes\omega_{0}\|\tau_{1}\otimes\omega_{1}) & =-\ln\Tr\!\left[e^{\left(1-s\right)\left(\ln\left(\tau_{1}\otimes\omega_{1}\right)-\ln\left(\tau_{0}\otimes\omega_{0}\right)\right)}\left(\tau_{0}\otimes\omega_{0}\right)\right]\\
 & =-\ln\Tr\!\left[e^{\left(1-s\right)\left(\left(\ln\tau_{1}-\ln\tau_{0}\right)\otimes I+I\otimes\left(\ln\omega_{1}-\ln\omega_{0}\right)\right)}\left(\tau_{0}\otimes\omega_{0}\right)\right]\\
 & =-\ln\Tr\!\left[\left(e^{\left(1-s\right)\left(\ln\tau_{1}-\ln\tau_{0}\right)}\otimes e^{\left(1-s\right)\left(\ln\omega_{1}-\ln\omega_{0}\right)}\right)\left(\tau_{0}\otimes\omega_{0}\right)\right]\\
 & =-\ln\Tr\!\left[e^{\left(1-s\right)\left(\ln\tau_{1}-\ln\tau_{0}\right)}\tau_{0}\otimes e^{\left(1-s\right)\left(\ln\omega_{1}-\ln\omega_{0}\right)}\omega_{0}\right]\\
 & =-\ln\left(\Tr\!\left[e^{\left(1-s\right)\left(\ln\tau_{1}-\ln\tau_{0}\right)}\tau_{0}\right]\Tr\!\left[e^{\left(1-s\right)\left(\ln\omega_{1}-\ln\omega_{0}\right)}\omega_{0}\right]\right)\\
 & =C_{s}^{\natural}(\tau_{0}\|\tau_{1})+C_{s}^{\natural}(\omega_{0}\|\omega_{1}),
\end{align}
thus concluding the proof.
\end{proof}

\subsection{Implementing Fermi--Dirac thermal measurement for i.i.d.~states}

\label{app:Implementing-Fermi=002013Dirac-thermal-iid}

Here we elaborate on a method for realizing the Fermi--Dirac thermal
measurement in~\eqref{eq:n-fold-FD-meas} and~\eqref{eq:additive-form-A-n}
by means of a product measurement followed by classical postprocessing.

Set
\begin{equation}
H\coloneqq\ln\rho_{1}-\ln\rho_{0},
\end{equation}
and let a spectral decomposition of it be as follows:
\begin{equation}
H=\sum_{j}h_{j}|\phi_{j}\rangle\!\langle\phi_{j}|.
\end{equation}
Then
\begin{align}
 & \ln\rho_{1}^{\otimes n}-\ln\rho_{0}^{\otimes n}\nonumber \\
 & =\left(\ln\rho_{1}-\ln\rho_{0}\right)\otimes I^{\otimes n-1}+I\otimes\left(\ln\rho_{1}-\ln\rho_{0}\right)\otimes I^{\otimes n-2}+\cdots+I^{\otimes n-1}\otimes\left(\ln\rho_{1}-\ln\rho_{0}\right)\\
 & =H\otimes I^{\otimes n-1}+I\otimes H\otimes I^{\otimes n-2}+\cdots+I^{\otimes n-1}\otimes H\\
 & \eqqcolon\sum_{i=1}^{n}H^{\left(i\right)}.
\end{align}
Then an eigenbasis for $\ln\rho_{1}^{\otimes n}-\ln\rho_{0}^{\otimes n}$
is $\left\{ |\phi_{j^{n}}\rangle\right\} _{j^{n}}$, where
\begin{equation}
|\phi_{j^{n}}\rangle\coloneqq|\phi_{j_{1}}\rangle\otimes\cdots\otimes|\phi_{j_{n}}\rangle,
\end{equation}
and eigenvector $|\phi_{j^{n}}\rangle$ has eigenvalue $\sum_{i=1}^{n}h_{j_{i}}$.
This follows because
\begin{align}
\sum_{i=1}^{n}H^{\left(i\right)}|\phi_{j^{n}}\rangle & =\sum_{i=1}^{n}H^{\left(i\right)}|\phi_{j_{1}}\rangle\otimes\cdots\otimes|\phi_{j_{n}}\rangle\\
 & =\sum_{i=1}^{n}h_{j_{i}}|\phi_{j_{1}}\rangle\otimes\cdots\otimes|\phi_{j_{n}}\rangle.
\end{align}
Thus,
\begin{equation}
\sum_{i=1}^{n}H^{\left(i\right)}=\sum_{j^{n}}\left(\sum_{i=1}^{n}h_{j_{i}}\right)|\phi_{j_{1}}\rangle\!\langle\phi_{j_{1}}|\otimes\cdots\otimes|\phi_{j_{n}}\rangle\!\langle\phi_{j_{n}}|.
\end{equation}

We conclude that $A^{(n)}$ from~\eqref{eq:additive-form-A-n} can
be written as
\begin{equation}
A^{(n)}=\sum_{j^{n}}\left[\ln\!\left(\frac{p_{1}}{p_{0}}\right)+\sum_{i=1}^{n}h_{j_{i}}\right]|\phi_{j_{1}}\rangle\!\langle\phi_{j_{1}}|\otimes\cdots\otimes|\phi_{j_{n}}\rangle\!\langle\phi_{j_{n}}|,
\end{equation}
and thus
\begin{align}
\left(e^{A^{(n)}}+I^{\otimes n}\right)^{-1} & =\sum_{j^{n}}\left(e^{\ln\left(\frac{p_{1}}{p_{0}}\right)+\sum_{i=1}^{n}h_{j_{i}}}+1\right)^{-1}|\phi_{j_{1}}\rangle\!\langle\phi_{j_{1}}|\otimes\cdots\otimes|\phi_{j_{n}}\rangle\!\langle\phi_{j_{n}}|,\\
\left(e^{-A^{(n)}}+I^{\otimes n}\right)^{-1} & =\sum_{j^{n}}\left(e^{-\left(\ln\left(\frac{p_{1}}{p_{0}}\right)+\sum_{i=1}^{n}h_{j_{i}}\right)}+1\right)^{-1}|\phi_{j_{1}}\rangle\!\langle\phi_{j_{1}}|\otimes\cdots\otimes|\phi_{j_{n}}\rangle\!\langle\phi_{j_{n}}|.
\end{align}

Given this structure, a method for implementing the Fermi--Dirac
thermal measurement in~\eqref{eq:n-fold-FD-meas} consists of
\begin{enumerate}
\item For $i\in\left[n\right]$, measure the $i$th system in the eigenbasis
$\left\{ |\phi_{j}\rangle\right\} _{j}$ of $H$ and record the outcome
as $h_{j_{i}}$.
\item Compute $a(j^{n})\equiv\ln\!\left(\frac{p_{1}}{p_{0}}\right)+\sum_{i=1}^{n}h_{j_{i}}$.
\item Output ``0'' (i.e., the state is $\rho_{0}$) with probability $\left(e^{a(j^{n})}+1\right)^{-1}$,
and output ``1'' (i.e., the state is $\rho_{1}$) with probability
$\left(e^{-a(j^{n})}+1\right)^{-1}$.
\end{enumerate}
The first step above implements a product measurement, and the last
two steps realize classical postprocessing. Under this scheme, for
an arbitrary incoming state $\omega^{(n)}$, the probability of outputting
``0'' is given by
\begin{equation}
\sum_{j^{n}}p(0|j^{n})p(j^{n}),
\end{equation}
where $p(j^{n})$ is the probability to observe the measurement outcome
sequence $j^{n}$ and $p(0|j^{n})$ is the probability of outputting
``0'' after observing $j^{n}$. Given that
\begin{align}
p(j^{n}) & =\Tr\!\left[\left(|\phi_{j_{1}}\rangle\!\langle\phi_{j_{1}}|\otimes\cdots\otimes|\phi_{j_{n}}\rangle\!\langle\phi_{j_{n}}|\right)\omega^{(n)}\right],\\
p(0|j^{n}) & =\left(e^{\ln\left(\frac{p_{1}}{p_{0}}\right)+\sum_{i=1}^{n}h_{j_{i}}}+1\right)^{-1},
\end{align}
we conclude that
\begin{equation}
\sum_{j^{n}}p(0|j^{n})p(j^{n})=\Tr\!\left[\left(e^{A^{(n)}}+I^{\otimes n}\right)^{-1}\omega^{(n)}\right],
\end{equation}
so that the above scheme indeed implements the Fermi--Dirac thermal
measurement. The calculation for the probability of outputting ``1''
is similar.
\end{document}